\newcolumntype{M}[1]{>{\centering\arraybackslash}m{#1}}
\newcolumntype{N}{@{}m{0pt}@{}}
\newcolumntype{P}[1]{>{\centering\arraybackslash}p{#1}}
\newtheorem{definition}{Definition}[section]
\newtheorem{lemma}[definition]{Lemma}
\newtheorem{corollary}[definition]{Corollary}
\newcommand{\ignore}[1]{}
\newcommand{\eq}[1]{Eq.\thinspace{}(\ref{#1})}
\newsavebox{\@brx}
\newcommand{\llangle}[1][]{\savebox{\@brx}{\(\m@th{#1\langle}\)}%
  \mathopen{\copy\@brx\kern-0.5\wd\@brx\usebox{\@brx}}}
\newcommand{\rrangle}[1][]{\savebox{\@brx}{\(\m@th{#1\rangle}\)}%
  \mathclose{\copy\@brx\kern-0.5\wd\@brx\usebox{\@brx}}}
\def\beq{\begin{equation}}
\def\eeq{\end{equation}}
\def\bea{\begin{eqnarray}}
\def\eea{\end{eqnarray}}
\begin{document}
\title{Many body localization transition with correlated disorder}

\author{Zhengyan Darius Shi}
\affiliation{Department of Physics, Massachusetts Institute of Technology, Cambridge, MA 02139, USA}
\author{Vedika Khemani}
\affiliation{Department of Physics, Stanford University, Stanford, CA 94305, USA}
\author{Romain Vasseur}
\affiliation{Department of Physics, University of Massachusetts, Amherst, MA 01003, USA}
\author{Sarang Gopalakrishnan}
\affiliation{Department of Physics, The Pennsylvania State University, University Park, PA 16802, USA}

\begin{abstract}

We address the critical properties of the many-body localization (MBL) phase transition in one-dimensional systems subject to spatially correlated disorder. We consider a general family of disorder models, parameterized by how strong the fluctuations of the disordered couplings are when coarse-grained over a region of size $\ell$. For uncorrelated randomness, the characteristic scale for these fluctuations is $\sqrt{\ell}$; more generally they scale as $\ell^\gamma$. We discuss both positively correlated disorder ($1/2 < \gamma < 1$) and anticorrelated, or ``hyperuniform,'' disorder ($\gamma < 1/2$). We argue that anticorrelations in the disorder are generally irrelevant at the MBL transition. Moreover, assuming the MBL transition is described by the recently developed renormalization-group scheme of Morningstar \emph{et al.} [Phys. Rev. B 102, 125134, (2020)], we argue that even positively correlated disorder leaves the critical theory unchanged, although it modifies certain properties of the many-body localized phase. 

\end{abstract}

\maketitle

\section{Introduction}\label{sec:intro}

In generic quantum many-body systems, interactions scramble local quantum information and bring subsystems towards thermal equilibrium~\cite{Deutsch_1991, Srednicki_1994,doi:10.1080/00018732.2016.1198134}. This ``thermalization'' process fails in the many body localized (MBL) phase~\cite{Nandkishore_Huse_2015,Abanin_Altman_Bloch_Serbyn_2019}. The MBL phase is best understood for systems subject to strong spatial randomness, but is also believed to occur for quasiperiodic potentials~\cite{PhysRevB.87.134202, PhysRevLett.119.075702}. Systems deep in the MBL phase possess an extensive set of emergent local integrals of motion, ``l-bits", which leads to area law entanglement in \textit{all eigenstates} as well as Poisson statistics in the nearest-neighbor energy spacing distribution \cite{Imbrie_2016,Oganesyan-Huse2007, Pal_Huse_2010,Huse_Nandkishore_Oganesyan_2014,Serbyn_Papic_Abanin_2013}. This is to be contrasted with the thermal phase that features volume-law entanglement and local energy level repulsion conforming to random matrix theory.

While the physical picture deep inside each phase is relatively well-understood, the transition between them remains mysterious. Analytically, since the transition involves singular changes in the structure of highly excited eigenstates, critical properties cannot be extracted from a conventional low energy effective field theory approach\footnote{The MBL-thermal transition is accompanied by a transition between Poisson and Wigner-Dyson local spectral statistics.  Aspects of this spectral transition can be understood from a novel effective field theory discussed for example in Ref.~\onlinecite{Garratt_Chalker_2020} and Ref.~\onlinecite{Micklitz_Altland_2017}. Whether or not these effective field theories can describe critical scalings near the transition remains an open question.}. Furthermore, the lack of exactly solvable models makes it difficult to pinpoint key ingredients that would go into a general conceptual framework. Numerically, state-of-the-art exact-diagonalization is limited to small system sizes $L \lesssim 25$ (see e.g. Refs.~\onlinecite{luitz2015many,  clarkbimodal,khemani2017critical, Panda_2020, Sierant_Lewenstein_Zakrzewski_2020}) and produces a correlation length exponent $\nu$ that strongly violates the rigorous Harris bound $\nu \geq 2$ in one dimension\cite{Harris_1974,Chayes_Chayes_Fisher_Spencer_1986, CLO}, making any attempt to extrapolate the critical scaling difficult in the near future. 

Recently, significant progress has been made by means of approximate or phenomenological real-space renormalization-group (RG) approaches~\cite{Vosk_Huse_Altman_2015,Potter_Vasseur_Parameswaran_2015,Zhang_Zhao_Devakul_Huse_2016,Dumitrescu_Vasseur_Potter_2017, thiery2017microscopically, Goremykina_Vasseur_Serbyn_2019,Morningstar_Huse_2019}.
In strongly disordered systems, one should in general consider how entire probability distributions of couplings flow as the system is coarse-grained. This kind of flow fits into the general framework of \textit{strong disorder renormalization group (SDRG)} invented by Dasgupta-Hu-Ma in Ref.~\onlinecite{Ma_Dasgupta_Hu_1979}, rigorously developed by Fisher in Ref.~\onlinecite{Fisher_1992, Fisher_1995} and subsequently applied to numerous examples (see Refs.~\onlinecite{Igloi_Monthus_2005, Igloi_Monthus_2018} for a review). On the analytic side, the main appeal of SDRG is that the flow equations sometimes admit exact solutions, giving solvable models that are unavailable at the microscopic level; on the numerical side, the computational complexity of SDRGs is exponentially lower than that of exact diagonalization, allowing simulations with $\mathcal{O}(10^7)$ initial degrees of freedom and thus providing a better chance of accessing the critical scaling.

Note that these RGs only attempt to describe the {\em asymptotic} MBL transition at the largest length and timescales, while the finite-size or finite-time MBL crossover observed in numerics (see Ref.~\onlinecite{2021arXiv211111455S} for recent results with correlated disorder) and experiments is believed to be described by ``many-body resonances" involving rare superpositions of localized localized states that differ substantially in extensively many local regions ~\cite{crowley2021constructive,morningstar2021avalanches,PhysRevB.104.184203}. We will also assume the existence of the MBL phase~\cite{Basko_Aleiner_Altshuler_2006,Imbrie_2016}; see {\it e.g.} Refs.~\onlinecite{PhysRevE.102.062144,PhysRevE.104.054105,ABANIN2021168415,Panda_2020} for recent discussions.

Roughly, the existing RG schemes fall into two types. The first type starts with microscopic l-bits in the MBL phase which can delocalize due to rare resonances mediated by interactions~\cite{Potter_Vasseur_Parameswaran_2015,Dumitrescu_Vasseur_Potter_2017, thiery2017microscopically}. At strong disorder, resonant clusters are isolated and localization is robust. But as the disorder is reduced, resonances proliferate and tend to span the entire system, leading to thermalization. 

Unfortunately, due to the complexity of the cluster formation rules, no analytic solution has been possible within the first type of RGs, thus preventing a complete understanding of the critical scaling. For this reason, we focus instead on the second type of RG scheme which aims for analytic tractability at the expense of further coarse-graining. The basic strategy is to forget about individual spins and regard the system as composed of alternating thermal (T) and insulating (I) blocks initially decoupled from each other~\cite{Vosk_Huse_Altman_2015}. Disorder in the microscopic couplings then translates to disorder in a few important parameters that characterize each block---the physical length, the localization length of any putative l-bits it contains, the rate of entanglement growth within each block etc.~\cite{Vosk_Huse_Altman_2015}. When interactions between blocks are turned on, individual blocks merge into larger and larger composite blocks whose phase (T or I) and parameters are determined iteratively in terms of the parameters of their constituents. 
In the simplest block RG of this kind, there is a chain of blocks indexed by $i$, where even/odd $i$'s correspond to T/I-blocks respectively. Each block is characterized only by its physical length $l^{T/I}_i$ and at each RG step, the shortest block gets absorbed into the surrounding blocks following a ``symmetric" RG rule $l^{T/I}_{\rm new} = l^{T/I}_{i-1} + l^{I/T}_i + l^{T/I}_{i+1}$\cite{Zhang_Zhao_Devakul_Huse_2016}. While completely solvable, this RG missed the important asymmetry between T and I-blocks implied by the avalanche mechanism (a detailed review of avalanche will be given in Sec.~\ref{subsec:avalanche})\cite{DeRoeck_Huveneers_2017}. A followup paper incorporated this asymmetry and obtained a family of RGs, $l^{T/I}_{\rm new} = l^{T/I}_{i-1} + \alpha_{I/T} l^{I/T}_i + l^{T/I}_{i+1}$, controlled by the parameter $\alpha_{I/T}$\cite{Goremykina_Vasseur_Serbyn_2019}. In the maximally asymmetric limit where $\alpha_I \rightarrow \infty$ and $\alpha_T \rightarrow 0$ (henceforth referred to as GVS), the important RG directions form a two-dimensional subspace in which a critical curve separates the thermal and insulating phases. The correlation length $\xi$ is related to the distance $\delta$ from the fixed line via $\xi \sim e^{\frac{1}{\sqrt{\delta}}}$, putting GVS in the Kosterlitz-Thouless (KT) universality class. Some recent numerical studies also seem consistent with KT scaling~\cite{PhysRevB.99.134205,PhysRevResearch.2.042033,PhysRevB.102.064207,2021arXiv210609036R}. (Some previous attempts to numerically extract the correlation length  $\xi \sim \delta^{-\nu}$ from the RGS had found $\nu \approx 3.3$ as opposed to the KT value $\nu = \infty$, but these numerical values of $\nu$ exhibited considerable finite-size drifts.)

A modification of the GVS rules was proposed, and motivated on semi-microscopic grounds, in a paper by Morningstar and Huse~\cite{Morningstar_Huse_2019}. In effect, this RG scheme promoted the parameter $\alpha_I$ in GVS to a dynamical variable that has its own RG flow. The flow of the anisotropy is motivated by the following physical picture, which we will explore in more depth in Sec.~\ref{subsec:avalanche}. Slightly on the insulating side of the MBL transition, a single small thermal block can thermalize a large insulating region (of size set by the localization length) before its thermalization is eventually blocked by the discreteness of energy levels in the insulator. (At the critical point, this avalanche instead spreads throughout the system.) Thus, a typical large insulating block contains large thermal regions, through which correlations can spread without exponential suppression (i.e., that act as local short-circuits for information). These short-circuits renormalize the effective localization length, making it possible for a single thermal block to thermalize an even larger insulating region, and so on. Thus the anisotropy parameter $\alpha_I$ diverges in a specific way at the transition. The critical behavior predicted by this RG scheme was subsequently solved by Morningstar, Huse, and Imbrie~\cite{Morningstar_Huse_Imbrie_2020}, and we will refer to it as the MHI scheme in what follows. The critical exponent $\nu = \infty$ of MHI agrees with that of GVS, but the precise correlation length scaling $\xi \sim \delta^{- \log \log \delta^{-1}}$ differs from KT scaling. This can be traced to the non-analytic scale-dependence of the coefficients in the two-parameter MHI flow. From a general RG perspective this scale-dependence seems unnatural, but it has a natural physical origin in terms of the flowing anisotropy parameter. Moreover, the MHI solution lacks certain peculiar features of the GVS solution: for example, the MHI solution predicts that thermal regions in the insulating phase have a finite fractal dimension that vanishes at the transition (consistent with rare-region counting arguments), whereas the GVS solution predicts fractal dimension zero in the MBL phase.

Since the physics of the MBL transition, within the MHI scheme, is dominated by rare regions, it is natural to ask how sensitive its unusual critical properties are to assumptions about the statistics of these rare regions. A drastic way to modify these statistics is to replace the random couplings with quasiperiodically modulated couplings. In quasiperiodic systems, rare regions (to the extent that they exist) are strongly spatially correlated, apparently invalidating many of the assumptions of MHI. Indeed, some numerical studies~\cite{PhysRevB.87.134202,PhysRevLett.119.075702,PhysRevB.96.104205,PhysRevB.96.075146,Doggen_Mirlin_2019,Znidaric4595,PhysRevB.100.085105,PhysRevB.103.L220201,PhysRevB.101.035148,PhysRevB.104.214201} and an application of the GVS RG scheme suggest that the quasiperiodic MBL transition has a very different character from the random transition~\cite{Agrawal_Gopalakrishnan_Vasseur_2020_QP} (see also Ref.~\onlinecite{Zhang_Yao_2018} for a different prediction using RG approaches). If in fact there are two different universality classes of the MBL transition~\cite{PhysRevLett.119.075702}, it is natural to ask whether there might be many more, corresponding to modulations that are neither conventionally random nor quasiperiodic.

Motivated by this, we consider the MBL transition in systems subject to randomness with long-range (i.e., power-law) spatial correlations (a particular class of long-range speckle disorder has been studied numerically in Ref.~\onlinecite{Speckle}, giving results consistent with our general arguments). The interplay between long range correlations and critical exponents has a long history. For perturbations around a clean critical point, a simple scaling argument that generalizes the Harris bound~\cite{Harris_1974,Chayes_Chayes_Fisher_Spencer_1986,Chandran_Laumann_Oganesyan_2015} can give a definitive stability criterion. Suppose the correlation length scales as $\xi \sim \delta^{-\nu}$ where $\delta$ is the deviation of the order parameter from its critical value. In the presence of disorder, $\delta$ is no longer well-defined globally. Any region of linear size L can be described by the average order parameter $\bar \delta = \frac{1}{L^d} \sum_{i=1}^{L^d} \delta_i$ and the standard deviation $\sigma(\bar \delta)$. In order for the correlation length scaling of the clean critical point to be stable, the fluctuation $\sigma(\bar \delta)$ over a correlation volume $\xi^d$ must be much smaller than the mean $\delta$. For uncorrelated/short-range correlated disorder, $\sigma(\bar \delta) \sim L^{-d/2}$ by the central limit theorem. Long range correlations generally modify this scaling to $\sigma(\bar \delta) \sim L^{d(w-1)}$ ($w \in [0,1]$ is defined as the \textit{wandering exponent}), implying a simple stability criterion $\xi^{d(w-1)} = \delta^{-d(w-1)\nu} < \delta$ or equivalently $\nu > \frac{1}{d(1-w)}$. This is the \textit{correlated Harris bound}~\cite{Luck_1993} which reduces to the usual Harris bound once we take $w = 1/2$. 

However, the above bound is \emph{inadequate for inherently disordered fixed points} (the MHI fixed point being an example to keep in mind). Historically, two approaches have been taken to cure this deficiency. The first approach is perturbative and only covers stability around weak uncorrelated random fixed points. The basic strategy is to perturb a clean fixed point by weak uncorrelated disorder within a replica path integral description and run the Wilsonian RG. When the disorder is weakly relevant, the theory flows to a weak uncorrelated random fixed point. After that, one adds weak correlated disorder to the uncorrelated fixed point and run the Wilsonian RG again (see Refs.~\onlinecite{Weinrib_1984,Weinrib_Halperin_1983} for a detailed derivation that goes through all the diagrammatics). The stability criterion they found agrees with the \textit{correlated Harris bound}. The second approach seeks to derive general bounds on $\nu$ for intrinsically disordered fixed points (that may or may not arise as perturbations of uncorrelated fixed points)~\cite{Chayes_Chayes_Fisher_Spencer_1986, CLO}. The first rigorous result along this line of thinking is the Chayes-Chayes-Fisher-Spencer (CCFS) bound $\nu \geq \frac{2}{d}$ proven in Ref.~\onlinecite{Chayes_Chayes_Fisher_Spencer_1986} for arbitrary uncorrelated/short-range correlated disorder (including infinite randomness fixed points). For correlated disorder, the authors of Ref.~\onlinecite{Chayes_Chayes_Fisher_Spencer_1986} conjecture a \textit{correlated CCFS bound} $\nu \geq \frac{1}{d(1-w)}$. If true, this bound would provide a necessary but not sufficient condition for stability which is weaker than the \textit{correlated Harris bound} (for example, if the uncorrelated fixed point has exponent $\nu$, then it is unstable against correlations with wandering exponent $w$ when $\nu < \frac{1}{d(1-w)}$. But the bound gives no information on stability otherwise). 
The stability criterion $\nu \geq \frac{1}{d(1-w)}$ has been checked in all SDRGs known to date (see Refs.~\onlinecite{Igloi_Monthus_2005, Igloi_Monthus_2018} for comprehensive reviews). Therefore, it is conceivable that $\nu \geq \frac{1}{d(1-w)}$ is in fact a necessary and sufficient stability criterion for arbitrary fixed points, a conjecture that we will refer to as the \textit{generalized Harris bound}. We emphasize that to our knowledge there is no convincing argument for the validity of the generalized Harris bound at strong-randomness fixed points.

With this historical background in mind, let us return to the MHI RG. Since the uncorrelated fixed point has $\nu = \infty$, the \textit{generalized Harris bound} would suggest that long range correlations with arbitrary $w \neq 1$ cannot change $\nu$ unless $w$ flows to 1 in the IR limit. In a large class of RGs including MHI, we will explicitly calculate the flow of $w$ and show that it does not approach 1 in the IR. Therefore, if we believe in the \textit{generalized Harris bound}, the correlated fixed point still has $\nu = \infty$. This discussion leaves open the possibility that the fixed point might flow to a different universality class within the $\nu = \infty$ family. As the scaling theory of Ref.~\onlinecite{Dumitrescu_Goremykina_Parameswaran_Serbyn_Vasseur_2019} shows, any microscopic RG rule with the avalanche mechanism built in must lead to KT-scaling, so long as all $\beta$-functions are analytic. MHI evades this argument by generating logarithmic singularities in its $\beta$-functions. But since there are infinitely many types of non-analyticities, it is natural to expect that correlations can induce a different type of singularity and give rise to a new universality class. Surprisingly, under some weak assumptions, we will show that this does not happen for any initial correlation with $w < 1$ ($w=1$ corresponds to the unphysical case of perfect positive correlations), implying that the MHI universality class is stable. The effect of more general perturbations to the RG rules remains an open question that we hope will be addressed in future works.  

The rest of this paper will flesh out the above arguments in more detail. In Sec.~\ref{sec:setup_summary}, we set up the MHI RG rule of Ref.~\onlinecite{Morningstar_Huse_Imbrie_2020}, explain the quantum avalanche mechanism that motivates it, and state the three main results of the MHI analysis for uncorrelated disorder along with the corresponding modifications under long range correlated disorder. In Sec.~\ref{sec:stability_hyp}, we give general arguments for the irrelevance of hyperuniform correlations in a wide class of \textit{asymptotically additive RGs} (which includes the MHI RG as a special case). For positive correlations discussed in Sec.~\ref{sec:stability_pos}, no such general argument applies. Nevertheless, using special properties of the MBL transition, we can still show all the stability results, first via an intuitive physical argument in Sec.~\ref{subsec:stability_pos1},~\ref{subsec:stability_pos2}, and then through a more rigorous analysis of the functional RG equations in Sec.~\ref{subsec:stability_pos3} (with some technical details relegated to the appendices). Finally, in Sec.~\ref{sec:discussion} we discuss the robustness of our result to changes in the phenomenological RG rule and comment on the existence of possibly relevant perturbations (for example the quasiperiodic initial conditions considered in Refs.~\onlinecite{Agrawal_Gopalakrishnan_Vasseur_2020_QPPotts,Agrawal_Gopalakrishnan_Vasseur_2020_QP}).

\section{Model Setup and Summary of Results}\label{sec:setup_summary}

\subsection{Models of Correlated Disorder}\label{subsec:correlated_disorder}

Before introducing the RG rules, we give a microscopic motivation for how spatially correlated disorder can affect the initial conditions of the RG. For concreteness, one can have in mind the paradigmatic XXZ model
\begin{equation}
    H = \sum_{i,\beta=x,y,z} J_{\beta} \sigma^{\beta}_i \sigma^{\beta}_{i+1} + \sum_i h_i \sigma^z_i \,,
\end{equation}
where $h_i$ is a set of spatially correlated random fields with variance $W$. Given a particular choice of correlation, we assume that there exists a critical disorder strength $W = W_c$ separating the thermal phase at weak disorder and the MBL phase at strong disorder. For every random realization of $\{h_i\}$, there are contiguous regions where every $|h_i|$ is smaller than $W_c$. We refer to these contiguous regions as T-blocks (T for thermal) and the regions intervening the T-blocks as I-blocks (I for insulating). In general, the coarse-grained physical properties of each T and I-block will inherit some spatial correlations from the microscopic statistics of $\{h_i\}$. Since we will be interested only in the long-distance physics near the MBL-thermal phase transition, we will directly inject spatial correlations into the coarse-grained properties, assuming that they arise from some more complicated unspecified correlations in $\{h_i\}$.

We now state more explicitly the forms of correlated disorder that will be used in this paper. Consider a space dependent field $l_x$ with $\ev{l_x} = 0$, $\ev{l_x l_y} \sim W^2 C(|x-y|)$ with $C(\cdot)$ the position space correlation function normalized so that $C(0) = 1$. It is convenient to also introduce the Fourier transform of $C(x)$ which we refer to as the \textit{correlation spectrum} $S(k) = \int e^{ikx} C(x) dx$. Throughout the analysis, we will be interested in a family of correlations with $S(k) \sim_{k \rightarrow 0} |k|^{1-2w}$ where $w \in (0,1)$ labels the wandering exponent. When $w > 1/2$, the spatial profile $C(x) \sim |x|^{2w-2}$ is a long range \textbf{positive correlation} and coherent fluctuations are enhanced. This is in contrast to \textbf{hyperuniform correlation} with $w < 1/2$, where the spatial profile $C(x) \sim - |x|^{2w-2}$ for all $x \neq 0$ indicates long-range anti-correlation. 
\begin{figure*}
    \centering
    \includegraphics[width = \textwidth]{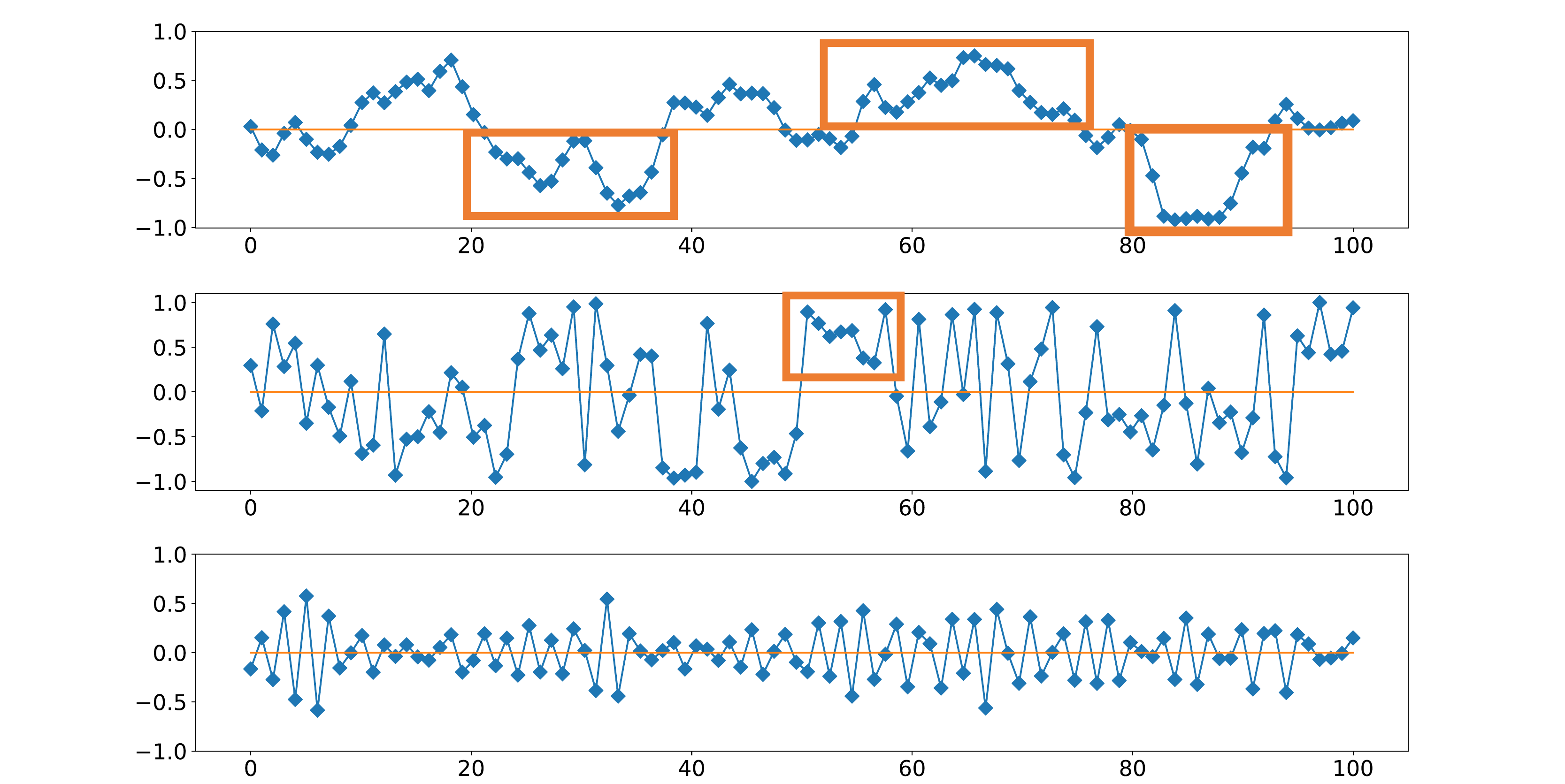}
    \caption{{\bf Correlated disorder.} From top to bottom, the three panels show one sample of three distributions with wandering exponents $w = 0.75, 0.5, 0$ respectively. The positively correlated sample features many long stretches of consecutive blocks on one side of the mean. The uncorrelated sample has fewer stretches, while the hyperuniform sample has strong local anti-correlation and hence no coherent fluctuation. These trends give a visual guide to the formulae in Sec.~\ref{sec:intro}.}
    \label{fig:CorrelatedDisorder_demo}
\end{figure*}
By varying the structure of $S(k)$ near $k = 0$, we can therefore access the full range $w \in (0,1)$ relevant for the generalized Harris bound $\nu \geq \frac{1}{d(1-w)}$. Numerically, one can sample from these correlated distributions by the following recipe: first draw a vector of independent Gaussians $\xi_x$ and then consider $l_x = W\mathcal{F}^{-1}[\sqrt{S(k)} \cdot \mathcal{F}[\vec \xi]]$ where $\mathcal{F}$ denotes a discrete Fourier transform. If we let $q_k = \mathcal{F}(l_x)$, then it is easy to check $\ev{q_k q_{-k'}} = W^2\sum_{x,y} e^{ikx} \sqrt{S(k)} e^{-ik'y} \sqrt{S(-k')} \ev{\xi_x \xi_y} = W^2 S(k)$. Therefore, the output of the algorithm $\vec l$ has the correct spatial correlation $\ev{l_x l_y} = W^2 C(|x-y|)$. Some typical samples with different wandering exponents are shown in Fig.~\ref{fig:CorrelatedDisorder_demo}. In accordance with our expectations, positive/hyperuniform correlations lead to local alignment/anti-alignment and hence a higher/lower probability for the appearance of long sequences on one side of the average. When positive correlations are too strong, coherent fluctuations of contiguous spatial clusters are heavily enhanced, leading to a smaller effective system size. As a result, we will only be able to generate reliable samples up to $w \approx 0.85$ for $\mathcal{O}(10^7)$ spatial sites.

\subsection{Motivating the MHI RG from Quantum Avalanche}\label{subsec:avalanche}

To motivate the MHI RG rules, we give a brief review of the quantum avalanche mechanism introduced in Ref.~\onlinecite{DeRoeck_Huveneers_2017}. The basic idea is to approximate each I-block as a chain of conserved l-bits and each T-block as a fully scrambled thermal bath satisfying the eigenstate thermalization hypothesis (ETH). For exponentially local interactions, the norm of operators coupling the bath to l-bits a distance $x$ away decays as $2^{-x/\zeta}$ for some decay length $\zeta$. Now take a T-block that contains $n_0$ microscopic spins. Upon coupling the T-block to nearby I-blocks, interactions have a tendency to thermalize l-bits near the boundary. But if the decay length $\zeta$ is sufficiently small, the T-block may remain trapped in a sea of l-bits. This heuristic reasoning suggests the existence of a critical $\zeta_c$ past which thermalization continues indefinitely. To derive $\zeta_c$, suppose the T-block has absorbed $n/2$ l-bits from each of the two nearby I-blocks. Then the new bath has size $n_0 + n$. In order for the l-bits further away to remain insulating, we must demand the matrix element of interactions between faraway l-bits and the original thermal bath to be much smaller than the average level spacing of the bath. By the eigenstate thermalization hypothesis (ETH), the matrix element coupling the bath and the nearest surviving l-bit is given by $\Gamma \sim \frac{2^{-n/(2\zeta)}}{\sqrt{2^{n_0 + n}}}$\footnote{Technically what appears in the denominator should be $\sqrt{e^{S(E)}}$ where $S(E)$ is the entropy density associated with a typical infinite temperature state. But for an order of magnitude estimate, it is sufficient to approximate the denominator as $2^{D_{\rm eff}}$ where $D_{\rm eff}$ is the dimension of the full Hilbert space.}. Comparing $\Gamma$ with the level spacing of the thermal bath $\delta \sim 2^{-(n_0+n)}$, we see that the T-block remains trapped iff
\begin{align}
    \frac{2^{-n/(2\zeta)}}{\sqrt{2^{n_0 + n}}} < 2^{-(n_0+n)} \quad \rightarrow \quad - \frac{n}{\zeta} + n_0 + n < 0 \,.
\end{align}
For $\zeta > 1$, the above condition can never be satisfied and the T-block absorbs more and more spins like snowballs in an avalanche. Therefore the critical value is precisely $\zeta_c = 1$. We choose our length units so that $n$ corresponds to the physical length $l^I$ of an I-block. Then by the criterion above, the shortest T-block that can thermalize an I-block of length $l^I$ is given by $d = l^I (\zeta^{-1} - 1)$. Following the convention of MHI, we refer to $d$ as the ``deficit" (see Fig.~\ref{fig:Avalanche} for a cartoon of the avalanche mechanism). Deep in the MBL phase, we expect that $\frac{d}{l^I} \rightarrow \text{const} > 0$. As the transition is approached from the MBL side, $\frac{d}{l^I} \rightarrow 0$ as T-blocks eat up larger and larger I-blocks, eventually taking over the entire spatial chain.
\begin{figure*}
    \centering
    \includegraphics[width = \textwidth]{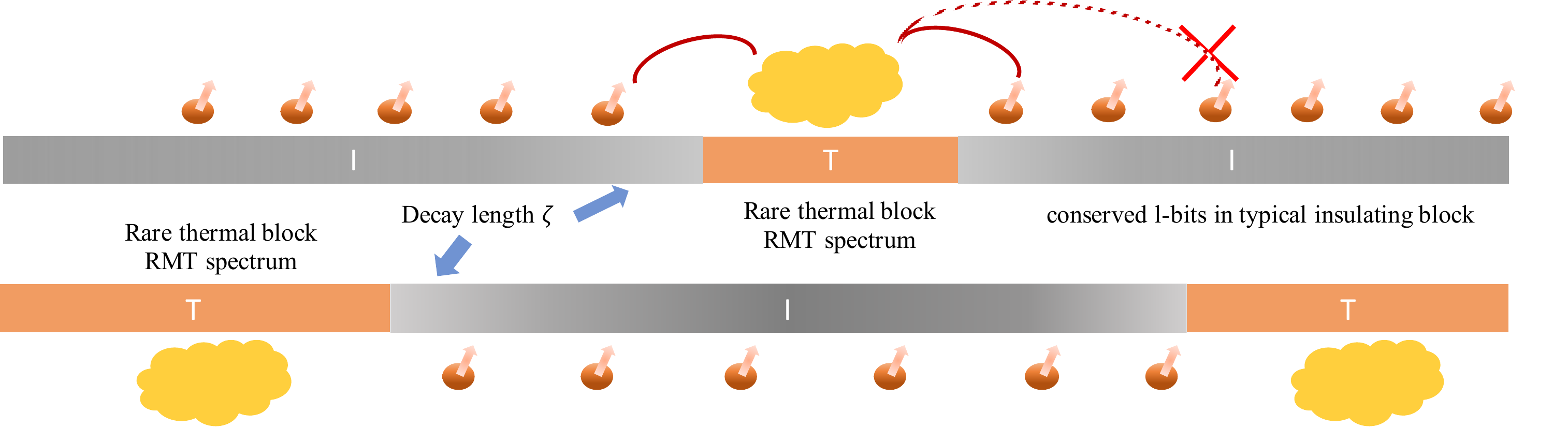}
    \caption{{\bf Avalanche picture.} The T-blocks are treated as thermal baths satisfying ETH (the yellow clouds represent a swarm of scrambled microscopic degrees of freedom). The I-blocks are a chain of exponentially localized l-bits. The red curves represent the exponentially decaying interactions between bath degrees of freedom and l-bits in I-blocks. When the decay length is too large, the bath degrees of freedom thermalize the l-bits closest to boundary, thus growing the T-block length until the entire I-block is absorbed. When the decay length is small, the T-block cannot reach far beyond the boundary and the surrounding I-blocks will remain insulating. This is the basic physical picture for the quantum avalanche of Ref.~\onlinecite{DeRoeck_Huveneers_2017}.}
    \label{fig:Avalanche}
\end{figure*}

\subsection{The Recipe for MHI RG}\label{subsec:RG_rule}

The intuitive picture in Sec.~\ref{subsec:avalanche} immediately motivates the following RG procedure:
\begin{enumerate}
    \item Consider a chain of sites labeled by an integer index $i \in \{1,\ldots, L\}$ where odd/even $i$ corresponds to T/I-blocks. For each T-block, there is a single parameter $l^T_i$ denoting the length of the T-block. For each I-block, there are two parameters $l^I_i, d_i$ denoting the physical length and the deficit length. The initial sequences $\{l^T_i\}, \{d_i\}$ are two independent correlated sequences with wandering exponent $w$ generated by the recipe in Sec.~\ref{subsec:correlated_disorder}. The initial localization length $\zeta < 1$ is chosen to be spatially uniform so that $d_i = l^I_i(\zeta^{-1} - 1)$ for all $i$. The value of $\zeta$ can be used to tune across the phase transition.
    \item At each RG step, we find either the I-block with the shortest $d_i$ or the T-block with the shortest $l^T_i$. This shortest length will be referred to as the cutoff $\Lambda$. If the shortest block is insulating, then the nearby T-blocks absorb it and acquire a total physical length $l^T_{\rm new} = l^T_{i-1} + l^I_i(=\frac{\Lambda}{x_i}) + l^T_{i+1}$ where $x_i = \zeta_i^{-1} - 1$. If the shortest block is thermal, it is too short to destabilize the nearby I-blocks and therefore gets stuck in the middle (remember that $d_i$ is the shortest T-block that can thermalize the i-th I-block and $d_i, d_{i+1} > \Lambda$). This means we get a new I-block with physical length $l^I_{\rm new} = l^I_{i-1} + l^T_{i}(=\Lambda) + l^I_{i+1}$. The new deficit involves more thought. The T-block that gets stuck in the middle is a seed for danger: an additional T-block of length $d_{i-1} -\Lambda + d_{i+1}$ could cooperate with the T-block already nested inside the new I-block to destabilize all of the l-bits in between. Thus, contrary to naive expectations, $d_{\rm new} = d_{i-1} - \Lambda + d_{i+1}$ (see Fig.~\ref{fig:RG_rule} for a pictorial representation). 
    \item After each RG step, $d_i/l^I_i$ and hence $x_i$ will not remain uniform. So we perform an additional average over all I-blocks to obtain a single value $x = \frac{1}{N_{\Lambda}}\sum_i x_i$ where $N_{\Lambda}$ is the number of blocks remaining when the cutoff is $\Lambda$. Following this, we update the deficit length to $d_i = x l^I_i$. 
    \item Steps 2 and 3 are repeated until $\Lambda$ reaches the length scale of interest. 
\end{enumerate}
At first sight, the averaging procedure in step 3 has the potential to alter critical properties. But we show that this is not the case for weak inhomogeneities in $x_i$ at some late stage in the RG: within the MBL phase, the average deficit $\ev{d}$ is much larger than $\Lambda$, and the average $\ev{l^I}$ is much larger than $\ev{l^T}$. Therefore, the RG rule for $ITI \rightarrow I$ move (which is the only move that can change $x_i$, can be approximated as $d_{\rm new} = d_{i-1} + d_{i+1}$, $l^I_{\rm new} = l^I_{i-1} + l^I_{i+1}$. This means that if $x_{i-1}, x_{i+1}$ are initially close, then $\min \{x_i,x_{i+1}\} < x_{\rm new} = (d_{i-1} + d_{i+1})/(l^I_{i-1} + l^I_{i+1}) < \max\{x_{i-1}, x_{i+1}\}$. Therefore, inhomogeneities are irrelevant under the RG flow.
\begin{figure}
    \centering
    \includegraphics[width = 0.45\textwidth]{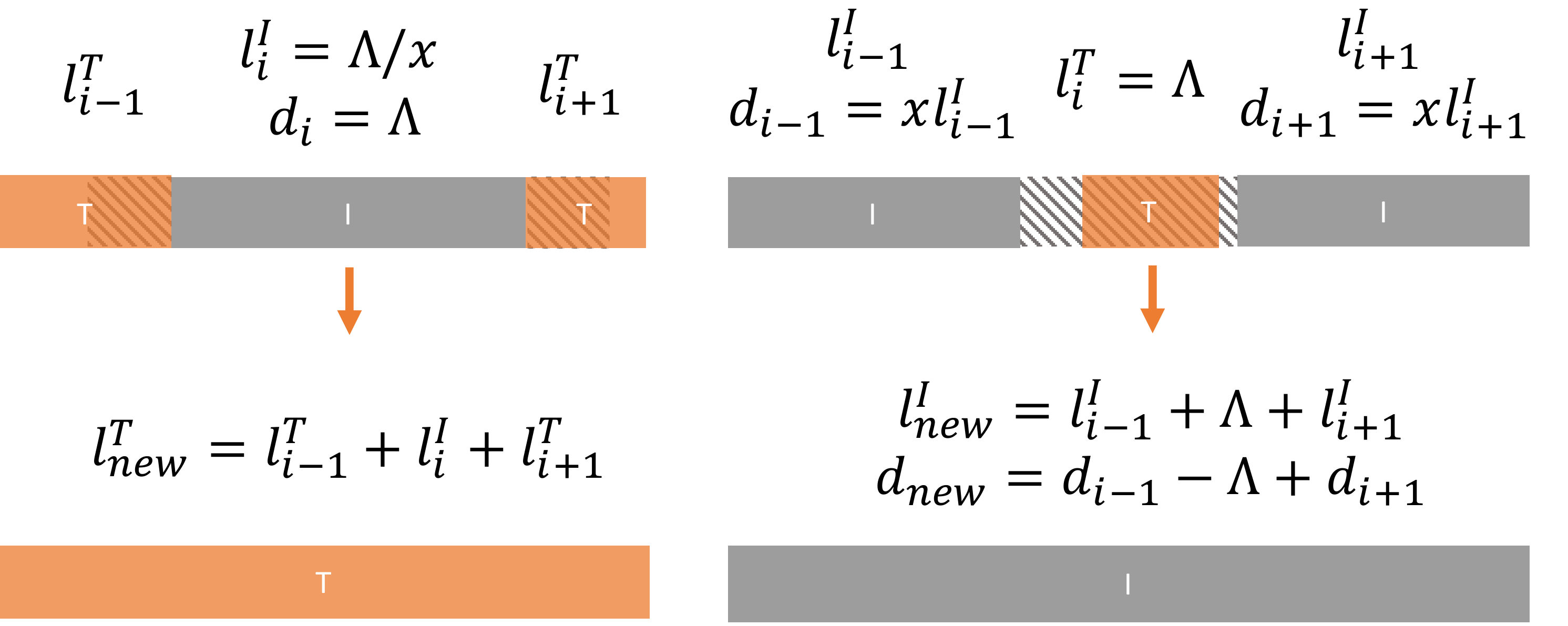}
    \caption{{\bf RG rules.} A pictorial representation of MHI RG rules. Each I-block is characterized by a deficit $d_i$ (represented by the striped region) and a physical length $l^I_i$ (striped plus gray region), while each T-block is characterized by a single physical length $l^T_i$ (the orange region). On the LHS, the T-blocks are longer than the striped region and a $TIT \rightarrow T$ follows. On the RHS, the T-block in the middle is shorter than the striped region and a $ITI \rightarrow I$ ensues.}
    \label{fig:RG_rule}
\end{figure}

\subsection{Recap of MHI Analysis and Summary of new Results}\label{subsec:overview}

Now we turn to the analysis of the RG. The complete data at each cutoff $\Lambda$ consist of a probability distribution $P_{\Lambda}(\{d_i\}, \{l^T_i\})$ for all remaining block parameters. When the initial condition is spatially uncorrelated, the RG procedure doesn't generate spatial correlations and $P_{\Lambda}$ factorizes as
\begin{equation}
    P_{\Lambda}(\{d_i\}, \{l^T_i\}) = \prod_i \rho^T_{\Lambda}(l^T_i) \mu^I_{\Lambda}(d_i) \,,
\end{equation}
where $\rho^T_{\Lambda}(l^T_i), \mu^I_{\Lambda}(d_i)$ are the single-block marginal distributions obtained from integrating over all but one of the block parameters in $P_{\Lambda}$. In the continuum limit, the RG can then be formulated as PDEs for $\rho^T_{\Lambda},\mu^I_{\Lambda}$ rather than the full $P_{\Lambda}$. This simplifying feature is essential to the RG solution in Ref.~\onlinecite{Morningstar_Huse_Imbrie_2020}. The main findings of their analysis can be summarized as follows:

(1) Eventually at infinite $\Lambda$ and within the MBL phase, $x$ converges to some value on the fixed line $\{x > 0\}$. The value of $x$ determines the fractal dimension of thermal inclusions $d_f$ and the stretching exponent of T-block lengths $\epsilon$ via $\epsilon = d_f = \frac{\log 2}{\log (x^{-1})}$. Physically, $d_f$ is defined so that a composite T-block with length $l^T$ late in the RG is made up of microscopic T-blocks with total length scaling as $(l^T)^{d_f}$. Relatedly, $\epsilon$ is defined so that $\rho^T_{\Lambda}(l) \sim e^{-\# l^{\epsilon}}$ for large $l$. Both $d_f$ and $\epsilon$ measure the difficulty of generating large T-blocks in the MBL phase and hence control dynamical properties whose leading contribution comes from rare T-block inclusions (e.g. the conductivity). 

(2) The functional RG of T-block and I-block distributions can be projected to a two-dimensional subspace spanned by the excess decay rate $x = \zeta^{-1} - \zeta_c^{-1}$ and the thermal fraction $f = \frac{\ev{l^T}}{\ev{l^T} + \ev{l^I}}$. For technical reasons, it is more convenient to replace $f$ with $y = \frac{x \Lambda^2}{\ev{d}} \rho^T_{\Lambda}(\Lambda)$, which we later demonstrate to be approximately equal to $f$ near criticality. The MBL and thermal phases meet at a critical separatrix $y(x)$ and the correlation length exponent $\nu = \infty$.

(3) The RG flow equations valid near the separatrix are given by
\begin{equation}
    \frac{dx}{d\Lambda} = - \frac{(1+x)y}{\Lambda}, \quad y_{\Lambda/x_{\Lambda}} = \left(\frac{y_{\Lambda}}{x_{\Lambda}} \right)^2 \ev{d} \mu^I_{\Lambda}(\Lambda) \,.
\end{equation}
where $\ev{d}$ the expectation value of the deficit length and $\mu^I_{\Lambda}(\Lambda)$ is the probability density of having a deficit length precisely at cutoff. In the absence of correlations, $\ev{d} \mu^I_{\Lambda}(\Lambda) \approx 1$ at large $\Lambda$ and the second equation simplifies. One can then infer the form of the separatrix $y = x^2$. For an infinitesimal perturbation $\delta_0$ away from the separatrix, the correlation length scales as $\xi \sim \delta_0^{-\log \log \delta_0^{-1}}$ which puts uncorrelated MHI in a universality class distinct from the more familiar Kosterlitz-Thouless transition (which also satisfies $\nu = \infty$). 

In the presence of correlations, technical challenges immediately arise because the full RG flow can no longer be captured by the single block marginal distributions. For hyperuniform correlations, we will give a general argument that shows their irrelevance in a class of asymptotically additive RGs (with the MHI RG as a specific example). For positive correlations, irrelevance is a special property of the MHI RG. Via a combination and analytic and numerical arguments, we will arrive at the following stability results parallel to the main findings of MHI:

(1) At infinite $\Lambda$, the fractal dimension of thermal inclusions in an I-block is $d_f \sim \frac{\log 2}{\log x^{-1}}$ while the stretching exponent for T-blocks is $\epsilon \sim \frac{\log (2-\eta)}{\log x^{-1}}$ where $0\leq \eta < 1$ is a constant that cannot be determined precisely. This shows that $\epsilon, d_f$ can behave differently, although they match in the absence of correlations. 

(2) The correlation length critical exponent $\nu = \infty$ is preserved. 

(3) With some \textit{additional technical assumptions}, the flow equations along and below the separatrix will continue to take the MHI form. However, positive correlations may potentially change the value of $\ev{d} \mu^I_{\Lambda}(\Lambda)$ but do not modify the correlation length scaling, up to nonuniversal constant coefficients that are independent of $\Lambda, \delta_0$.

\section{Stability against hyperuniform correlations}\label{sec:stability_hyp}

In this section, we define a general class of \textit{asymptotically additive RGs} (including the MHI RG as a special case) for which hyperuniform correlations in the initial block configurations do not modify the critical behavior in the vicinity of the fixed point. To arrive at this conclusion, we will argue that any initial wandering exponent $w<1/2$ always flows back to the uncorrelated value $w = 1/2$. This fragility of hyperuniform correlations is to be contrasted with the robustness of positive correlations, whose wandering exponents generally do not flow under the same class of RGs.

We first explain the basic setup. Start with $N_0$ spatial sites labeled by an index $i$, such that microscopic blocks at site $i$ are characterized by $p$ positive parameters $L_{i, \alpha = 1,\ldots, p}$ bounded from below by some initial RG cutoff $\Lambda_0$. In each step of the RG, these microscopic blocks combine to form larger composite blocks whose parameters are determined by a set of RG rules. The minimum of the updated block parameters sets the new cutoff $\Lambda$. We call the RG rules \textbf{asymptotically additive} if parameters of the new block can be written as a linear combination of parameters of the constituent blocks up to corrections subleading in $\Lambda^{-1}$. In other words, the updated block parameters approximately satisfy $L^{\rm new}_{i,\alpha} = \sum_{j, \beta} r_{\alpha\beta} L_{j,\beta}$ for some set of fixed constants $r_{\alpha\beta}$ independent of the spatial location. To give a concrete example, consider the symmetric RG of Ref.~\onlinecite{Zhang_Zhao_Devakul_Huse_2016}. If we take a perspective slightly different from Sec.~\ref{sec:intro} and view each neighboring pair of thermal and insulating blocks as living on a single spatial site, then initially we have a correlated sequence $L_{i,1} = l^T_i, L_{i,2} = l^I_i$. When the shortest block is $l^I_i$, the spatial site $i$ is eliminated and the spatial site $i+1$ has an updated thermal length $L^{\rm new}_{i+1,1} = L_{i,1} + L_{i,2} + L_{i+1,1} = l^T_i + l^I_i + l^T_{i+1}$; when the shortest block is $l^T_i$, the spatial site $i$ is eliminated and the spatial site $i-1$ has an updated insulating length $L^{\rm new}_{i-1,2} = L_{i-1,2} + L_{i,1} + L_{i,2} = l^I_{i-1} + l^T_{i} + l^I_i$. All non-vanishing components of $r_{\alpha\beta}$ are equal to one in this example and we have asymptotic additivity. One can easily verify that the GVS RG of Ref.~\onlinecite{Goremykina_Vasseur_Serbyn_2019} and the MHI RG of Ref.~\onlinecite{Morningstar_Huse_Imbrie_2020} are also asymptotically additive with $p=2, p=3$ respectively. 

We now run a general asymptotically additive RG on a spatial chain of length $N_0$. The initial block parameters are drawn from a translation-invariant probability distribution where for every $\alpha$, $\{L_{i,\alpha}\}$ is a set of $N_0$ spatially correlated parameters with mean $\ev{L_i} = L$ ($L$ is a constant $p$-component vector) and wandering exponent $0 < w < 1$. As the RG progresses to some larger cutoff $\Lambda$, the $N_0$ microscopic blocks are replaced by $N_{\Lambda}$ composite blocks with parameters $\{L_{A,\alpha}\}$ where $A = 1, \ldots, N_{\Lambda}$. It is useful to introduce a set of integers $c(A)$ monotonically increasing with $A$ such that the composite block $A$ contains all microscopic blocks with initial spatial index $i \in \{c(A),c(A)+1,\ldots, c(A+1)-1\}$. In an asymptotically additive RG, the composite block parameters are approximately equal to a linear combination of parameters of microscopic blocks with $i \in \{c(A),c(A)+1,\ldots, c(A+1)-1\}$
\begin{equation}
    L_{A,\alpha} \approx \sum_{c(A) \leq i < c(A+1)} \sum_{\beta} \tilde r_{\alpha\beta} L_{i,\beta} = \sum_{c(A) \leq i < c(A+1)} \tilde L_{i,\alpha} \,,
\end{equation}
where $\tilde L_{i,\alpha} = \sum_{\beta} \tilde r_{\alpha\beta} L_{i,\beta}$ and $\tilde r_{\alpha\beta}$ is a set of coefficients that depend on $r_{\alpha \beta}$ and the initial configuration $\{L_{i,\alpha}\}$ in some complicated way that will not be essential to the argument. 

Now we think about the consequence of this additive structure for the wandering exponent $w_{\Lambda}$ at scale $\Lambda$. Recall that $w_{\Lambda}$ is defined such that $\sigma[\mathcal{L}_{\alpha}(K)] \sim K^{w_{\Lambda}}$ where $\sigma[\ldots]$ is the standard deviation and $\mathcal{L}_{\alpha}(K) = \sum_{A=1}^K L_{A,\alpha}$ is a sum over $K$ consecutive composite blocks. Since the initial disorder distribution is translation-invariant, the average composite block size $\ev{c(A+1) - c(A)} = S$ is independent of $A$ and $\ev{\sum_{\beta} \tilde r_{\alpha\beta} L_{i,\beta}} = \tilde L_{\alpha}$ is independent of $i$. We thus have the following decomposition
\begin{equation}\label{eq:hyperuniform_key}
    \begin{aligned}
    &\sigma[\mathcal{L}_{\alpha}(K)]^2 = \ev{\left(\sum_{A=1}^K L_{A,\alpha} - KS \tilde L_{\alpha} \right)^2} \\
    &\hspace{0.5cm}= \ev{\left[c(K) \tilde L_{\alpha} - KS \tilde L_{\alpha} + \sum_{i=1}^{c(K)} \left(\tilde L_{i,\alpha} - \tilde L_{\alpha}\right)\right]^2} \\
    &\hspace{0.5cm}= \ev{\left[c(K) - KS\right]^2}\tilde L_{\alpha} + \sum_{i=1}^{c(K)} \ev{\left[\tilde L_{i,\alpha} - \tilde L_{\alpha}\right]^2} \\
    &\hspace{1cm} + 2 \tilde L_{\alpha} \ev{\left[c(K) - KS\right]\left[\tilde L_{i,\alpha} - \tilde L_{\alpha}\right]} \,.
    \end{aligned}
\end{equation}
In the last line, the first term captures fluctuations in the number of microscopic blocks contained in the $K$ composite blocks. The second term captures fluctuations in the block parameters holding the number of microscopic blocks fixed. The third term encodes correlations between these two types of fluctuations. We examine the cases $\tilde L_{\alpha} = 0$ and $\tilde L_{\alpha} \neq 0$ separately. 
\begin{enumerate}
    \item If $\tilde L_{\alpha} = 0$, then only the second term of \eq{eq:hyperuniform_key} survives. Since fluctuations in block parameters start out hyperuniform, there is a chance that $\{\tilde L_{i,\alpha}\}$ remains hyperuniform at all stages (we will see an example of this later). If that is the case, $\sigma[\mathcal{L}_{\alpha}(K)]^2 \sim c(K)^{2w} \sim K^{2w}$ and the wandering exponent $w_{\Lambda}$ does not flow.  
    \item If $\tilde L_{\alpha} \neq 0$ (which is the generic case), all three terms in \eq{eq:hyperuniform_key} compete. Initially, there is no hyperuniformity in the number fluctuations because there is no fluctuation at all. After $n \ll N_0$ RG moves, there are $N_0-2n$ blocks of size $1$ and $n$ composite blocks of size $2$. The number fluctuations are now directly associated with the fluctuations of spatial locations for the $n$ smallest numbers in a hyperuniform sequence of length $N_0$. We checked numerically that the distribution of these locations do not inherit any hyperuniformity and the number fluctuations have wandering exponent $w = 1/2$ early in the RG. What does this mean in terms of the correlation spectrum $S(k)$? The signature of hyperuniformity is a correlation hole $S(k) \sim |k|^{1 - 2 w}$ as $k \rightarrow 0$. Our argument above shows that a dilute set of block combinations already fill the correlation hole so that $w = 1/2$ and $S(k) \neq 0$ as $k \rightarrow 0$. To remove this constant term and restore the correlation hole requires an unphysical fine-tuning later on in the RG. Hence, even if the second term in \eq{eq:hyperuniform_key} retains hyperuniformity, the first term always wins since $K^{1/2} \gg K^w$ for $w < 1/2$. As a result, $w_{\Lambda} \rightarrow 1/2$ late in the RG.
    
    In contrast, positive correlations are signaled by a singularity of $S(k)$ near $k = 0$. Numerically we find that the singularity is inherited by the number fluctuations and all three terms in \eq{eq:hyperuniform_key} scale as $K^{2w}$ with $w > 1/2$. Hence, $w_{\Lambda}$ does not flow for positive correlations.
\end{enumerate}
The above casework implies that the only way to preserve hyperuniform correlations in $\mathcal{L}_{\alpha}(K)$ is to have $\tilde L_{\alpha} = 0$ and $\tilde L_{i,\alpha}$ hyperuniform at all RG stages. In fact, a simple generalization of the above argument shows that hyperuniformity can also be preserved if a linear combination $\delta_i = \sum_{\alpha} t_{\alpha} L_{i,\alpha}$ satisfies the same properties. For general RG rules with $p > 1$ and generic $r_{\alpha\beta}$ coefficients, no such special parameter can exist. We therefore conclude that hyperuniformity is irrelevant in a generic asymptotically additive RG. Since nonlinear RG rules are even more destructive to the wandering exponents, we expect the same conclusion to hold for nonlinear RGs. 

To get a concrete feel for the argument, let us consider a few examples. In the symmetric RG of Ref.~\onlinecite{Zhang_Zhao_Devakul_Huse_2016}, the basic block parameters are just the lengths $l^T, l^I$ of T/I-blocks. The RG rule is strictly additive and satisfies the assumptions in the claim:
\begin{equation}
    l^T_{\rm new} = l^T_i + l^I_{i} + l^T_{i+1} \quad l^I_{\rm new} = l^I_{i-1} + l^T_{i} + l^I_{i} \,.
\end{equation}
At criticality, $\ev{l^T} = \ev{l^I} = \mathcal{O}(\Lambda) \neq 0$ where $\Lambda$ is the moving cutoff. Therefore, hyperuniform correlations in $l^T_A, l^I_A$ get washed out by the number fluctuations. The only order parameter that has zero mean is $\delta_A = l^T_A - l^I_A$. But in general $\delta_A$ cannot be written as a linear combination of microscopic $\delta_i$ (this is easy to prove by contradiction). Therefore hyperuniform correlations are always irrelevant and $\nu(w<1/2) = \nu(w=1/2) \approx 2.5$ for every $w < 1/2$. This conclusion has been checked through finite-size scaling numerics in Appendix~\ref{sec:AppendixC}.

For the random transverse field Ising model (RTFIM) with microscopic Hamiltonian $H = \sum_i J_i Z_i Z_{i+1} + \sum_i h_i X_i$, the RG parameters for each block are $\beta_i = - \log J_i$ and $\zeta_i = - \log h_i$ with cutoff $\Gamma = \log \Omega_0 - \log \Omega$ where $\Omega = \max J_i, h_i$. Late in the RG, $\Gamma$ flows to infinity and $\beta_i, \zeta_i \geq \Gamma \geq 0$. The RG rules are still linear combinations of $\beta_i, \zeta_i$:
\begin{equation}
    \beta_{\rm new} = \beta_i - \zeta_{i+1} + \beta_{i+1} \quad \zeta_{\rm new} = \zeta_i - \beta_{i+1} + \zeta_{i+1} \,. 
\end{equation}
These RG rules are identical to the symmetric RG except for the minus signs. By our general arguments, the composite block parameters $\ev{\beta_A}, \ev{\zeta_A}$ will not remain hyperuniform at large $\Lambda$. However, the special structure of the RG rules force $\delta_A = \beta_A - \zeta_A = \sum_{i \in A} \sum_i \beta_i - \zeta_i$. Since $\delta_A$ has zero mean at criticality, we must conclude that number fluctuations do not contribute and the fluctuations of $\delta_A$ \textit{remain hyperuniform at all RG scales}! In fact, an exact solution shows that the hyperuniform RTFIM saturates the generalized Harris bound for all values of $0<w<1/2$ (see Ref.~\onlinecite{Crowley_Laumann_Gopalakrishnan_2019} for a complete analysis of this problem). 

Finally we come to the MHI RG. Clearly, the $TIT \rightarrow T$ and the $ITI \rightarrow I$ moves are both asymptotically additive. But within the MBL phase and along the critical separatrix, the average deficit $\ev{d_A}, \ev{l^T_A} \neq 0$ and hyperuniformity in $l^T_A, d_A$ is killed by the RG flow. Moreover, since the flow along the critical separatrix ends in the localized phase, there is an asymmetry between T and I-blocks such that $\ev{d_A}/\ev{l^T_A} \rightarrow \infty$. This means that there cannot be a zero-mean order parameter written as a finite linear combination of $d_A, l^T_A$. As a result, \textit{hyperuniform correlations are always irrelevant in the MHI RG}. For positive correlations, the wandering exponent remains different from the uncorrelated value for arbitrarily large $\Lambda$, potentially giving rise to a new universality class within the $\nu = \infty$ family. Whether or not this occurs will be explored in the next section. 

\section{Stability against positive correlations}\label{sec:stability_pos}

Positive correlations are generally relevant for asymptotically additive RG schemes. Nevertheless, for the MHI scheme we will find that they are irrelevant. The essential feature of the MHI RG that leads to this conclusion is that $x \to 0$ at the critical point. In what follows, we will argue for each of the three properties we previewed in Sec.~\ref{subsec:overview}: (1)~that the scaling of the fractal dimension $d_f$ is unmodified from MHI; (2)~that the correlation length exponent $\nu = \infty$ for positive correlations; and (3)~that (under some technical assumptions) the scaling of the correlation length is also unmodified from MHI. 

\subsection{Fractal dimension scaling survives correlations}
\label{subsec:stability_pos1}

We will use physical arguments to show that the structure of typical T/I-blocks near criticality is not affected by positive correlations. This analysis will not provide a concrete understanding of the flow equations, but will be sufficient to establish the more qualitative notions of stability captured by properties (1) and (2). 

The essential feature of the MHI RG that we will use is the asymmetric thermalizing powers of $T$ and $I$ blocks: while small T-blocks can easily thermalize I-blocks with large physical lengths, I-blocks must start out much larger than their neighbors to remain insulating. Deep in the MBL phase, the deficit lengths $d_i$ of the I-blocks are an appreciable fraction of their physical lengths $l^I_i$ (i.e. $x$ is not too small). As a result, a rare T-block that absorbs a neighboring I-block doesn't grow appreciably in size and has weak thermalizing power. In order to cause an instability, we would thus need to increase the thermal fraction $f = \frac{\ev{l^T}}{\ev{l^I} + \ev{l^T}}$ by seeding a critical mass of T-blocks. This implies the existence of a transition point $f_*$ corresponding to every $x_* > 0$. Now suppose we decrease the value of $x_*$, then each T-block has a higher thermalizing power and the threshold $f_*$ should decrease. As $x_* \rightarrow 0$, $f_*$ must also approach 0, because when $x = 0$, a single T-block automatically thermalizes the whole system and no MBL phase can exist. Hence the critical point is pinned at $(x,f) = (0,0)$ even in the presence of positive correlations. This argument is self-consistent as long as the fluctuations in $x_i = d_i/l^I_i$ are always much smaller than the mean, so that all the I-blocks late in the RG can be characterized by the average $x$. This self-averaging property turns out to be true everywhere outside the thermal phase: due to the asymmetric thermalizing capacities, the MBL phase (including the critical separatrix) must satisfy $\ev{L^I} > \ev{d} \gg \ev{l^T}$. Recapitulating an argument in Sec.~\ref{subsec:RG_rule}, the RG rule for $ITI \rightarrow I$ move (which is the only move that can change $x_i$), can be approximated as $d_{\rm new} = d_{i-1} + d_{i+1}$, $l^I_{\rm new} = l^I_{i-1} + l^I_{i+1}$. This means that if $x_{i-1}, x_{i+1}$ are initially close, then $\min \{x_{i-1},x_{i+1}\} < x_{\rm new} = (d_{i-1} + d_{i+1})/(l^I_{i-1} + l^I_{i+1}) < \max\{x_{i-1}, x_{i+1}\}$, implying the irrelevance of inhomogeneities in $\{x_i\}$. The existence of a well-defined separatrix even in the presence of positive correlations has an immediate implication: if we initialize the system sufficiently close to the separatrix, we will always end up in the regime where typical I-blocks have uniformly small $x$ and large physical lengths. 

As for the T-blocks, living in between these gigantic I-blocks is a huge challenge, and they have to fight for every opportunity to grow. Below the separatrix and within the $x, f \ll 1$ limit, the most efficient way to form large T-blocks is through successive $TIT \rightarrow T$ moves where $l^T_{i-1} = l^T_{i+1} = \Lambda$ and $l^I_i = \frac{\Lambda}{x}$ at every stage. The resulting fractal structure has a fractal dimension $d_f \approx \frac{\log 2}{\log (2+x^{-1})}$ which slowly approaches zero near the critical point. Now we would like to argue that typical T-blocks late in the RG have precisely this structure. If the T-block lengths $l^T_i$ were independently distributed, then the probability of growing a fractal T-block of length $l$ scales as $\exp{-l^{d_f}} \sim \exp{-l^{\frac{\log 2}{\log x^{-1}}}}$. This is to be contrasted with the probability of having a non-fractal T-block of length $l$ which scales as $\exp{-l}$. As $x \rightarrow 0$, $\exp{-l^{d_f}} \gg \exp{-l}$ and hence fractal regions dominate late in the RG, precisely as shown in the uncorrelated MHI analysis~\cite{Morningstar_Huse_Imbrie_2020}. In the correlated case, to establish a similar dominance, we need two crucial ingredients: (a) The probability of having a pair of neighboring T-block at cutoff $C^{TT}_{\Lambda}(\Lambda,\Lambda)$ should approximately factorize into $\rho^T_{\Lambda}(\Lambda)^2$ where $\rho^T_{\Lambda}(l)$ is the marginal distribution of single T-block lengths. (b) The presence of a T-block at cutoff should not be strongly correlated with the presence of a neighboring I-block at cutoff. This avoids the appearance of a long chain $TITI\ldots T$ where all I-blocks are at cutoff and the whole chain merges into a single T-block with $\mathcal{O}(1)$ fractal dimension. 

To argue for these ``factorization"-type results, we again take advantage of the asymmetric thermalizing powers of T and I-blocks. Let us consider two composite T-blocks with length $L^T_A, L^T_{A+1}$, each containing $\mathcal{O}(\Lambda)$ microscopic blocks. Then as $x \rightarrow 0$, the composite I-block sandwiched by the T-blocks contains at least $\mathcal{O}(\frac{\Lambda}{x})$ microscopic blocks, reflecting the asymmetry. If we denote the microscopic block lengths by $l_i$ and consider a correlation function $C(i,j) \sim 1/|i-j|^c$, then by the general arguments of Sec.~\ref{sec:stability_hyp}, block length correlations compete with number fluctuations and the covariance of $L^T_A, L^T_{A+1}$ scales as
\begin{equation}
    \begin{aligned}
    \ev{L^T_A L^T_{A+1}}_{\rm conn} &\approx \ev{\left(\sum_{i=1}^{\Lambda} l_i\right)\left(\sum_{j=1}^{\Lambda} l_{\mathcal{O}(\frac{\Lambda}{x}) + j}\right)}_{\rm conn} \\
    &\sim \sum_{i,j =1}^{\Lambda} \frac{1}{\left|\mathcal{O}(\frac{\Lambda}{x}) + j - i\right|^c} \lesssim \Lambda^2 (\frac{\Lambda}{x})^{-c} \,.
    \end{aligned}
\end{equation}
On the other hand, the variance of an individual composite block $L^T_A$ is
\begin{equation}
    \ev{L^T_A L^T_A}_{\rm conn} = \ev{\left(\sum_{i=1}^{\Lambda} l_i\right)\left(\sum_{j=1}^{\Lambda} l_{\Lambda+j}\right)}_{\rm conn} \sim \Lambda^{2-c} \,.
\end{equation}
Comparing the two estimates above, we see
\begin{equation}
    \ev{L^T_A L^T_{A+1}}_{\rm conn} \sim x^c \ev{L^T_A L^T_A}_{\rm conn} \,,
\end{equation}
implying that for every $0 < c < 1$, the correlations between nearby T-blocks are asymptotically suppressed in the limit $x \rightarrow 0$. This argument easily generalizes to multi-point correlations between distant composite T-blocks, giving the estimate $\frac{\ev{(L^T_A)^n (L^T_{A+B})^n}_{\rm conn}}{\ev{(L^T_A)^{2n}}_{\rm conn}} \sim x^c/B^c$. Hence, we have a robust conclusion that the wandering exponent of T-blocks $w_T \rightarrow 1/2$ as $x \rightarrow 0$, and the joint distributions of multiple consecutive T-blocks should factorize into products of marginals, giving an even stronger version of ingredient (a). In contrast, the fluctuations of I-block lengths retain the wandering exponent $w_I > 1/2$ of UV correlations. This is because the T-block in between nearby I-blocks is negligibly short and the asymptotic RG move is just successive I-block additions $L^I_{\rm new} = L^I_A + L^I_{A+1}$ which preserve the wandering exponent, as we have shown in Sec.~\ref{sec:stability_hyp}. 

For ingredient (b), consider now nearby T and I-blocks containing $\Lambda$ and $\Lambda/x$ microscopic blocks respectively. Imitating the calculation before, we have
\begin{equation}
    \ev{(L^T_A)^n (L^I_{A+1})^n} - \ev{(L^T_A)^n} \ev{(L^I_{A+1})^n} \lesssim [\Lambda (\frac{\Lambda}{x})^{1-c}]^n\,.
\end{equation}
Rewriting these correlators in terms of joint and marginal distributions and dividing by a uniform factor $x = \frac{d}{L^I}$, we have
\begin{equation}\label{eq:moment_bound}
    \int l^n d^n C^{TI}_{\Lambda,c}(l,d) \sim \left(\frac{x}{\Lambda}\right)^{cn} [\int l^n \rho^T_{\Lambda}(l)]\cdot [\int d^n \mu^I_{\Lambda}(d)] \,.
\end{equation}
The above moment estimates show that nearby T and I-blocks become weakly correlated late in the RG, thereby establishing a quantitative formulation of ingredient (b). Moreover, they motivate a stronger pointwise bound $C^{TI}_{\Lambda,c}(l,d) \ll \rho^T_{\Lambda}(l) \mu^I_{\Lambda}(d)$ although no rigorous proof can be given in the absence of additional regularity assumptions. For concreteness, we provide in Fig.~\ref{fig:Test_CTI} some numerical evidence for this pointwise estimate evaluated at the cutoff $d=\Lambda$. Due to the nature of $\nu = \infty$ RGs, the critical window is extremely narrow and we cannot truly approach the $x \ll 1$ regime even for very large system sizes ($\sim 4 \cdot 10^6$). But the trend of decaying correlations in our numerics is consistent with all the analytic arguments. We will use these facts again in the analysis of Sec.~\ref{subsec:stability_pos3}.
\begin{figure}
    \centering
    \includegraphics[width = \textwidth/2-9pt]{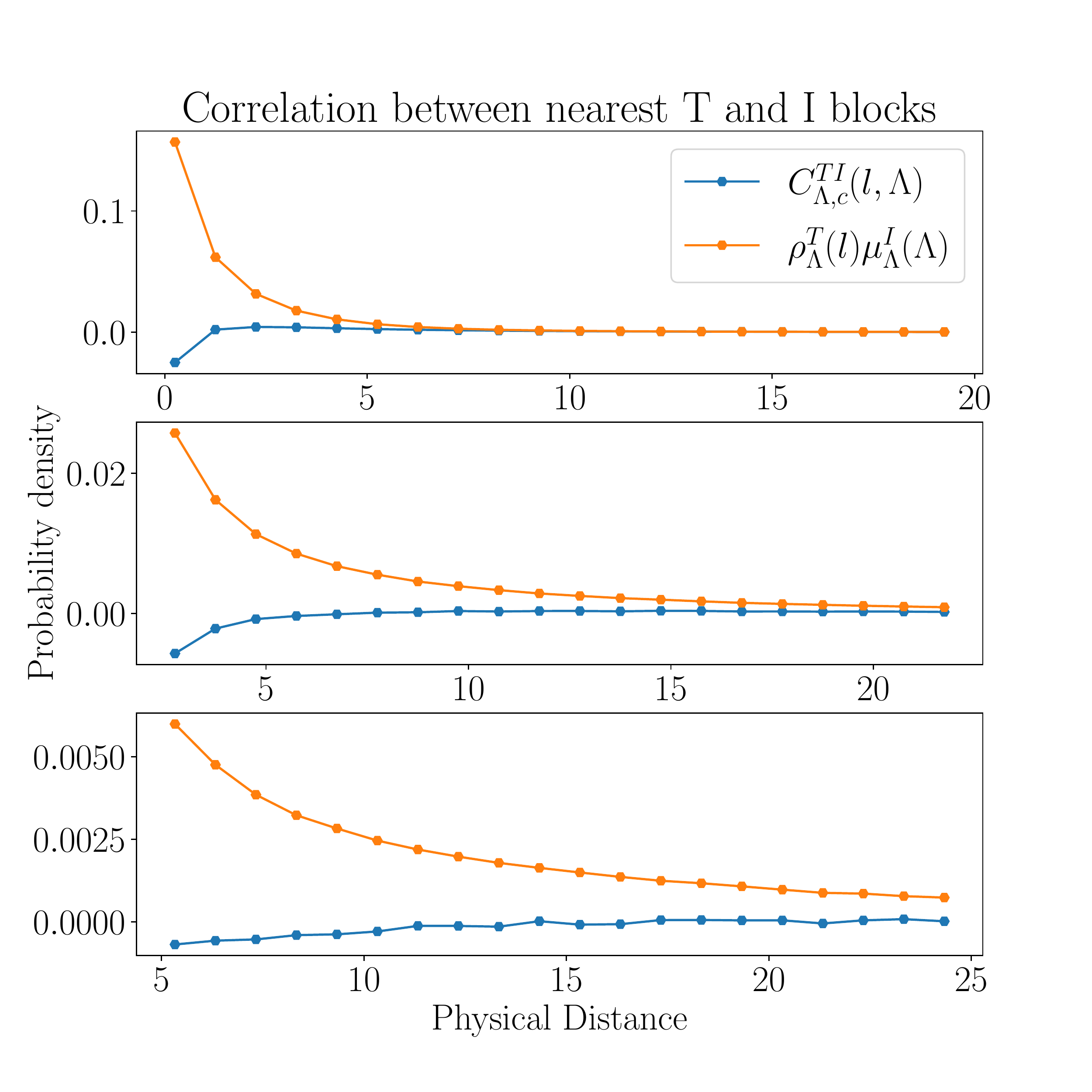}
    \includegraphics[width = \textwidth/2-9pt]{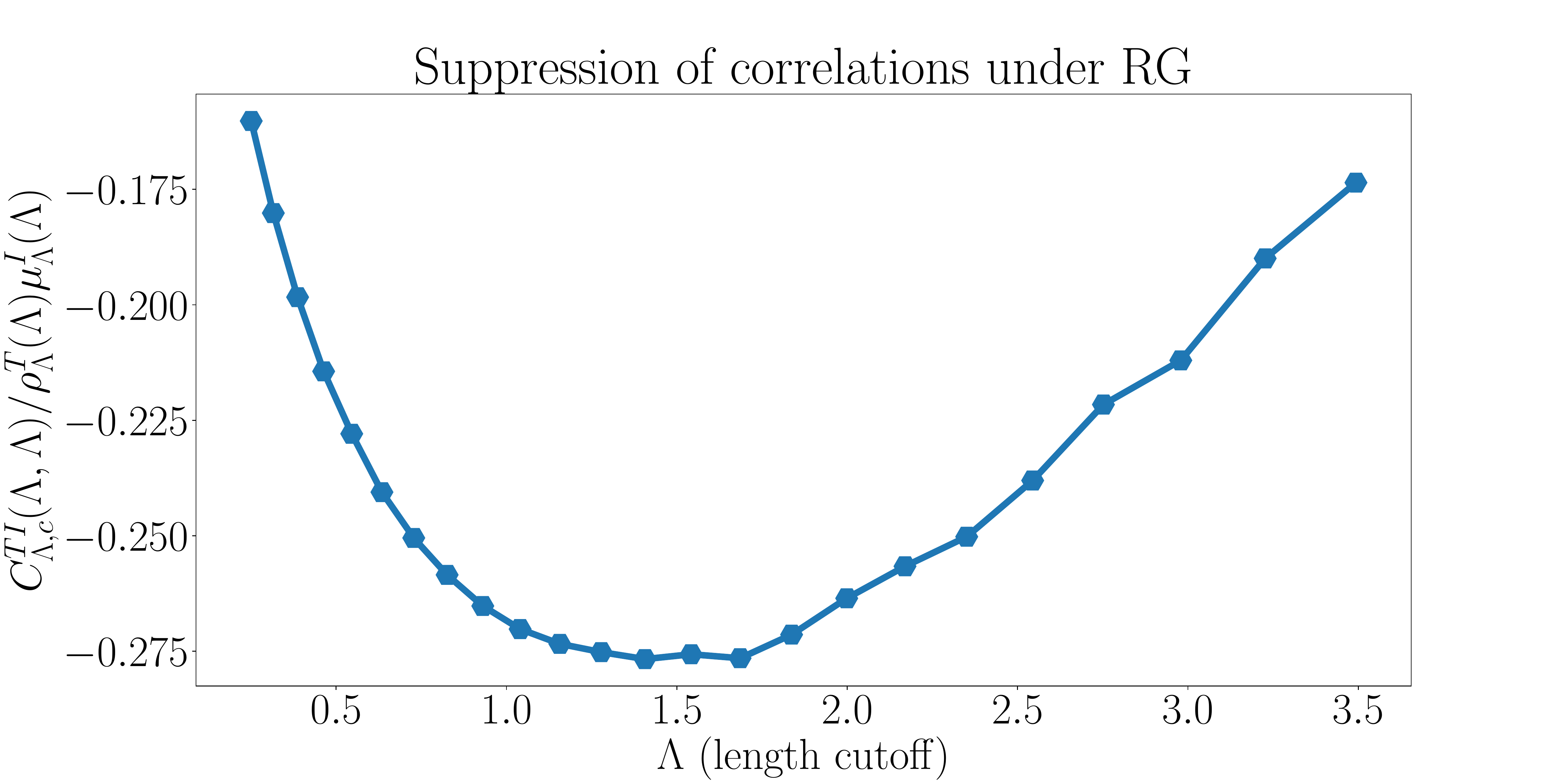}
    \caption{{\bf Suppression of the TI correlations.} In the top figure we provide some numerical evidence for \eq{eq:moment_bound} by plotting $C^{TI}_{\Lambda,c}(l,\Lambda)/\rho^T_{\Lambda}(l) \mu^I_{\Lambda}(\Lambda)$ against the physical separation $l$. We initiated the RG with $4 \cdot 10^6$ blocks and the top, middle, bottom panels are snapshots taken when $2 \cdot 10^6, 8 \cdot 10^5, 1.4 \cdot 10^5$ blocks remain. In the bottom figure, we provide an alternative visualization by fixing $l=d=\Lambda$ and tracking the evolution of $C^{TI}_{\Lambda,c}(\Lambda,\Lambda)$ as a function of $\Lambda$. One can clearly see that the joint distribution remains close to the product of marginals as the RG progresses, consistent with \eq{eq:moment_bound}. The results for $C^{TT}_{\Lambda,c}$ and $C^{II}_{\Lambda,c}$ are qualitatively similar.}
    \label{fig:Test_CTI}
\end{figure}

With ingredients (a) and (b) in hand, we return to the argument about fractal dimensions. In the presence of UV positive correlations with decay exponent $c$, the probability of growing a non-fractal T-block with large length $l$ scales as $\exp{-l^c}$ where $c$ is the decay exponent of the correlations. By the factorization argument above, the rare T-blocks in the IR are asymptotically independent and the probability of growing a fractal inclusion of length $l$ retains its uncorrelated scaling $\exp{-l^{\frac{\log 2}{\log x^{-1}}}}$. Comparing $\exp{-l^{\frac{\log 2}{\log x^{-1}}}}$ with $\exp{-l^c}$, we see that for all $0<c<1$, taking $x \ll 1$ always guarantees that the fractal inclusions eventually dominate over the non-fractal rare regions, thereby establishing property (1). At first sight, one may guess that $d_f$ is equal to the stretching exponent $\epsilon$ for the T-block distribution because the number of independent rare events needed to form a rare T-block with length $l$ scales as $l^{d_f}$. But the rarity of those events changes with the RG scale (as pointed out by Ref.~\onlinecite{Morningstar_Huse_2019}) and with the flow of correlations. Hence, no precise relationship between $\epsilon$ and $d_f$ can be inferred, although the singular scaling with $x$ should be the same. With additional technical assumptions, we will verify in Sec.~\ref{subsec:stability_pos3} that this is indeed the case. In summary, the most dangerous thermalizers that prevent the rapid decay of $f$ near the critical separatrix consist of typical T-blocks in the $x,f \ll 1$ regime that are fractals with small fractal dimension. Positive correlations might affect the statistics of rare dense T-blocks with larger fractal dimensions, but they play no role when the system is sufficiently close to the critical point. 

\subsection{Correlation-length exponent \texorpdfstring{$\nu = \infty$}{} survives correlations}\label{subsec:stability_pos2}

The structure of typical T and I-blocks described above also helps us understand the correlation length exponent $\nu$. Let us define the RG time $t = t_0 + \log \Lambda$ which keeps track of the exponentially large physical time elapsed during the block combination RG moves. At $t = 0$, we initialize the RG very close to the critical separatrix and with $x, f \ll 1$. In the two dimensional space $(x,f)$, the separatrix near the fixed point $(0,0)$ can be approximated by a curve $f(x) = x^{\beta}$ where $\beta$ is an undetermined coefficient. To define the correlation length scaling, we slightly perturb away from the separatrix so that $f(x_0) = x_0^{\beta+\delta_0}$ for some small $\delta_0 > 0$ and ask at what RG length scale $\Lambda$ does $\delta(t)$ grow to an $\mathcal{O}(1)$ number.  Within this framework, $\nu = \infty$ indicates the failure of a standard scaling ansatz $\Lambda(\delta_0) = \delta_0^{-\nu}$. For example, the Kosterlitz-Thouless transition obeys $\Lambda(\delta_0) \sim e^{b\delta_0^{-1/2}}$ and the uncorrelated MHI transition obeys $\Lambda(\delta_0) \sim \delta_0^{-\log \log \delta_0^{-1}}$. In both cases, $\Lambda(\delta_0)$ grows faster than any power law in $\delta_0$, invaliding the hypothesis of finite $\nu$. 

Now suppose we start on the separatrix of the correlated MHI RG and slightly increase the value of $x$ to stay in the $x \ll 1$ limit while moving below the critical separatrix. The new starting point can be regarded as a small perturbation to the correlated RG $f_0 = x_0^{\beta+\delta_0}$ where $\delta_0$ depends smoothly on the shift of $x$. Note that this shift doesn't change the structure of fluctuations in the T/I-blocks and the separatrix itself doesn't shift. But since positive correlations enhance coherent fluctuations, it should be more difficult for the RG flow to bring the system off criticality and the correlation length should diverge faster with $\delta_0$. Writing the correlation length exponent $\nu(w)$ as a function of the wandering exponent $w$, we then expect $\nu(w > 1/2) \geq \nu(w = 1/2)$. In the $x, f\ll 1$ regime, the uncorrelated MHI RG already has $\nu = \infty$. Hence we expect $\nu = \infty$ for positive correlations as well. 

To illustrate this general picture, we can analyze the fractal inclusions that drive the critical fluctuations more carefully. Each time a rare I-block at cutoff gets absorbed by the nearest T-blocks, the new T-block that forms has length $\mathcal{O}(\frac{\Lambda}{x})$. Slightly off criticality and in the MBL side of the separatrix, these are the dominant processes that prevent T-blocks from completely vanishing. Hence we can basically run the RG in discrete steps, where the cutoff gets moved from $\Lambda \rightarrow \frac{\Lambda}{x}$ in each step. If we denote the number of such discrete RG steps by $n$, then $\frac{\Delta t}{\Delta n} = \log x^{-1}$. Along the separatrix, due to critical slowing down, $x(t)$ is an inverse power law in $t$ and $\log x^{-1} \sim \log t$. Therefore, upon integration, the total RG time $T$ is related to the total number of discrete RG steps by $T \sim N \log N$ up to subleading corrections. Now we start with a small deviation $\delta_0$ from the separatrix and suppose that $\delta(t)$ becomes $\mathcal{O}(1)$ when $t = T(\delta_0)$. Using the definition of $\nu$, we then conclude that
\begin{equation}
    \Lambda(\delta_0) \sim e^{T(\delta_0)} \sim e^{N(\delta_0) \log N(\delta_0)} \sim \delta_0^{-\nu} = e^{\nu \log \delta_0^{-1}}.
\end{equation}
If $\nu < \infty$, the above equation implies $\delta(n) \sim e^{n \log n}$, which is faster than the exponential growth $\delta(n) \sim e^n$ seen in the uncorrelated MHI RG. In order to have such a super-exponential growth, the fractal thermal inclusions controlling the transition would have to be easier to suppress in the system with positive correlation than in an uncorrelated system. This is opposite from the physical intuition that positive correlations enhance coherent fluctuations (i.e. in this case the coherent fluctuations are just the random production of larger and larger fractal T-blocks). As a result, $\nu = \infty$ continues to hold for positive correlations and property (2) is established. One caveat of the above reasoning is that positive correlations modify the location of the separatrix and a direct comparison of the scaling ansatzes for uncorrelated and correlated RGs is not strictly justified because $\delta_0$'s are defined with different reference points. We will address these and other subtle issues in the next section.

\subsection{Precise argument for correlation-length scaling based on the hierarchy of flow equations}\label{subsec:stability_pos3}

The preceding intuitive discussion explains why positive correlations are irrelevant on the level of typical fractal block structures and correlation length exponent $\nu$. However, even if these results are true, there is no reason to expect that the universality class of the transition also remains unmodified. As we have seen, the correlation length scaling is determined by the rate at which a small perturbation $\delta_0$ away from the critical separatrix increases with the number of fractal steps $n$. Since positive correlations are expected to suppress the growth of $\delta_0$ with $n$, it is in principle possible that they modify the uncorrelated MHI scaling $\delta(n) \sim e^n$ to the KT scaling $\delta(n) \sim \log n$ observed in earlier RG studies\cite{Goremykina_Vasseur_Serbyn_2019, Dumitrescu_Goremykina_Parameswaran_Serbyn_Vasseur_2019}. In fact, as we will see, when connected correlators $C^{TI}_{\Lambda,c}(l, d)$ and their higher-order analogues are sufficiently large, the asymptotic flow equations projected to the two-parameter space will indeed have a different structure. Nevertheless, under suitable assumptions that are supported by analytics and numerics, the change in flow equations leaves the exponential scaling of $\delta(n)$ invariant, thereby confirming property (3). 

We begin our analysis by introducing some notations. As previewed in Sec.~\ref{subsec:overview}, for a general correlated initial distribution, the joint probability distribution $P_{\Lambda}(\vec l^T, \vec d)$ does not factorize into single-block marginal distributions. However, since the RG rules are local in space, $C^{(n)}_{\Lambda}(l_1,d_1,\ldots)$ (the probability density that a contiguous chain of n blocks have lengths $(l_1,d_1,\ldots)$ when the cutoff is at $\Lambda$) would only be sensitive to correlations between the chain and the closest two blocks to the left and right of the chain. As shown in Appendix~\ref{sec:AppendixA}, formalizing this intuition leads to an infinite hierarchy of equations that structurally resemble the BBGKY hierarchy in classical statistical mechanics 
\begin{equation}
    \partial_{\Lambda} C^{(n)}_{\Lambda}(l_1,d_1,\ldots) = F_{\text{depl}}[C_{\Lambda}^{(n+2)}] + F_{\text{prod}}[C_{\Lambda}^{(n+2)}] \,.
\end{equation}
Physically, the depletion term $F_{\text{depl}}$ accounts for RG moves where a chain of n blocks with length $(l_1,d_1,\ldots)$ at cutoff $\Lambda$ is no longer present at cutoff $\Lambda + d \Lambda$ due to decimations that modify one of the block lengths. In contrast the production term $F_{\text{prod}}$ encodes RG moves that create a new chain of n blocks with lengths $(l_1,d_1,\ldots)$ not present at cutoff $\Lambda$.

While this infinite hierarchy of equations is difficult to solve, a lot of progress can be made by concentrating on the few-body correlations. Following the convention of Ref.~\onlinecite{Morningstar_Huse_Imbrie_2020}, we first consider the marginal distributions $\rho^T_{\Lambda}(l), \mu^I_{\Lambda}(d)$ of single-block lengths obtained from integrating out all but one of the blocks in the full joint probability distribution $P(\vec l^T, \vec d)$ (by translation invariance the choice of blocks to integrate over doesn't matter). From Appendix~\ref{sec:AppendixA} we quote the following flow equations
\begin{align}
    \partial_{\Lambda} \rho^T_{\Lambda}(l) &= \rho^T_{\Lambda}(l)\left[\mu^I_{\Lambda}(\Lambda) + \rho^T_{\Lambda}(\Lambda)\right] - 2C^{TI}_{\Lambda}(l, \Lambda) \notag \\
    &+ \int_{\Lambda}^{l - \frac{\Lambda}{x} - \Lambda} C^{TIT}_{\Lambda}(l_1, \Lambda, l - \frac{\Lambda}{x} - l_1) d l_1 \label{eq:rhoT_flow} \,,\\
    \partial_{\Lambda} \mu^I_{\Lambda}(d) &= \mu^I_{\Lambda}(d)\left[\mu^I_{\Lambda}(\Lambda) + \notag \rho^T_{\Lambda}(\Lambda)\right] - 2C^{TI}_{\Lambda}(\Lambda, d) \\
    &+ \int_{\Lambda}^{d} C^{ITI}_{\Lambda}(s, \Lambda, d + \Lambda - s) d s \label{muI_flow} \,.
\end{align}
These equations differ from MHI due to the appearance of two-block and three-block joint distributions on the RHS. In the absence of correlations, all joint distributions factorize into products of marginals and we recover a closed set of PDEs for $\rho^T_{\Lambda}(l), \mu^I_{\Lambda}(d)$ as in MHI. Once we turn on correlations, the equations no longer close, and we need to do more work. Fortunately, from Sec.~\ref{subsec:stability_pos1}, we know that the phase transition is controlled by a competition between the tendency of T-blocks to proliferate and the presence of an excess decay rate $x > 0$ that protects l-bits in the I-block. This competition can be understood by projecting the infinite-dimensional flow to a two-dimensional subspace $(x,f)$ as long as there is no additional relevant RG direction. We verify this by plotting the numerical RG flow lines for different initial conditions and showing that they have no crossing down to the largest length scales. An example of this numerical check is shown in Fig.~\ref{fig:MH_poscorr_flowlines} for $w = 0.75$ and initial system size $L = 4 \cdot 10^6$.
\begin{figure}
    \centering
    \includegraphics[width = 0.48\textwidth]{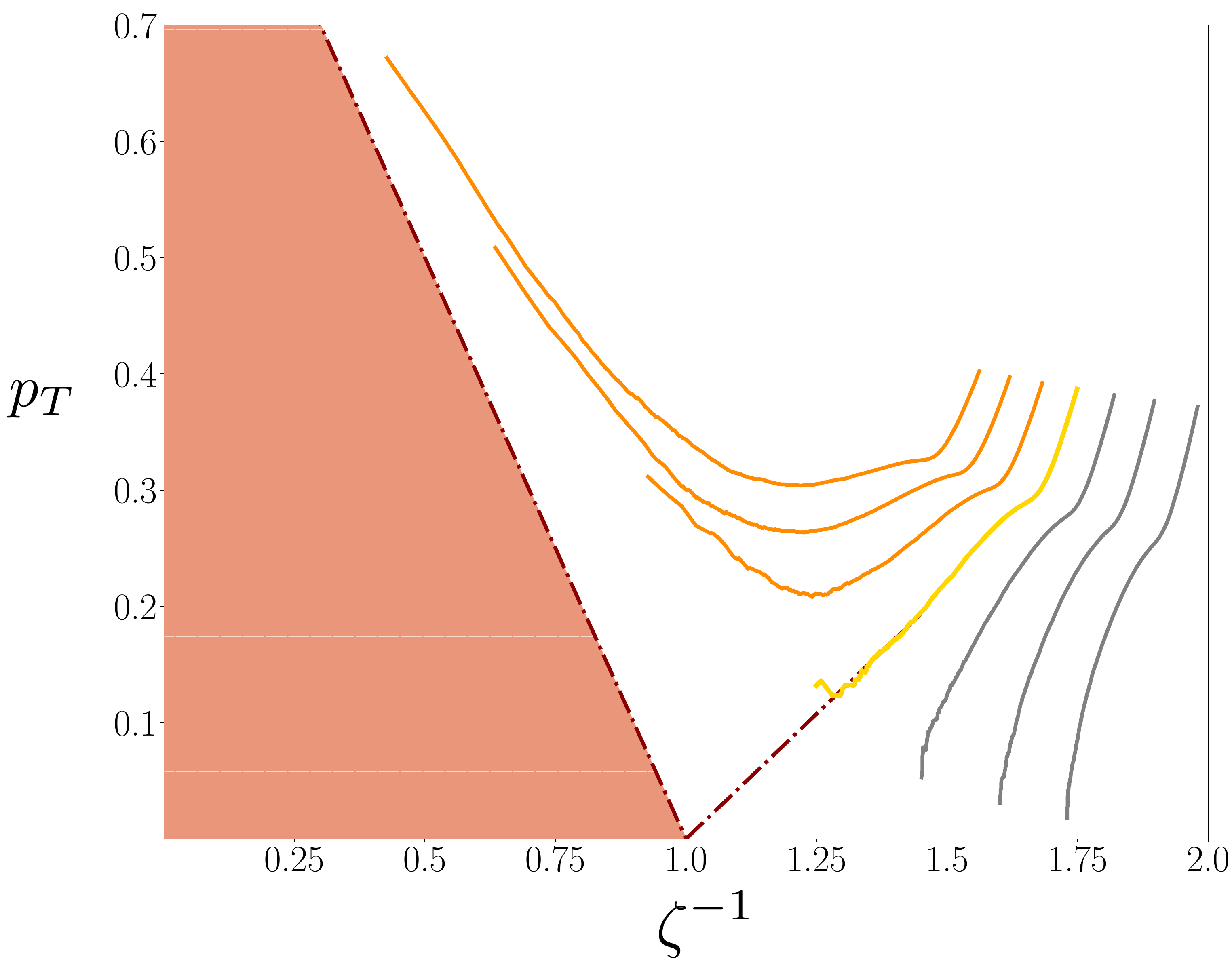}
    \caption{{\bf Numerical RG flow.} Numerical RG flow lines of MHI RG with $w = 0.75$ and initial system size $L = 4 \cdot 10^6$. The different flows correspond to initial values of $\zeta$ between $0.245 \sim 0.305$ in steps of $0.01$. The yellow critical flow separates the orange/grey flows which land in the T/I phase respectively. The ultra-thermal shaded region is inaccessible given the finite density of l-bits to start with.}
    \label{fig:MH_poscorr_flowlines}
\end{figure}

Since the thermal fraction $f$ is awkward to work with for technical reasons, we introduce a rescaled variable $y$ defined as follows:
\begin{equation}
    x = \frac{\ev{d}}{\ev{l^I}} \quad y = \Lambda^2 r_{\Lambda}(\Lambda) \quad r_{\Lambda}(l) = \frac{x}{\ev{d}} \rho^T_{\Lambda}(l) \,.
\end{equation}
In the insulating phase, 
\begin{equation}
    f = \frac{\ev{l^T}}{\ev{l^I} + \ev{l^T}} \approx \frac{\ev{l^T}}{\ev{l^I}} = \frac{x}{\ev{d}} \int_{\Lambda}^{\infty} l \rho^T_{\Lambda}(l) d l \,. 
\end{equation}
Along the separatrix, we will later show that $\Lambda^2 \rho^T_{\Lambda}(\Lambda) \approx \ev{l^T}$, implying the asymptotic equivalence between $f$ and $y$ at large $\Lambda$. Below the separatrix, $\ev{d}$ grows as a stretched exponential in $\Lambda$ and $y, f$ both tend to zero. Therefore, one can loosely think about $y$ as a \textbf{proxy for the thermal fraction}. These and other notations are summarized in Table~\ref{tab:notation} for convenient reference.

Some qualitative features of this projected RG flow are now transparent. In the insulating phase, $f \rightarrow 0$ and $y \rightarrow 0$ as I-blocks dominate over T-blocks. A finite excess decay rate $x$ persists to infinite $\Lambda$ and we land somewhere on the MBL fixed line $x > 0, y = 0$; in the thermal phase, $f \rightarrow 1$ and $y \rightarrow \infty$ as $\ev{l^T} \gg \ev{L^I}$. Therefore, we expect a \textbf{critical separatrix} in $(x,y)$ marking a phase transition. To study this separatrix and the critical scaling close to it, we need to derive flow equations for $x$ and $y$. The flow equation for $x$ requires only a single lemma:
\begin{lemma}\label{lemma:1pt_function_flow}
    The form of the flow equations for marginal expectation values $\ev{l^T}, \ev{L^I}, \ev{d}$ are unmodified by correlations.
\end{lemma}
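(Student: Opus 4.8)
\emph{Proof strategy.} The plan is to extract the flow of each marginal mean by taking the appropriate first moment of the marginal flow equations~\eqref{eq:rhoT_flow} and~\eqref{muI_flow}, and to show that every contribution built out of a genuine two- or three-block correlator cancels, so that the surviving right-hand side depends only on single-block data $\rho^T_\Lambda(\Lambda)$, $\mu^I_\Lambda(\Lambda)$, $\ev{l^T}$, $\ev{d}$ and $x$ --- exactly the combination that already appears in the uncorrelated MHI analysis, so that the form of the equation cannot change.

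Concretely, for $\ev{l^T}$ I would multiply~\eqref{eq:rhoT_flow} by $l$ and integrate over $l\in[\Lambda,\infty)$. By Leibniz's rule the left side becomes $\tfrac{d}{d\Lambda}\ev{l^T}+\Lambda\rho^T_\Lambda(\Lambda)$; the normalization-type term integrates to $\ev{l^T}\big[\mu^I_\Lambda(\Lambda)+\rho^T_\Lambda(\Lambda)\big]$, which is correlator-free. The nontrivial step is the $TIT\to T$ production term: after the substitution $l=l_1+\tfrac{\Lambda}{x}+l_2$ it becomes $\iint_{l_1,l_2\ge\Lambda}\big(l_1+\tfrac{\Lambda}{x}+l_2\big)\,C^{TIT}_\Lambda(l_1,\Lambda,l_2)\,dl_1\,dl_2$. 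Using the marginalization identities $\int_\Lambda^\infty C^{TIT}_\Lambda(l_1,\Lambda,l_2)\,dl_2=C^{TI}_\Lambda(l_1,\Lambda)$ and $\iint_{l_1,l_2\ge\Lambda} C^{TIT}_\Lambda(l_1,\Lambda,l_2)\,dl_1\,dl_2=\mu^I_\Lambda(\Lambda)$ --- which hold exactly, since these multi-block densities are reductions of the full $P_\Lambda$ --- the $l_1$ and $l_2$ pieces produce $2\int_\Lambda^\infty l\,C^{TI}_\Lambda(l,\Lambda)\,dl$, canceling the depletion term $-2\int_\Lambda^\infty l\,C^{TI}_\Lambda(l,\Lambda)\,dl$ identically, while the $\tfrac{\Lambda}{x}$ piece leaves $\tfrac{\Lambda}{x}\mu^I_\Lambda(\Lambda)$. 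Collecting terms gives $\tfrac{d}{d\Lambda}\ev{l^T}=\ev{l^T}\big[\mu^I_\Lambda(\Lambda)+\rho^T_\Lambda(\Lambda)\big]+\tfrac{\Lambda}{x}\mu^I_\Lambda(\Lambda)-\Lambda\rho^T_\Lambda(\Lambda)$, which is the MHI expression. The same manipulation on~\eqref{muI_flow} --- multiply by $d$, substitute $d=s+t-\Lambda$ in the $ITI\to I$ convolution, and use $\int_\Lambda^\infty C^{ITI}_\Lambda(s,\Lambda,t)\,dt=C^{TI}_\Lambda(\Lambda,s)$ and $\iint C^{ITI}_\Lambda(s,\Lambda,t)\,ds\,dt=\rho^T_\Lambda(\Lambda)$ --- again cancels the $-2C^{TI}_\Lambda(\Lambda,d)$ depletion term against the ``$s+t$'' part of the production term, yielding $\tfrac{d}{d\Lambda}\ev{d}=(\ev{d}-\Lambda)\big[\mu^I_\Lambda(\Lambda)+\rho^T_\Lambda(\Lambda)\big]$, once more independent of correlations. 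Since $\ev{L^I}=\ev{d}/x$, its flow follows from those of $\ev{d}$ and $x$ (equivalently, from the identical moment computation on the physical-length marginal).

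The reason these cancellations are automatic is that physical length is strictly additive when blocks merge; the only non-additive ingredients of the RG rules are the fixed cutoff-scale offsets --- the whole physical length $\Lambda/x$ of an I-block swallowed in a $TIT\to T$ move, and the $-\Lambda$ in the deficit update of an $ITI\to I$ move --- and these are $c$-numbers set by $\Lambda$ (and, for the first, by the self-averaged $x$), not by the spatial correlations of the lengths, while the rates of the two move types are likewise fixed by $\mu^I_\Lambda(\Lambda)$ and $\rho^T_\Lambda(\Lambda)$ alone. One can make this even more transparent by skipping the PDEs and bookkeeping $\tfrac{d}{d\Lambda}\sum l^T$, $\tfrac{d}{d\Lambda}\sum d$ and $\tfrac{d}{d\Lambda}N_\Lambda$ over a single infinitesimal RG step; the offsets and decimation rates entering this count are manifestly correlation-blind, and the same flow equations drop out.

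I expect the real obstacle to lie not in this algebra but in Step~3 of the RG recipe (Sec.~\ref{subsec:RG_rule}), the rehomogenization $d_i\mapsto x\,l^I_i$ with $x=\ev{x_i}$: this preserves $\ev{d}$ exactly only when $x_i$ and $l^I_i$ are uncorrelated, and in general a residual covariance survives; the same issue affects the use of a single $x$ in place of the distribution $\{x_i\}$ in the $\Lambda/x$ offset. The resolution is to invoke the self-averaging of $x$ established in Sec.~\ref{subsec:RG_rule} and Sec.~\ref{subsec:stability_pos1}: along the separatrix and throughout the MBL phase the asymmetry $\ev{L^I}\gg\ev{d}\gg\ev{l^T}$ forces the only $x_i$-changing move to interpolate $x_{\rm new}$ between its neighbors' values, so inhomogeneities in $\{x_i\}$ are irrelevant and Step~3 shifts $\ev{d}$ only by corrections subleading in $\Lambda^{-1}$. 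I would also take care with two secondary points that could spuriously reintroduce correlator dependence if handled carelessly: the boundary terms generated by differentiating the lower integration limit $\Lambda$ (already included above), and the precise status of the marginalization identities for the correlated joint densities (they are exact, being partial integrations of $P_\Lambda$, so no approximation enters at that step).
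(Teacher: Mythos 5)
Your proposal is correct and follows essentially the same route as the paper's Appendix~A proof: take the first moment of the single-block marginal flow equations, handle the moving lower limit with Leibniz's rule, and use the exact marginalization identities for $C^{TI}_\Lambda$, $C^{TIT}_\Lambda$, $C^{ITI}_\Lambda$ to cancel the correlator-dependent depletion and production contributions (the paper carries this out explicitly for $\ev{d}$ and notes the other two are structurally identical). The only caution is that deriving the $\ev{L^I}$ flow ``from those of $\ev{d}$ and $x$'' would be circular, since the paper obtains the $x$ flow from the $\ev{d}$ and $\ev{l^I}$ flows --- but your parenthetical fallback (the direct moment computation on the physical-length marginal) is exactly what is needed.
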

The proof of this lemma involves a tedious calculation which we include in Appendix~\ref{sec:AppendixA}. Here, we will merely quote the results:
\begin{align}
    \frac{d \ev{L^I}}{d\Lambda} &= \mu^I_{\Lambda}(\Lambda)\left[\ev{L^I} - \frac{\Lambda}{x}\right] + \rho^T_{\Lambda}(\Lambda) \left[\ev{L^I} + \Lambda\right]\label{eq:LI_flow} \,,\\
    \frac{d \ev{l^T}}{d\Lambda} &= \rho^T_{\Lambda}(\Lambda)\left[\ev{l^T} - \Lambda\right] + \mu^I_{\Lambda}(\Lambda) \left[\ev{l^T} + \frac{\Lambda}{x}\right] \label{eq:LT_flow}\,,\\
    \frac{d \ev{d}}{d\Lambda} &= \left[\mu^I_{\Lambda}(\Lambda) + \rho^T_{\Lambda}(\Lambda)\right](\ev{d} - \Lambda) \label{eq:d_flow} \,.
\end{align}
It is important to remark that this is a \textbf{not} a closed set of equations for the averages $\ev{L^I}, \ev{l^T}, \ev{d}$ because the RHS involves the marginal distributions. Therefore, although these equations are formally equivalent to those in MHI, they are sensitive to correlations through the flow of single-block marginals on the RHS. 

\begin{widetext}
\renewcommand\arraystretch{2}
\begin{longtable}{|P{1.5cm}|P{2.4cm}|P{3cm}|P{2.8cm}|P{2.2cm}|P{2.2cm}|P{2cm}|}
\caption{Notation for important variables}
\label{tab:notation}\\
\hline
\textbf{T-block length} & \textbf{I-block length, deficit length} & \textbf{Excess interaction decay rate} & \textbf{Proxy for thermal fraction} & \textbf{Marginal distribution} & \textbf{Joint distribution} & \textbf{Fractal dimension} \\
\hline
$l^T$ or $l$ & $l^I, d$ & $x = \frac{\ev{d}}{\ev{l^I}} = \zeta^{-1} - 1$ & $y = \Lambda^2 r_{\Lambda}(\Lambda)$ & $\rho^T_{\Lambda}(l), \mu^I_{\Lambda}(d)$ & $C^{TI\ldots T}_{\Lambda}(l_1,\ldots)$ & $d_f$ \\
\hline
\end{longtable}
\end{widetext}

With this difference in mind, we proceed to work out the flow equations for $x$ and $y$. The flow equation for $x$ follows easily from the flow equations for expectation values derived above:
\begin{equation}\label{eq:x_flow}
    \begin{aligned}
    \frac{dx}{d \Lambda} &= \frac{1}{\ev{l^I}} \frac{d \ev{d}}{d\Lambda} - \frac{\ev{d}}{\ev{l^I}^2} \frac{d \ev{l^I}}{d\Lambda} \\
    &= \frac{1}{\ev{l^I}} \left[\mu^I_{\Lambda}(\Lambda) + \rho^T_{\Lambda}(\Lambda)\right] \left(\ev{d} - \Lambda\right) \\
    &- \frac{\ev{d}}{\ev{l^I}^2} \big(\left[\ev{l^I} - \Lambda/x\right] \mu^I_{\Lambda}(\Lambda) + \left(\ev{l^I} + \Lambda\right) \rho^T_{\Lambda}(\Lambda)\big) \\
    &= \frac{-\Lambda (1+x) x \rho^T_{\Lambda}(\Lambda)}{\ev{d}} = - \frac{(1+x) y}{\Lambda} \,.
    \end{aligned}
\end{equation}
The flow of $y = \Lambda^2 r_{\Lambda}(\Lambda)$ will follow from the flow of $r_{\Lambda}(l)$, which is simple to derive using the flow of $\rho^T_{\Lambda}(l), \ev{d}$ in \eq{eq:rhoT_flow}, \eq{eq:d_flow}(see Appendix~\ref{sec:AppendixB} for details):
\begin{equation}\label{eq:rLambda_l_flow}
    \begin{aligned}
        \partial_{\Lambda} r_{\Lambda}(l) &= \bigg(- \frac{y}{\Lambda} + \frac{\Lambda \mu^I_{\Lambda}(\Lambda)}{\ev{d}} - \frac{2 C^{TI}_{\Lambda}(l,\Lambda)}{\rho^T_{\Lambda}(l)} \bigg) r_{\Lambda}(l) \\
        &+ \frac{x}{\ev{d}} \int_{\Lambda}^{l - \Lambda(1+x^{-1})} C^{TIT}_{\Lambda}\big(l_1,\Lambda, l-\Lambda/x - l_1\big) d l_1  \,.
    \end{aligned}
\end{equation}
In the absence of correlations, MHI was able to integrate this flow and obtain a recursion relation that estimates $r_{\Lambda}(\Lambda/x)$ based on knowledge of $r_{\Lambda}(\Lambda)$. This recursion, combined with the flow of $x$, then gives a complete understanding of the critical separatrix and small perturbations around it. Crucially, this recursion relies again on the fact that $C^{TIT}_{\Lambda}$ factorizes into a product of marginals so that the LHS and RHS can be related to the marginals evaluated at different RG scales. In the presence of correlations, factorization is no longer possible and a recursion of $r_{\Lambda}$ requires a different argument which we now summarize.

First we make the general decomposition $C_{\Lambda} = C_{\Lambda, \text{disc}} + C_{\Lambda, c}$ where $C_{\Lambda, \text{disc}}$ is a product of marginal distributions and $C_{\Lambda,c}$ is the connected part which vanishes in the absence of correlations. Under \textbf{three fundamental assumptions}, we will show that the first term on the RHS of \eq{eq:rLambda_l_flow} is negligible when we integrate 
from $\Lambda = x^2 l$ and $\Lambda = xl$, \emph{even when the connected correlators $C_{\Lambda,c}$ are larger than $C_{disc}$}. Within the same integration range and using the same assumptions on $C_{\Lambda,c}$, we then argue that for positive correlations, although the integrand cannot be factorized, the full integral nonetheless reduces to the factorized answer (an intuitive justification of this fact will be provided along the way). These arguments would produce a recursion in $r_{\Lambda}$ that differs from the MHI answer. In the final step we show that the modified recursion can lead to a shifted stretching exponent $\epsilon$ inside the MBL phase but cannot change the fractal dimension $d_f$ or the correlation length scaling, providing a precise extension of the results in Sec.~\ref{subsec:stability_pos1}. Throughout the argument, we will state various technical lemmas and provide some intuition. But the proofs for these lemmas are relegated to Appendix~\ref{sec:AppendixB}. 

The first assumption we make is a generic property of systems driven to criticality:

\textit{Assumption 1: \textbf{Critical slowing down} holds along the separatrix so that $x \sim t^{-\alpha}$ for some positive exponent $\alpha$ where $t = \log \Lambda$ is the RG time. In the uncorrelated MHI RG, $\alpha = 1$. Here we only assume that $\alpha$ is finite. }

This is physically very reasonable because fluctuations become more and more macroscopic near the critical point at $(x,y) = (0,0)$ and relaxation of these macroscopic regions to equilibrium takes longer and longer. In the language of renormalization group, the fastest term in the $\beta$-function (which gives rise to exponentially fast growth/decay) vanishes along the critical separatrix and the subleading terms take over to give a power law behavior. Surprisingly, critical slowing down alone already provides a powerful constraint:
\begin{lemma}\label{lemma:1pt_function_crit}
    Under assumption 1, $\ev{d} \sim \Lambda \log \Lambda \gg \Lambda$, $\mu^I_{\Lambda}(\Lambda) < x\rho^T_{\Lambda}(\Lambda)$, $\rho^T_{\Lambda}(\Lambda) \approx \frac{1}{\Lambda} + \mathcal{O}(\frac{1}{\Lambda \rm{Poly}(\log \Lambda)})$ in the large $\Lambda$ limit along the critical separatrix. Below the critical separatrix (in the MBL phase), $\ev{d} \gg \Lambda$ and $\mu^I_{\Lambda}(\Lambda) < x \rho^T_{\Lambda}(\Lambda)$ still hold. 
\end{lemma}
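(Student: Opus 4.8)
The plan is to squeeze all three statements out of the \emph{exact} one-point flow equations \eqref{eq:LI_flow}--\eqref{eq:d_flow} of Lemma~\ref{lemma:1pt_function_flow}, the $x$-flow \eqref{eq:x_flow}, the definition $y=\Lambda^{2}r_{\Lambda}(\Lambda)$, and the correlator bounds of Sec.~\ref{subsec:stability_pos1} (which license replacing $C_{\Lambda}$ by its disconnected part $C_{\Lambda,\mathrm{disc}}$ inside the flow equations), together with the ``$I$-dominated'' structure of the MBL phase and of the separatrix ($\langle d\rangle\gg\langle l^{T}\rangle$, $f\to0$) established there. I would prove the claims in the order (i) $\mu^{I}_{\Lambda}(\Lambda)<x\,\rho^{T}_{\Lambda}(\Lambda)$, (ii) $\langle d\rangle\gg\Lambda$, and (iii) the sharp separatrix asymptotics $\langle d\rangle\sim\Lambda\log\Lambda$ and $\rho^{T}_{\Lambda}(\Lambda)\approx1/\Lambda$; only step (iii) uses Assumption~1.

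For (i), note that the RG rules conserve the total physical length and that every decimation eliminates exactly one $T$- and one $I$-block, so $dN_{\Lambda}/d\Lambda=-N_{\Lambda}[\mu^{I}_{\Lambda}(\Lambda)+\rho^{T}_{\Lambda}(\Lambda)]$; combining this with \eqref{eq:LT_flow} the cross-terms cancel and one is left with the balance law
\[
\frac{d}{d\Lambda}\big(N_{\Lambda}\langle l^{T}\rangle\big)=N_{\Lambda}\,\Lambda\left(\frac{\mu^{I}_{\Lambda}(\Lambda)}{x}-\rho^{T}_{\Lambda}(\Lambda)\right).
\]
The left-hand side is the total thermal length, which is $f$ times the conserved total length; since $f$ decreases monotonically to zero in the MBL phase and along the separatrix (the latter terminating in the localized phase, consistent with Fig.~\ref{fig:MH_poscorr_flowlines}), the total thermal length is eventually strictly decreasing, which forces $\mu^{I}_{\Lambda}(\Lambda)/x<\rho^{T}_{\Lambda}(\Lambda)$. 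A byproduct used repeatedly below is $\mu^{I}_{\Lambda}(\Lambda)+\rho^{T}_{\Lambda}(\Lambda)=\rho^{T}_{\Lambda}(\Lambda)(1+O(x))$. Claim (ii) is then immediate: every $T$-block length is at least the cutoff, so $\langle l^{T}\rangle\ge\Lambda$, and $I$-dominance ($\langle d\rangle\gg\langle l^{T}\rangle$) from Sec.~\ref{subsec:stability_pos1} gives $\langle d\rangle\gg\Lambda$ in the MBL phase and on the separatrix.

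For (iii), pass to the RG time $t=\log\Lambda$ and restrict to the separatrix. By (i)--(ii) the deficit flow \eqref{eq:d_flow} collapses to $d\log\langle d\rangle/dt\approx\Lambda\rho^{T}_{\Lambda}(\Lambda)$; the definition of $y$ reads $y=x\Lambda^{2}\rho^{T}_{\Lambda}(\Lambda)/\langle d\rangle$; and \eqref{eq:x_flow} reads $\dot x\approx-y$. Assumption~1, $x\sim t^{-\alpha}$, gives $y\sim-\dot x\sim\alpha t^{-\alpha-1}$ and hence $y/x\sim\alpha/t\to0$, i.e.\ $\Lambda^{2}\rho^{T}_{\Lambda}(\Lambda)/\langle d\rangle\approx\alpha/t$. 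Substituting into the deficit flow closes everything into a single first-order ODE for $D=\log\langle d\rangle$; writing $\psi=D-t$ it reads
\[
\dot\psi=\frac{\alpha}{t}\,e^{\psi}-1+O(1/t).
\]
This equation has the slow solution $\psi^{*}(t)=\log(t/\alpha)$, corresponding to $\langle d\rangle\sim\Lambda\log\Lambda/\alpha$ and satisfying $\Lambda\ll\langle d\rangle\ll\Lambda^{2}$; solutions above $\psi^{*}$ run up and overshoot $\langle d\rangle\ll\Lambda^{2}$, solutions below violate $\langle d\rangle\gg\Lambda$ from (ii), so the RG must track $\psi^{*}$, giving $\langle d\rangle\sim\Lambda\log\Lambda$. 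Finally $\rho^{T}_{\Lambda}(\Lambda)=(y/x)\langle d\rangle/\Lambda^{2}=(\alpha/t)(\langle d\rangle/\Lambda^{2})\to1/\Lambda$, and carrying the subleading piece of $\psi$ (which solves $\dot\epsilon=\epsilon-1/t$, so $\epsilon\sim1/t$) yields $\rho^{T}_{\Lambda}(\Lambda)=\tfrac1\Lambda\big(1+O(1/\log\Lambda)\big)$, matching the stated $O\big(1/(\Lambda\,\mathrm{Poly}(\log\Lambda))\big)$ correction; $\mu^{I}_{\Lambda}(\Lambda)<x\rho^{T}_{\Lambda}(\Lambda)$ was already obtained in (i).

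The real obstacle I anticipate is uniform control of the connected correlators: steps (i) and (iii) quietly replace $C_{\Lambda}$ by $C_{\Lambda,\mathrm{disc}}$ in Lemma~\ref{lemma:1pt_function_flow} and in the $r_{\Lambda}$-flow \eqref{eq:rLambda_l_flow}, which rests on the pointwise estimate $C_{\Lambda,c}\ll C_{\mathrm{disc}}$ that Sec.~\ref{subsec:stability_pos1} supports only heuristically and numerically; upgrading this to a bound uniform in $\Lambda$ is the crux, and is where I would concentrate effort using the hierarchy of Appendices~\ref{sec:AppendixA} and \ref{sec:AppendixB}. Two softer loose ends are the reliance on monotone decay of $f$ and on $\langle d\rangle\gg\langle l^{T}\rangle$ from Sec.~\ref{subsec:stability_pos1}, and the a priori upper bound $\langle d\rangle\ll\Lambda^{2}$ needed to select the correct branch of the ODE in (iii); I expect both can be supplied from the functional analysis with some additional bookkeeping.
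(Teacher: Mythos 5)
Your overall strategy---squeezing everything out of the exact one-point flow equations together with Assumption~1---is the same as the paper's (Appendix~\ref{sec:AppendixB}, proof of Lemma~\ref{lemma1:appendix}), and your part~(i) is literally the paper's argument in different clothing: since total physical length is conserved, monotone decrease of $N_\Lambda \ev{l^T}$ is the same statement as the paper's $\frac{d\log f}{d\Lambda} = \frac{\Lambda}{\ev{l^T}}\big[\mu^I_\Lambda(\Lambda)/x - \rho^T_\Lambda(\Lambda)\big] < 0$. Two remarks before the substantive point. First, your closing worry about connected correlators is moot for this lemma: the flow equations \eq{eq:LI_flow}--\eq{eq:d_flow} and \eq{eq:x_flow} contain only $\rho^T_\Lambda(\Lambda)$ and $\mu^I_\Lambda(\Lambda)$ on their right-hand sides and hold \emph{exactly} in the presence of correlations---that is precisely the content of Lemma~\ref{lemma:1pt_function_flow}---so no replacement of $C_\Lambda$ by $C_{\Lambda,\mathrm{disc}}$ is needed anywhere here; Assumption~3 first enters in Lemma~\ref{lemma:r_lambda_const}. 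Second, for part~(ii) the paper does not import $\ev{d}\gg\ev{l^T}$ from Sec.~\ref{subsec:stability_pos1}; it derives $\ev{d}\gg\Lambda$ by contradiction: if $\ev{d}\sim k\Lambda$ for finite $k$, the $\ev{d}$-flow forces $\rho^T_\Lambda(\Lambda)+\mu^I_\Lambda(\Lambda)\approx\frac{k}{k-1}\Lambda^{-1}$, hence $x$ decays as a power of $\Lambda$ (exponentially in $t=\log\Lambda$), violating Assumption~1. Your shortcut is logically fine but leans on a stronger heuristic input than necessary.

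The genuinely different piece is part~(iii), where you reduce everything to the scalar ODE $\dot\psi=\frac{\alpha}{t}e^\psi-1$ and select the slow branch $\psi^*=\log(t/\alpha)$. The calculation leading to $y/x=\alpha/t$ and to this ODE is correct, but the branch selection is where the gap sits: the slow solution is linearly \emph{unstable} ($\dot\epsilon\approx\epsilon-1/t$), so ``the RG must track $\psi^*$'' requires ruling out both runaway directions, and your exclusion of the upper branch hinges on an a priori bound $\ev{d}\ll\Lambda^2$ that you correctly flag as unproven. The paper avoids this entirely with a case analysis on $c=\lim_{\Lambda\to\infty}\Lambda\rho^T_\Lambda(\Lambda)$: if $c>1$ then $\ev{d}\sim e^{\int\rho^T_{\Lambda'}(\Lambda')\,d\Lambda'}\sim\Lambda^{c}$, so $\frac{d\log x}{d\log\Lambda}=-\Lambda^2\rho^T_\Lambda(\Lambda)/\ev{d}$ decays as a negative power of $\Lambda$ and $x$ fails to decay like $t^{-\alpha}$, contradicting Assumption~1; if $c<1$ then $\ev{d}=o(\Lambda)$, contradicting part~(ii). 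Hence $\Lambda\rho^T_\Lambda(\Lambda)\to1$, and your own relation $\Lambda^2\rho^T_\Lambda(\Lambda)/\ev{d}=\alpha/t$ then yields $\ev{d}\approx\Lambda t/\alpha\sim\Lambda\log\Lambda$ with no shooting argument at all. I would substitute this case analysis for your slow-manifold argument; it uses only inputs you already have and closes both of the loose ends you list.
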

A direct implication of this lemma is the slow decay of $x$ with $\Lambda$ near the critical separatrix. To see this, we recall the general flow equation \eq{eq:x_flow} for $x$ 
\begin{equation}
    \frac{dx}{d\Lambda} = - \Lambda (1+x)x \frac{\rho^T_{\Lambda}(\Lambda)}{\ev{d}} \approx - \frac{\Lambda \rho^T_{\Lambda}(\Lambda)}{\ev{d}}x \,.
\end{equation}
Along the separatrix, Lemma~\ref{lemma:1pt_function_crit} implies $\ev{d} \sim \Lambda \log \Lambda$ and $\rho^T_{\Lambda}(\Lambda)$, consistent with the logarithmically slow growth rate of $x$. Below the separatrix, $\rho^T_{\Lambda}(\Lambda)$ becomes larger than $\rho^T_{\Lambda, \text{crit}}(\Lambda)$ as T-blocks are more likely to be decimated. But $\ev{d}^{-1}$ decays exponentially fast in $\int^{\Lambda} \rho^T_{\Lambda'}(\Lambda') d \Lambda'$. Therefore, the decay of $\ev{d}^{-1}$ overwhelms the growth of $\rho^T_{\Lambda}(\Lambda)$ and $\frac{dx}{d\Lambda}$ also becomes much smaller. As a result, the assumption of slow change in $x$ can be justified everywhere outside the thermal phase, a property that we will use repeatedly later.

Despite its power, this lemma only gives information about the marginal distributions precisely evaluated at the cutoff. A more complete understanding of the RG flow requires two additional assumptions:

\textit{Assumption 2: Along and below the critical separatrix, as the I-blocks become much longer than the T-blocks, the distribution of deficit lengths $\mu^I_{\Lambda}(d)$ for I-blocks tends to become wider and flatter. Concretely, we will assume that $\mu^I_{\Lambda}(\Lambda)$ decreases with $\Lambda$ and the derivative $\partial_d \mu^I_{\Lambda}(d)$ evaluated at the cutoff $d = \Lambda$ goes to zero sufficiently rapidly as $\Lambda \rightarrow \infty$:
\begin{equation}
    - \partial_d \log \mu^I_{\Lambda}(d), - \partial_d \log C^{TIT}_{\Lambda}(l_1,d,l_2)\big|_{d=\Lambda} \ll \rho^T_{\Lambda,\text{crit}}(\Lambda) \,.
\end{equation}}
What appears on the RHS is the probability of having T-blocks at cutoff along the critical separatrix. In the uncorrelated RG, $\mu^I_{\Lambda}(d)$ is an exponential distribution and one can easily show that
\begin{equation}
    - \partial_d \log \mu^I_{\Lambda}(d)\big|_{d=\Lambda} \sim \mu^I_{\Lambda}(\Lambda) \,.
\end{equation}
When we turn on correlations, we allow the distribution to change in form, but we relax the right hand side to $\rho^T_{\Lambda,\text{crit}}(\Lambda)$. This should be regarded as a weak assumption because T-block decimation always dominates along the separatrix, implying $\mu^I_{\Lambda,\text{crit}}(\Lambda) \ll \rho^T_{\Lambda,\text{crit}}(\Lambda)$. Below the separatrix, I-blocks at cutoff become even rarer and the inequality becomes more strongly satisfied.

\textit{Assumption 3: The connected distributions of nearest neighbor $T$ and $I$ blocks are upper-bounded by a function that's much larger than the product of marginal distributions for $l \in [\Lambda, \Lambda/x]$:}
\begin{equation}
    \left|\frac{C^{TI}_{\Lambda,c}(l, \Lambda)}{\rho^T_{\Lambda}(l) \mu^I_{\Lambda}(\Lambda)}\right|, \left|\frac{C^{TITI}_{\Lambda,c}(l_1, \Lambda,l_2,\Lambda)}{C^{TIT}_{\Lambda}(l_1,\Lambda,l_2) \mu^I_{\Lambda}(\Lambda)}\right| \ll \frac{\rho^T_{\Lambda,\text{crit}}(\Lambda)}{\mu^I_{\Lambda}(\Lambda)} \,.
\end{equation}
As we have argued in Sec.~\ref{subsec:stability_pos1}, correlations between nearby T and I-blocks should be washed out in the sense that joint moments $\ev{l_i^n d_i^m} \approx \ev{l_i^n} \ev{d_i^m}$. However, \emph{convergence of moments do not imply pointwise convergence of probability distributions}. Therefore, although it is reasonable to expect that connected correlators are suppressed relative to the product of marginals, we do not have a proof. To maximize the robustness of our arguments, we work with the weaker assumption 3 which allows the connected correlators to be much larger than the product of marginals pointwise but much smaller than the product of marginals multiplied by $\frac{\rho^T_{\Lambda,\text{crit}}(\Lambda)}{\mu^I_{\Lambda}(\Lambda)}$. Along the critical separatrix, this additional multiplicative factor diverges as a power law in $t \equiv \log \Lambda$. Below the separatrix, it diverges even faster since I-blocks at cutoff become stretched-exponentially rare. 

\indent With these weakened assumptions, the asymptotic projected flow equations for $x,y$ could be significantly modified. Nevertheless, the correlation length scaling doesn't change. To show this, we continue the analysis of $r_{\Lambda}(l)$ in \eq{eq:rLambda_l_flow}. Recall that $r_{\Lambda}(l) = \frac{x}{\ev{d}} \rho^T_{\Lambda}(l)$. Since the flow of $x$ is slow, the flow of $r_{\Lambda}(l)$ for $\Lambda \in [xl, l]$ is controlled by the competition between the growth of $\rho^T_{\Lambda}(l)$ and $\ev{d}$ with $\Lambda$. Clearly, the growth of both quantities is due to a monotonic decrease in the total number of blocks $N_{\Lambda}$. But for $\rho^T_{\Lambda}(l)$, there is an additional mechanism that reduces $\rho^T_{\Lambda}(l)$. This comes from decimations of T-blocks with length $l$ when a rare I-block is at cutoff. The rate of these processes is $C^{TI}_{\Lambda}(l,\Lambda)/\rho^T_{\Lambda}(l)$. Thus as long as $C^{TI}_{\Lambda}(l,\Lambda)/\rho^T_{\Lambda}(l) \ll \rho^T_{\Lambda}(\Lambda)$, which is the content of assumption 3, this decreasing contribution will be negligible and $r_{\Lambda}(l)$ will remain approximately constant for $\Lambda \in [xl, l]$. A more precise version of this argument in Appendix~\ref{sec:AppendixB} then leads to the key lemma:
\begin{lemma}\label{lemma:r_lambda_const}
    Under assumptions 1 and 3, along the separatrix we have $r_{\Lambda}(l) \approx r_l(l)$ up to errors of $\mathcal{O}(\log x^{-1} x^c)$ for all $\Lambda \in (xl, l)$ where $c >0$ and $x = x_{\Lambda}$. Below the separatrix, the error is strictly smaller, approaching $\mathcal{O}\left(\frac{1}{\rm{Superpoly}(\Lambda)}\right)$ in the large $\Lambda$ limit. 
\end{lemma}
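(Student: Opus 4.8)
The plan is to integrate the flow equation \eqw{eq:rLambda_l_flow} for $r_{\Lambda}(l)$ over the window $\Lambda \in (xl,l)$, treating the value $r_l(l)$ at the upper endpoint $\Lambda = l$ as the reference, and to show that the net multiplicative change is tiny. The key preliminary observation is that the production (source) term in \eqw{eq:rLambda_l_flow} has \emph{empty support} on this window: a composite T-block of length $l$ can only be born in a $TIT \rightarrow T$ move whose decimated I-block has physical length $\Lambda/x$ and whose two flanking T-blocks have length $\ge \Lambda$, so $l \ge \Lambda(2 + x^{-1})$, i.e. the source is active only for $\Lambda \lesssim xl/(1+2x) < xl$. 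Hence for $\Lambda \in (xl,l)$ the flow reduces to the linear homogeneous equation $\partial_{\Lambda}\log r_{\Lambda}(l) = A(\Lambda)$ with
\begin{equation}
A(\Lambda) = -\frac{y}{\Lambda} + \frac{\Lambda\,\mu^I_{\Lambda}(\Lambda)}{\ev{d}} - \frac{2\,C^{TI}_{\Lambda}(l,\Lambda)}{\rho^T_{\Lambda}(l)} \,,
\end{equation}
so that $r_{\Lambda}(l) = r_l(l)\exp\big(-\!\int_{\Lambda}^{l} A(\Lambda')\,d\Lambda'\big)$ and it suffices to bound $\big|\!\int_{\Lambda}^{l} A\big|$.

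The second step is to split $C^{TI}_{\Lambda} = C^{TI}_{\Lambda,\mathrm{disc}} + C^{TI}_{\Lambda,c}$ with $C^{TI}_{\Lambda,\mathrm{disc}} = \rho^T_{\Lambda}(l)\,\mu^I_{\Lambda}(\Lambda)$, producing four pieces of $A$, and to bound each over $\Lambda' \in (\Lambda,l) \subseteq (xl,l)$ using Lemma~\ref{lemma:1pt_function_crit} and Assumption~1. Along the separatrix: (i) $y$ vanishes as a positive power of $x$ there (in uncorrelated MHI $y = x^2$; in general $y \simeq f \sim x^{\beta}$) and varies slowly over the window, so $\int_{\Lambda}^{l} y/\Lambda'\,d\Lambda' \le y\log(l/\Lambda) \lesssim x^{\beta}\log x^{-1}$; (ii) by Lemma~\ref{lemma:1pt_function_crit}, $\ev{d} \sim \Lambda\log\Lambda$ and $\mu^I_{\Lambda}(\Lambda) < x\,\rho^T_{\Lambda}(\Lambda) \approx x/\Lambda$, so $\Lambda'\mu^I_{\Lambda'}(\Lambda')/\ev{d} \lesssim x/(\Lambda'\log\Lambda')$, whose integral is $\lesssim x\log(\log l/\log(xl)) \lesssim x(\log x^{-1}/\log l)$, and since critical slowing down gives $x \sim t^{-\alpha}$ this is $\lesssim x^{1+1/\alpha}\log x^{-1}$; (iii) the disconnected part of the last piece gives $\int_{\Lambda}^{l} 2\mu^I_{\Lambda'}(\Lambda')\,d\Lambda' < \int_{\Lambda}^{l} 2x/\Lambda'\,d\Lambda' \lesssim x\log x^{-1}$. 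Summing (i)--(iii) gives an error of order $x^{c}\log x^{-1}$ for a suitable $c>0$, provided the connected part is controlled as well (see below). The below-separatrix case follows by re-running the same estimates with the MBL-phase part of Lemma~\ref{lemma:1pt_function_crit} together with the fact, recalled in Sec.~\ref{subsec:stability_pos3}, that $\ev{d}$ grows as a stretched exponential in $\Lambda$ while $\mu^I_{\Lambda}(\Lambda)$, the proxy $y$, and the ratio $\mu^I_{\Lambda}(\Lambda)/\rho^T_{\Lambda,\mathrm{crit}}(\Lambda)$ all become stretched-exponentially small; every piece of $A(\Lambda)$ then carries one of these factors, so the integrated error drops below any inverse power of $\log\Lambda$, i.e. becomes $\mathcal{O}(1/\mathrm{Superpoly}(\Lambda))$, as claimed.

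The main obstacle is the connected-correlator piece $\int_{\Lambda}^{l} 2\,C^{TI}_{\Lambda',c}(l,\Lambda')/\rho^T_{\Lambda'}(l)\,d\Lambda'$. Assumption~3 literally yields only $|C^{TI}_{\Lambda,c}(l,\Lambda)|/\rho^T_{\Lambda}(l) \ll \rho^T_{\Lambda,\mathrm{crit}}(\Lambda) \approx 1/\Lambda$, so the integral is only $\ll \log x^{-1}$ --- enough to conclude $r_{\Lambda}(l)/r_l(l) = x^{-o(1)} \to 1$, but not the quantitative $\mathcal{O}(x^{c}\log x^{-1})$. To close the gap one must upgrade the ``$\ll$'' in Assumption~3 so that it carries an explicit factor $x^{c}$, which is precisely the content of the moment estimates of Sec.~\ref{subsec:stability_pos1} --- namely \eqw{eq:moment_bound} evaluated at the cutoff $d = \Lambda$ --- and the genuinely missing ingredient is a regularity (or positivity) argument passing from suppression of joint moments to pointwise suppression of the connected density, which is exactly the point the paper flags as not fully rigorous. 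A secondary technical point is to verify that $x$ --- and hence $y$, $\ev{d}/\Lambda$, and the marginal values entering $A$ --- is genuinely slowly varying across the \emph{whole} window $(xl,l)$: this follows from Lemma~\ref{lemma:1pt_function_crit} and Assumption~1, but one must check that the accumulated $\mathrm{Poly}(\log)$ corrections never overturn the leading estimates.
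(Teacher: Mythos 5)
Your proposal follows essentially the same route as the paper's proof in Appendix~\ref{sec:AppendixB}: the production term has no support on the window $\Lambda\in(xl,l)$, the remaining linear depletion equation is integrated, and each term in the exponent is bounded via Lemma~\ref{lemma:1pt_function_crit} and Assumptions 1 and 3 to give $\mathcal{O}(x^{c}\log x^{-1})$ along the separatrix, with every term decaying super-polynomially below it. The quantitative gap you flag in Assumption~3 is resolved by the paper's appendix restatement, where ``$\ll$'' is explicitly defined as a $1/\mathrm{Poly}(\log\Lambda)$ factor that critical slowing down converts into the required $x^{c}$; your remaining concern --- that pointwise suppression of the connected correlator does not follow from the moment bounds --- is precisely the caveat the paper itself acknowledges as unproven.
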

The constancy of $r_{\Lambda}(l)$ for $\Lambda \in (xl , l)$, combined with the estimates in Lemma~\ref{lemma:1pt_function_flow}, allows us to compute the precise functional form of $\rho^T_{\Lambda}(l)$ along the critical separatrix. Importantly, $\rho^T_{\Lambda}(l)$ decays faster than $1/l^2$ everywhere below the separatrix, a property that will be used in the main argument.
\begin{lemma}\label{lemma:rhoT_scaling}
    $\rho^T_{\Lambda}(l) \approx \frac{x_{\Lambda}^{-1} \Lambda \log \Lambda}{x_l^{-1} l^2 \log l}$ for $l \in [\Lambda, 2\Lambda + \frac{\Lambda}{x}]$. This in turn implies that $\ev{l} \approx \Lambda \log x^{-1}$. 
\end{lemma}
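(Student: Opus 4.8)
The plan is to bootstrap the full functional form of $\rho^T_{\Lambda}(l)$ from its value \emph{on the diagonal} $l=\Lambda$, which is already pinned down by Lemma~\ref{lemma:1pt_function_crit}, by transporting that value up to the running cutoff $\Lambda$ using Lemma~\ref{lemma:r_lambda_const}. Write $D(\Lambda)$ for the mean deficit $\ev{d}$ at cutoff $\Lambda$ and $x_\Lambda$ for the excess decay rate. By definition of $r$, $\rho^T_{\Lambda}(l)=\frac{D(\Lambda)}{x_\Lambda}\,r_{\Lambda}(l)$. Lemma~\ref{lemma:r_lambda_const} gives $r_{\Lambda}(l)\approx r_l(l)$ for every $l$ with $\Lambda\in(x_{\Lambda}l,l)$, i.e. $l\in(\Lambda,\Lambda/x_{\Lambda})$; since $x\ll1$ this is essentially the advertised window $[\Lambda,2\Lambda+\Lambda/x]$, whose right endpoint is just the largest length a single $TIT\rightarrow T$ move can produce from blocks sitting at the cutoff. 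It therefore suffices to evaluate $r$ on the diagonal, $r_l(l)=\frac{x_l}{D(l)}\,\rho^T_l(l)$.

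Next I feed in the diagonal asymptotics along the separatrix supplied by Lemma~\ref{lemma:1pt_function_crit}: $\rho^T_l(l)\approx 1/l$ and $D(l)\sim l\log l$, so that $r_l(l)\approx \frac{x_l}{l^2\log l}$. Combining with $D(\Lambda)\sim\Lambda\log\Lambda$,
\begin{equation}
\rho^T_{\Lambda}(l)=\frac{D(\Lambda)}{x_{\Lambda}}\,r_{\Lambda}(l)\approx\frac{D(\Lambda)}{x_{\Lambda}}\cdot\frac{x_l}{l^2\log l}\approx\frac{x_{\Lambda}^{-1}\,\Lambda\log\Lambda}{x_l^{-1}\,l^2\log l}\,,
\end{equation}
which is the claimed formula. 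Below the separatrix the same chain of manipulations applies with a strictly smaller error by Lemma~\ref{lemma:r_lambda_const}, and the $1/(l^2\log l)$ prefactor together with the slowly decreasing $x_l$ makes $\rho^T_{\Lambda}(l)$ decay faster than $1/l^2$, the property invoked in the text just before the lemma.

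For the mean, $\ev{l}=\int_{\Lambda}^{\infty} l\,\rho^T_{\Lambda}(l)\,dl$. The faster-than-$1/l^2$ decay makes this integral convergent and dominated by $l\lesssim\Lambda/x$, which is exactly where the explicit formula is valid. Since $x$ flows slowly, critical slowing down gives $x_l\sim(\log l)^{-\alpha}$, and $\log l\approx\log\Lambda$ throughout $l\le\Lambda/x$ because $\log(\Lambda/x)=\log\Lambda+\log x^{-1}$ with $\log x^{-1}=\mathcal{O}(\log\log\Lambda)\ll\log\Lambda$; hence one may replace $x_l$ by $x_{\Lambda}$ to leading order and obtain $\ev{l}\approx\Lambda\log\Lambda\int_{\Lambda}^{\Lambda/x}\frac{dl}{l\log l}=\Lambda\log\Lambda\,\big(\log\log(\Lambda/x)-\log\log\Lambda\big)$. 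Expanding $\log\log(\Lambda/x)-\log\log\Lambda\approx\frac{\log x^{-1}}{\log\Lambda}$ then yields $\ev{l}\approx\Lambda\log x^{-1}$.

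The delicate part is error control, both in the transport step and in truncating the integral. The transport error $\mathcal{O}(\log x^{-1}\,x^c)$ of Lemma~\ref{lemma:r_lambda_const} is relatively small (because $x^c\to0$ beats $\log x^{-1}\to\infty$), but one must still check it stays $o(\Lambda\log x^{-1})$ after integrating against $l\,dl$ over a window of length $\sim\Lambda/x$, and similarly that replacing $x_l$ by $x_{\Lambda}$ costs only a factor $1+o(1)$. More importantly, one must show that the genuine tail $l\gtrsim 2\Lambda+\Lambda/x$, where the one-step formula fails and $\rho^T_{\Lambda}(l)$ is fixed only by iterating the $TIT\rightarrow T$ production integral, contributes negligibly to $\ev{l}$; this rests on the faster-than-$1/l^2$ decay persisting into the tail, which is itself part of what the bootstrap is establishing, so the estimates of Appendix~\ref{sec:AppendixB} must be organised to close self-consistently rather than circularly.
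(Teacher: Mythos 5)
Your proposal is correct and follows essentially the same route as the paper: transport $r_{\Lambda}(l)\approx r_l(l)$ via Lemma~\ref{lemma:r_lambda_const}, substitute the diagonal asymptotics $\rho^T_l(l)\approx 1/l$ and $\ev{d}_l\sim l\log l$ from Lemma~\ref{lemma:1pt_function_crit}, and then integrate $l\,\rho^T_{\Lambda}(l)$ over $[\Lambda,\Lambda/x]$ using $x_l^{-1}=\mathrm{Poly}(\log l)\approx x_{\Lambda}^{-1}$ to extract $\ev{l}\approx\Lambda\log x^{-1}$. Your closing remarks on error control and tail contributions go slightly beyond what the paper spells out but do not change the argument.
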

With these technical lemmas in place, we can understand the flow equation for $r_{\Lambda}(l)$ \eq{eq:rLambda_l_flow} in the regime $\Lambda \in [xl, l]$ where the dominant growth mechanism for $\rho^T_{\Lambda}(l)$ is the production of new T-blocks with length $l$ from a $TIT \rightarrow T$ move. Following Ref.~\onlinecite{Morningstar_Huse_Imbrie_2020}, the strategy is to avoid solving the full integro-differential equation but instead derive an approximate recursion relation for $r_{\Lambda}(\frac{\Lambda}{x})$ in terms of $r_{\Lambda}(\Lambda)$. For that purpose, we fix $l = \frac{\Lambda}{x}$ and integrate \eq{eq:rLambda_l_flow} from $\Lambda' = x \Lambda$ to $\Lambda' = \Lambda$. Keeping all terms on the RHS for the moment, we find 
\begin{equation}\label{eq:rlambda_finite_flow}
    \begin{aligned}
        &r_{\Lambda'}(l)\big|_{x\Lambda}^{\Lambda} \approx \int_{x\Lambda}^{\Lambda} \bigg\{\left(-\frac{y}{\Lambda'} + \frac{\Lambda' \mu^I_{\Lambda'}(\Lambda')}{\ev{d}_{\Lambda'}} - \frac{2C^{TI}_{\Lambda'}(l, \Lambda')}{\rho^T_{\Lambda'}(l)}\right) \\
        &r_{\Lambda'}(l) + \frac{x}{\ev{d}_{\Lambda'}} \int_{\Lambda'}^{l - \Lambda' \frac{1+x}{x}} d l_1 C^{TIT}_{\Lambda'}(l_1, \Lambda', l-l_1 - \frac{\Lambda'}{x}) \bigg\} \,.
    \end{aligned}
\end{equation}
We will refer to the first line on the RHS as $F_{\Lambda, \text{depl}}$ and the second line as $F_{\Lambda, \text{prod}}$, in accordance with the general decomposition of flow equations into depletion and production terms in Appendix~\ref{sec:AppendixA}. As a sanity check, note that this reduces to the analogous flow equation (15) in Ref.~\onlinecite{Morningstar_Huse_Imbrie_2020} after we plug in the exact solution $\mu^I_{\Lambda}(d) = \mu^I_{\Lambda}(\Lambda) e^{-\mu^I_{\Lambda}(\Lambda)(d-\Lambda)}$ valid for uncorrelated disorder and factorize joint distributions into product of marginals. Now we make a change of variables from $\Lambda' \rightarrow l_2 = l-l_1 - \frac{\Lambda'}{x}$ to elucidate the physical picture. Using the slow decay of $x$, the production term $F_{\Lambda, \text{prod}}$ could be reduced to $x^2\int_D dl_1 dl_2 \frac{1}{\ev{d}_{\Lambda'}}C^{TIT}_{\Lambda'}(l_1, \Lambda', l_2)$ where $D$ is an isosceles triangular integration domain as shown in Fig.~\ref{fig:IntDomain} and $\Lambda' = \Lambda + 2 x (\Lambda - l_1 - l_2)$ depends implicitly on $l_1,l_2$. Roughly, this term counts all possible ways to form a T-block with length $l$ at scale $\Lambda$ by combining smaller blocks at an earlier stage with cutoff $\Lambda' \in [x\Lambda, \Lambda]$. When $l = \frac{\Lambda}{x}$, in order for the fractal structure of T-blocks to be dominant, this term should receive its dominant contributions from $\Lambda' = \Lambda$. Using the fundamental assumptions and Lemma~\ref{lemma:r_lambda_const} we can show that this is indeed the case. Moreover, the integral over domain $D$ can be replaced by an integral over an infinite region $[\Lambda, \infty]^2$ up to errors that are suppressed at large $\Lambda$ if the decay of $C^{TIT}_{\Lambda}(l_1,\Lambda,l_2)$ with $l_1,l_2$ is sufficiently fast. This is guaranteed by assumption 3 and the estimate in Lemma~\ref{lemma:rhoT_scaling}. Finally, since the integral over the infinite region $[\Lambda, \infty]^2$ simply gives $\mu^I_{\Lambda}(\Lambda)$, we immediately conclude
\begin{equation}\label{eq:F_Lambda_prod}
    F_{\Lambda, \rm prod} = \frac{x^2}{\ev{d}} \mu^I_{\Lambda}(\Lambda) \,.
\end{equation}
Using the decay properties of $\rho^T_{\Lambda}(l)$, we can also show that $F_{\Lambda,\text{depl}}$ is suppressed relative to $F_{\Lambda, \text{prod}}$, thereby establishing a recursion relation for $r_{\Lambda}(l)$
\begin{equation}
    r_{\Lambda/x}(\Lambda/x) \approx r_{\Lambda}(\Lambda/x) \approx \frac{x^2}{\ev{d}} \mu^I_{\Lambda}(\Lambda) \,.
\end{equation}
Using Lemma~\ref{lemma:r_lambda_const}, we complete the derivation of the projected flow equations in property (3)
\begin{equation}
    y_{\Lambda/x} = \frac{\Lambda^2}{x^2} r_{\Lambda/x}(\Lambda/x) \approx \frac{\Lambda^2}{\ev{d}} \mu^I_{\Lambda}(\Lambda) = \left(\frac{y_{\Lambda}}{x}\right)^2 \ev{d} \mu^I_{\Lambda}(\Lambda) \,.
\end{equation}
A byproduct of the precise argument in Appendix~\ref{sec:AppendixB} is that $\ev{d} \mu^I_{\Lambda}(\Lambda) \sim x \log \Lambda$ at large $\Lambda$ and along the separatrix. As $x \rightarrow 0$, by the assumption of critical slowing down, $x$ scales as a negative power of $\log \Lambda$. Thus $\ev{d} \mu^I_{\Lambda}(\Lambda) \sim x^c$ for some $c < 1$. From this we can simplify the recursion relation as
\begin{equation}
    y_{\Lambda/x} \sim \left(\frac{y_{\Lambda}}{x}\right)^2 x^c \approx \frac{y^2}{x^{2-c}} \,.
\end{equation}
This recursion is solved by $y \sim x^{\beta}$ with $\alpha = 2 - c$. Plugging this back into the exact flow equation \eq{eq:x_flow} for $x$ and introducing the RG time $t = \log \Lambda$, we can get the parametric form of the critical separatrix:
\begin{equation}
    \frac{dx}{dt} \approx - x^{\beta} \quad \rightarrow \quad x(t) \sim t^{\frac{1}{c-1}} \quad y(t) \sim t^{\frac{2-c}{c-1}} \,.
\end{equation}
When $\ev{d} \mu^I_{\Lambda}(\Lambda) \approx 1$ as in the uncorrelated RG, $c = 0$ and we recover the uncorrelated separatrix $x(t) \sim t^{-1}, y \sim t^{-2}$. In the correlated case, it is possible to have $c \neq 0$ so that $x,y$ have different scalings with $t$. Now let us consider deviations from the critical separatrix $y \approx x^{2-c + \delta_0}$. For $\delta_0$ sufficiently small, $\ev{d} \mu^I_{\Lambda}(\Lambda) \sim x^c$ continues to approximately hold. Thus for every recursion step, $t \rightarrow t + \log x^{-1}$,
\begin{equation}
    y_{\Lambda/x} \approx \left(\frac{x^{2-c + \delta_0}}{x}\right)^2 x^c \approx x^{2-c+2\delta_0} \,.
\end{equation}
The number of RG steps it takes for $\delta_0$ to reach an $\mathcal{O}(1)$ value is $\log_2 \delta_0^{-1}$. The elapsed RG time per RG step is $\frac{dt}{dn} = \log x^{-1} \approx \log (\log \Lambda)^{\frac{1}{1-c}} = \frac{1}{1-c} \log t$. This implies that the total RG time $T$ is related to the number of RG steps $N$ as:
\begin{equation}
     N \approx \int^T \frac{dt}{t'(n)} \approx \frac{(1-c)T}{\log T} \quad \rightarrow \quad T \approx \frac{N}{1-c} \log \frac{N}{1-c} \,.
\end{equation}
Hence in the limit $\delta_0 \rightarrow 0$, the correlation length scales as
\begin{equation}
    \xi = e^T \approx \delta_0^{-(1-c)^{-1} \log (\log_2 \delta_0^{-1}(1-c)^{-1})}  \,.
\end{equation}
As anticipated by the qualitative argument in Sec.~\ref{subsec:stability_pos2}, the above scaling satisfies $\nu = \infty$. The origin of the double logarithm is an extreme asymmetry between the logarithmic slowdown along the separatrix and the exponential speedup orthogonal to the separatrix. For any finite value of $c$, the double logarithmic scaling is robust up to $\delta_0$-independent constants. We have thus established property (3) in Sec.~\ref{subsec:overview}. 

Finally, to estimate the stretching exponent $\epsilon(x)$ deep in the MBL phase and obtain a more quantitative version of property (1), we have to study the scaling of $\ev{d} \mu^I_{\Lambda}(\Lambda)$. Using the flow equations for $\ev{d}$ and $\mu^I_{\Lambda}(\Lambda)$, 
\begin{equation}
    \frac{d \left[\ev{d} \mu^I_{\Lambda}(\Lambda)\right]}{d \Lambda} \approx - 2C^{TI}_{\Lambda,c}(\Lambda,\Lambda) \ev{d} + \ev{d} \partial_d \mu^I_{\Lambda}(d)\big|_{d=\Lambda} \,.
\end{equation}
The positivity of $C^{TI}_{\Lambda}$ and the numerical observation that $C^{TI}_{\Lambda,c}(\Lambda,\Lambda) < 0$ imply that $0 < -C^{TI}_{\Lambda,c}(\Lambda,\Lambda) <  \rho^T_{\Lambda}(\Lambda) \mu^I_{\Lambda}(\Lambda)$. As a result, we have a general scaling
\begin{equation}
    -2C^{TI}_{\Lambda,c}(\Lambda,\Lambda) \sim \eta \rho^T_{\Lambda}(\Lambda) \mu^I_{\Lambda}(\Lambda)\,,
\end{equation}
where $\eta < 1$ (if $\eta \geq 1$, then $\mu^I_{\Lambda}(\Lambda)$ would flow to $\infty$ as $\Lambda \rightarrow \infty$, which is impossible). This means that
\begin{equation}
    \frac{d \left[\ev{d} \mu^I_{\Lambda}(\Lambda)\right]}{d\Lambda} \approx \eta \rho^T_{\Lambda}(\Lambda) \left[\ev{d} \mu^I_{\Lambda}(\Lambda)\right]\,,
\end{equation}
which gives $\log \left[\ev{d} \mu^I_{\Lambda}(\Lambda)\right] \approx \eta \int^{\Lambda} \rho^T_{\Lambda'}(\Lambda') d \Lambda'$ upon integration. On the other hand, close to the MBL fixed line, $\rho^T_{\Lambda}(\Lambda) \gg \frac{1}{\Lambda}$, and $\log \ev{d} \approx \int^{\Lambda} \rho^T_{\Lambda'}(\Lambda') d \Lambda'$. By definition of $y$, we therefore conclude that $\log y \approx -\int^{\Lambda} \rho^T_{\Lambda'}(\Lambda') d \Lambda' +\, \text{subleading}$. Combining these estimates with the recursion relation, we find
\begin{equation}
    -\int^{\Lambda/x} \rho^T_{\Lambda'}(\Lambda') d \Lambda' = (\eta-2) \int^{\Lambda} \rho^T_{\Lambda'}(\Lambda') d \Lambda' - 2 \log x \,.
\end{equation}
Since $x$ freezes to a constant on the MBL fixed line, we can drop $2\log x$ as $\Lambda \rightarrow \infty$ in the above equation. The remaining equation is solved by the ansatz
\begin{equation}
    \rho^T_{\Lambda}(\Lambda) \sim \frac{1}{\Lambda^{1 - \epsilon}}, \quad  - \left(\frac{\Lambda}{x}\right)^{\epsilon} = (-2+\eta) \Lambda^{\epsilon} \,.
\end{equation}
We now recognize $\epsilon$ as the stretching exponent, which must satisfy
\begin{equation}
    \epsilon = \frac{\log (2 - \eta)}{\log x^{-1}}, \quad 0 \leq \eta < 1 \,.
\end{equation}
Though this result holds for general $\eta$, the moment bounds in \eq{eq:moment_bound} strongly suggest that $\eta \rightarrow 0$ as $\Lambda \rightarrow \infty$. Therefore, the expectation is that $\epsilon = \frac{\log 2}{\log x^{-1}}$, which is exactly the uncorrelated value. At this point all three features promised in Sec.~\ref{subsec:overview} have been established. 

\section{Discussion}\label{sec:discussion}

In this work, we have argued via an analytic renormalization group approach that the Morningstar-Huse-Imbrie critical scaling for the MBL transition is not affected by the introduction of spatial correlations in the distribution of initial block parameters. Since our argument applies to correlations with arbitrary wandering exponents, it provides strong evidence that the MHI critical scaling is in fact a robust universality class. However, a few challenges and open questions remain unresolved. 

At the most basic level, our analytic arguments in the case of positive correlations require three technical assumptions. Though all three assumptions are motivated by physical arguments/supported by finite-size numerics, it would be more satisfactory to deduce them from a more fundamental principle. But we believe this is merely an aesthetic issue. A more serious concern, even granted the correctness of the assumptions, is the robustness of our conclusion beyond this toy RG. After all, the MHI RG rules reduce the complicated phenomenology of MBL into alternating T and I-blocks characterized only by a few block parameters. One can easily imagine that tweaking the choice of block parameters and RG rules could give rise to a totally different universality class. Therefore, it would be much more satisfying to derive the coarse-grained flow equations in the MHI RG without committing to a specific microscopic RG. 

As shown in Sec.~\ref{sec:stability_hyp}, the irrelevance of hyperuniform correlations is indeed a universal feature of all asymptotically additive RGs. In contrast, arguments for the irrelevance of positive correlations in Sec.~\ref{subsec:stability_pos1} and Sec.~\ref{subsec:stability_pos2} relied on specific features of the MHI RG. Nevertheless, as we explained in these sections, the key physical input that renders positive correlations irrelevant is the asymmetric thermalizing power of T and I-blocks, which is a universal consequence of the avalanche mechanism. Therefore, we anticipate that a more robust argument for universality will involve avalanches in an essential way. A first step in that direction was attempted in Ref.~\onlinecite{Dumitrescu_Goremykina_Parameswaran_Serbyn_Vasseur_2019}, where the avalanche mechanism combined with an assumption on the analyticity of the RG $\beta$-functions led to the KT universality class on general grounds. But this conclusion was called into question by the MHI RG, which, despite its simple and well-motivated microscopic rules, featured a non-analytic $\beta$-function in the $(x,y)$ plane. Whether such non-analyticities should be expected in general is a question that can hopefully be settled by improving the arguments of Ref.~\onlinecite{Dumitrescu_Goremykina_Parameswaran_Serbyn_Vasseur_2019}. An alternative possibility is that analyticity only holds in a higher dimensional parameter space and fails when we project onto the two dimensional subspace spanned by $(x,y)$. The challenge would then be to identify the minimal set of parameters needed for analyticity and constrain the form of the $\beta$-function in this bigger space using some general physical principles (with quantum avalanche probably playing a key role). Such a framework would obviate the need for more microscopic RG models and provide a much more robust picture of the MBL transition.

Even if we assume that MHI RG rules indeed capture the correct universal properties of the MBL transition, it remains to be understood whether there exists a class of initial disorder correlation that would give rise to a different critical scaling. One interesting example is the quasiperiodic MBL transition~\cite{PhysRevLett.119.075702,Agrawal_Gopalakrishnan_Vasseur_2020_QP}. For the case of the simplified RG in Ref.~\onlinecite{Zhang_Zhao_Devakul_Huse_2016}, which assumed some symmetry between the MBL and ETH phases, quasiperiodic initial conditions give rise to a critical exponent $\nu = 1$~\cite{Agrawal_Gopalakrishnan_Vasseur_2020_QP}, which is distinct from the value of $\nu$ for the uncorrelated random case and for the two families of correlations that we have considered (see Appendix~\ref{sec:AppendixC} for a more detailed discussion of the symmetric RG). In the MHI RG, the analytic framework that we have developed for hyperuniform and positive correlations does not apply to quasiperiodic correlations which are described by a fixed set of initial block lengths rather than an ensemble. Therefore, although the wandering exponent of quasiperiodic correlation $w = 0$ coincides with that of hyperuniform correlation with $\alpha = 1$, we cannot conclude that $\nu = \infty$ for the quasiperiodic case. Understanding the fate of the quasiperiodic MBL transition will likely require new analytic insights. 

Finally, we should remark that the moment bounds of joint distributions $C^{TI\ldots T}_{\Lambda}$ in block RGs and the BBGKY hierarchy of correlated flows developed in this paper may have applications to more general functional RGs. One topic where the formalism might be helpful is the generalized Harris bound/correlated CCFS bound that we discussed in the introduction. 

\noindent \emph{Acknowledgments.---}\, We thank U. Agrawal, D. Huse, A. Morningstar, H. Singh, M. Serbyn and B. Ware for useful discussions and collaboration on related works. We also thank Yunkun Zhou for helpful comments on bounding probability distributions. We acknowledge support from NSF Grants No. DMR-2103938 (S.G.), DMR-2104141 (R.V.), the US Department of Energy, Office of Science, Basic Energy Sciences, under Early Career Award No. DE-SC0021111 (V.K.), the Alfred P. Sloan Foundation through Sloan Research Fellowships (V.K. and R.V.), and the Packard Fellowship in Science and Engineering (V.K.).  We also acknowledge the Sherlock High Performance Computing Cluster at Stanford for providing computing resources.

\newpage 
\begin{widetext}

\appendix 

\section{Derivation of correlated flow equations}\label{sec:AppendixA}

As pointed out in Sec.~\ref{subsec:overview}, spatial correlations in the initial T/I-block lengths force us to consider a functional RG of the joint probability distribution $P_{\Lambda}(\vec l^T, \vec d)$ instead of the single block marginals $\rho^T_{\Lambda}(l^T), \mu^I_{\Lambda}(d)$ studied in~\onlinecite{Morningstar_Huse_Imbrie_2020}. However, to understand the behavior of the order parameters $x_{\Lambda}, y_{\Lambda}$ which depend only on single block marginals and their integrals, we do not need to keep track of the renormalization of the full probability distribution $P_{\Lambda}(\vec l^T, \vec d)$. Instead, we will develop a hierarchy of equations (analogous to the BBGKY hierarchy is classical statistical mechanics), where the marginal distribution of n nearest neighbors depends on the distribution of n+2 nearest neighbors. This set of equations do not close at any finite order but will be sufficient for a precise analysis of the near-critical regime. 

The general structure of the flow equations can be understood via the following schematic equation:
\begin{equation}
    \partial_{\Lambda} C^{(n)}_{\Lambda}(l_1,d_1,\ldots) = F_{\rm depl}[C^{(n+2)}_{\Lambda}] + F_{\rm prod}[C^{(n+2)}_{\Lambda}] \,.
\end{equation}
Here, $C^{(n)}_{\Lambda}(l_1,d_1,\ldots)$ is the probability that a contiguous chain of n blocks have lengths $(l_1,d_1,\ldots)$ when the length cutoff is $\Lambda$ (note that we have dropped the superscript $T$ for notational convenience). The flow of $C^{(n)}_{\Lambda}$ has two contributions: $F_{\text{depl}}$ is a depletion term that accounts for RG moves where a chain of n blocks with length $(l_1,d_1,\ldots)$ at cutoff $\Lambda$ are destroyed when we move the cutoff $\Lambda + d \Lambda$; in contrast, $F_{\text{prod}}$ is a production term that encodes RG moves that create a new chain of n blocks with lengths $(l_1,d_1,\ldots)$ which was not present at cutoff $\Lambda$. To be concrete, we consider the case where n = 1, so that the LHS is just a single-block marginal. For T and I-blocks we have
\begin{equation}\label{1pt_flow_appendix}
    \begin{aligned}
        \partial_{\Lambda} \rho^T_{\Lambda}(l) &= \rho^T_{\Lambda}(l)\left[\mu^I_{\Lambda}(\Lambda) + \rho^T_{\Lambda}(\Lambda)\right] - 2C^{TI}_{\Lambda}(l, \Lambda) + \int_{\Lambda}^{l - \frac{\Lambda}{x} - \Lambda} C^{TIT}_{\Lambda,c}\left(l_1, \Lambda, l - \frac{\Lambda}{x} - l_1\right) d l_1 \,,
    \end{aligned}
\end{equation}
\begin{equation}
    \begin{aligned}
        \partial_{\Lambda} \mu^I_{\Lambda}(d) &= \mu^I_{\Lambda}(d)\left[\mu^I_{\Lambda}(\Lambda) + \rho^T_{\Lambda}(\Lambda)\right] - 2C^{TI}_{\Lambda}(\Lambda, d) + \int_{\Lambda}^{d} C^{ITI}_{\Lambda}\left(s, \Lambda, d + \Lambda - s\right) d s  \,.
    \end{aligned}
\end{equation}
In each of the flow equations above, the first and second lines are depletion and production terms respectively. In the uncorrelated limit, the multi-block correlations factorize as products of single-block marginals $C^{TI}(l,d) = \rho^T_{\Lambda}(l) \mu^I_{\Lambda}(d), C^{TIT}(l_1,d,l_2) = \rho^T(l_1) \mu^I(d) \rho^T(l_2)$ etc. and the resulting flow equations agree with those obtained in~\onlinecite{Morningstar_Huse_Imbrie_2020}. Following the same procedure, it is conceptually simple to derive the full hierarchy of equations. For clarity, we will only present the next simplest equation in the hierarchy (which turns out to be all we need near criticality). This flow equation will involve two types of contributions: one coming from the depletion of thermal blocks with length $l_1,l_2$ that have already been produced in earlier stages in the RG (i.e. at cutoff smaller than $\Lambda$) and the other coming from the production of thermal blocks with length $l_1,l_2$ due to decimation of I-blocks with $d = \Lambda$. Taking both contributions into account, the number density flows as
\begin{equation}
    \begin{aligned}
    n^{TIT}_{\Lambda+d \Lambda}(l_1,\Lambda+d \Lambda, l_2) &= n^{TIT}_{\Lambda}(l_1, \Lambda, l_2) \\
    &+ d \Lambda \bigg[- C^{ITIT}_{\Lambda}(\Lambda,l_1,d,l_2) - C^{TITI}_{\Lambda}(l_1,d,l_2,\Lambda) + \int_{\Lambda}^{\infty} d \tilde l_1 C^{TITIT}_{\Lambda}\left(\tilde l_1, \Lambda, l_1 - \frac{\Lambda}{x} - \tilde l_1, d, l_2\right) \\
    &+ \int_{\Lambda}^{\infty} d \tilde l_1 C^{TITIT}_{\Lambda}\left(l_1, d, \tilde l_2, \Lambda, l_2 - \frac{\Lambda}{x} - \tilde l_2\right) + \int_{\Lambda}^{\infty} ds C^{TITIT}_{\Lambda}(l_1, s, \Lambda, d+\Lambda -s, l_2) \bigg] \,.
    \end{aligned}
\end{equation}
Let the total number of I-blocks be $N_{\Lambda}$ (which is equal to the number of T-blocks) with flow equation $N_{\Lambda+d\Lambda} = N_{\Lambda} \left[1 - d \Lambda (\rho^T_{\Lambda}(\Lambda) + \mu^I_{\Lambda}(\Lambda)\right]$. Dividing both sides of the previous equation by $N_{\Lambda+d\Lambda}$ and recalling the definition $n^{TIT}_{\Lambda}/N_{\Lambda} = C^{TIT}_{\Lambda}$, we get
\begin{equation}\label{eq:3_neighbor_corr_flow}
    \partial_{\Lambda} C^{TIT}_{\Lambda}(l_1,d,l_2) = F_{\rm depl} + F_{\rm prod} \,,
\end{equation}
where the depletion and production terms are given by
\begin{equation}
    \begin{aligned}
    F_{\rm depl} &= C^{TIT}_{\Lambda}(l_1,d,l_2)\left[\rho^T_{\Lambda}(\Lambda) + \mu^I_{\Lambda}(\Lambda)\right] - C^{ITIT}_{\Lambda}(\Lambda,l_1,d,l_2) - C^{TITI}_{\Lambda}(l_1,d,l_2,\Lambda) \,,
    \end{aligned}
\end{equation}
\begin{equation}
    \begin{aligned}
    F_{\rm prod} &= \int_{\Lambda}^{\infty} d s \left[C^{TITIT}_{\Lambda}\left(s, \Lambda, l_1 - \frac{\Lambda}{x} - s, d, l_2\right) + C^{TITIT}_{\Lambda}\left(l_1, d, s, \Lambda, l_2 - \frac{\Lambda}{x} - s\right) + C^{TITIT}_{\Lambda}(l_1, s, \Lambda, d+\Lambda -s, l_2) \right] \,.
    \end{aligned}
\end{equation}
The two-parameter RG flows that we are interested in depend not on the full distribution but rather on the moments of these distributions. It is easy to derive flow equations for the moments from the full PDEs above. We enumerate a full set of these equations here:
\begin{align}
    \frac{d \ev{l^I}}{d\Lambda} &= \mu^I_{\Lambda}(\Lambda)\left[\ev{l^I} - \Lambda/x\right] + \rho^T_{\Lambda}(\Lambda) \left[\ev{l^I} + \Lambda\right] \,,\\
    \frac{d \ev{l^T}}{d\Lambda} &= \rho^T_{\Lambda}(\Lambda)\left[\ev{l^T} - \Lambda\right] + \mu^I_{\Lambda}(\Lambda) \left[\ev{l^T} + \Lambda/x\right]\,, \\
    \frac{d \ev{d}}{d\Lambda} &= \left[\mu^I_{\Lambda}(\Lambda) + \rho^T_{\Lambda}(\Lambda)\right](\ev{d} - \Lambda) \,.
\end{align}
Note that these equations are equivalent in form to their uncorrelated analogues in~\onlinecite{Morningstar_Huse_2019}. However, since $\rho^T_{\Lambda}(\Lambda), \mu^I_{\Lambda}(\Lambda)$ are shifted by correlations, the functional dependence of $\ev{d}, \ev{l^I}, \ev{l^T}$ on $\Lambda$ may also be different from the uncorrelated RG. 

The proof of these three formulae are structurally identical and rather tedious, so we will only give the simplest version of the calculation. Consider $\ev{d} = \int_{\Lambda}^{\infty} dd \mu^I_{\Lambda}(d) \cdot d$. A few simple algebraic manipulations lead to 
\begin{equation}
    \begin{aligned}
        \frac{d}{d\Lambda} \ev{d} &= - \mu^I_{\Lambda}(\Lambda) \Lambda + \int_{\Lambda}^{\infty} \partial_{\Lambda} \mu^I_{\Lambda}(d) \cdot d \\
        &= - \mu^I_{\Lambda}(\Lambda) \Lambda + \int_{\Lambda}^{\infty} \mu^I_{\Lambda}(d) \left[\mu^I_{\Lambda}(\Lambda) + \rho^T_{\Lambda}(\Lambda)\right] d - 2\int_{\Lambda}^{\infty} C^{TI}_{\Lambda}(\Lambda, d) d + \int_{\Lambda}^{\infty} d d \int_{\Lambda}^d dx C^{ITI}_{\Lambda}(d-x+\Lambda, \Lambda, x) d \\
        &= - \mu^I_{\Lambda}(\Lambda) \Lambda + \ev{d} \left[\mu^I_{\Lambda}(\Lambda) + \rho^T_{\Lambda}(\Lambda)\right] - 2 \ev{d_i|l^T_i = \Lambda} + \int_{\Lambda}^{\infty} dx \int_{\Lambda}^{\infty} dy \int d d \delta(d-x+\Lambda - y) C^{ITI}_{\Lambda}(y,\Lambda,x) d \\
        &= - \mu^I_{\Lambda}(\Lambda) \Lambda + \ev{d} \left[\mu^I_{\Lambda}(\Lambda) + \rho^T_{\Lambda}(\Lambda)\right] - 2 \ev{d_i|l^T_i = \Lambda} + 2 \ev{d_i|l^T_i = \Lambda} - \Lambda \rho^T_{\Lambda}(\Lambda) \\
        &= \left[\mu^I_{\Lambda}(\Lambda) + \rho^T_{\Lambda}(\Lambda)\right] (\ev{d} - \Lambda)  \,.
    \end{aligned}
\end{equation}
where in the third line we introduced the conditional expectation value $\ev{\cdot|\cdot}$ and in the fourth line we used the following facts that follow from definition:
\begin{equation}
    \int C^{ITI}_{\Lambda}(x,l,y) dy = C^{TI}_{\Lambda}(l,x) \quad \int C^{ITI}_{\Lambda}(x,l,y) dx = C^{TI}_{\Lambda}(l,y) \quad \int C^{ITI}_{\Lambda}(x,l,y) dx dy = \rho^T_{\Lambda}(l) \,.
\end{equation}
This completes the proof of Lemma~\ref{lemma:1pt_function_flow}.

\section{Derivation of recursion relation for correlated MHI RG}\label{sec:AppendixB}

In the main text, we outlined an argument for the stability of MBL criticality in the MHI RG. In this appendix, we will fill in some of the technical details. Recall the fundamental variables in our RG:
\begin{equation}
    x_{\Lambda} = \frac{\ev{d}}{\ev{l^I}} \,,\quad r_{\Lambda}(l) = \frac{x}{\ev{d}} \rho^T_{\Lambda}(l) \,,\quad y = r_{\Lambda}(\Lambda) \Lambda^2 \,.
\end{equation}
In Sec.~\ref{subsec:stability_pos3}, we already derived the exact flow equation $\frac{dx}{d\Lambda} = - \frac{(1+x) y}{\Lambda}$ for $x_{\Lambda}$. Here we will derive the flow equation for $r_{\Lambda}(l)$ which will facilitate our analysis of $y$ later. Differentiating the definition directly and using the exact flow equation for $x$, we get three terms which partially cancel each other:
\begin{equation}
    \begin{aligned}
    \partial_{\Lambda} r_{\Lambda}(l) &= \frac{dx}{d\Lambda} \frac{\rho^T_{\Lambda}(l)}{\ev{d}} - \frac{x}{\ev{d}^2} \rho^T_{\Lambda}(l) \partial_{\Lambda} \ev{d} + \frac{x}{\ev{d}} \partial_{\Lambda} \rho^T_{\Lambda}(l) \\
    &= \frac{1}{x} \frac{dx}{d\Lambda} r_{\Lambda}(l) - \frac{\ev{d} - \Lambda}{\ev{d}} \left[\mu^I_{\Lambda}(\Lambda) + \rho^T_{\Lambda}(\Lambda)\right] r_{\Lambda}(l) \\
    &\hspace{0.5cm} + \frac{x}{\ev{d}} \big\{\rho^T_{\Lambda}(l) \left[\mu^I_{\Lambda}(\Lambda) + \rho^T_{\Lambda}(\Lambda)\right] - 2 C^{TI}_{\Lambda}(l,\Lambda) + \int_{\Lambda}^{l - \Lambda(1+x^{-1})} C^{TIT}_{\Lambda}\left(l_1,\Lambda, l-\frac{\Lambda}{x} - l_1\right) d l_1 \big\} \\
    &= \big(- \frac{y}{\Lambda} + \frac{\Lambda \mu^I_{\Lambda}(\Lambda)}{\ev{d}} - \frac{2 C^{TI}_{\Lambda}(l,\Lambda)}{\rho^T_{\Lambda}(l)} \big) r_{\Lambda}(l) + \frac{x}{\ev{d}} \int_{\Lambda}^{l - \Lambda(1+x^{-1})} C^{TIT}_{\Lambda}\left(l_1,\Lambda, l-\frac{\Lambda}{x} - l_1\right) d l_1 \,.
    \end{aligned} 
\end{equation}
Integrating the above equation from $x\Lambda$ to $\Lambda$ gives equation \eq{eq:rlambda_finite_flow} in the main text, which we rewrite below
\begin{equation}\label{eq:rlambda_finite_flow_appendix}
    \begin{aligned}
        r_{\Lambda'}(l)|_{x\Lambda}^{\Lambda} &\approx \int_{x\Lambda}^{\Lambda} \bigg\{\left(-\frac{y}{\Lambda'} + \frac{\Lambda' \mu^I_{\Lambda'}(\Lambda')}{\ev{d}_{\Lambda'}} - \frac{2C^{TI}_{\Lambda'}(l, \Lambda')}{\rho^T_{\Lambda'}(l)}\right) r_{\Lambda'}(l) + \frac{x}{\ev{d}} \int_{\Lambda'}^{l - \Lambda'(1+x^{-1})} C^{TIT}_{\Lambda'}\left(l_1,\Lambda', l-\frac{\Lambda'}{x} - l_1\right) d l_1 \bigg\} \,.
    \end{aligned}
\end{equation}
As explained in the main text, our goal is to show that the second term involving $C^{TIT}$ dominates over the first term. To do that, we first recall the \textbf{fundamental assumptions} that go into our analysis: 

\textit{Assumption 1: Critical slowing down holds along the separatrix so that $x \sim t^{-\alpha}$ for some positive exponent $\alpha$ where $t = \log \Lambda$ is the RG time. In the uncorrelated MHI RG, $\alpha = 1$. Here we only assume that $\alpha$ is finite. }

\textit{Assumption 2: Along and below the critical separatrix, as the I-blocks become much longer than the T-blocks, the distribution of deficit lengths $\mu^I_{\Lambda}(d)$ for I-blocks tends to become wider and flatter. Concretely, we will assume that $\mu^I_{\Lambda}(\Lambda)$ decreases with $\Lambda$ and the derivative $\partial_d \mu^I_{\Lambda}(d)$ evaluated at the cutoff $d = \Lambda$ goes to zero sufficiently rapidly as $\Lambda \rightarrow \infty$:
\begin{equation}
    - \partial_d \log \mu^I_{\Lambda}(d), - \partial_d \log C^{TIT}_{\Lambda}(l_1,d,l_2)\big|_{d=\Lambda} \ll \rho^T_{\Lambda,\rm crit}(\Lambda) \,.
\end{equation}}

\textit{Assumption 3: The connected distributions of nearest neighbor $T$ and $I$ blocks are upper-bounded by a function that's much larger than the product of marginal distributions for $l \in [\Lambda, \Lambda/x]$:
\begin{equation}
    \left|\frac{C^{TI}_{\Lambda,c}(l, \Lambda)}{\rho^T_{\Lambda}(l) \mu^I_{\Lambda}(\Lambda)}\right|, \left|\frac{C^{TITI}_{\Lambda,c}(l_1, \Lambda,l_2,\Lambda)}{C^{TIT}_{\Lambda}(l_1,\Lambda,l_2) \mu^I_{\Lambda}(\Lambda)}\right| \ll \frac{\rho^T_{\Lambda,\rm crit}(\Lambda)}{\mu^I_{\Lambda}(\Lambda)} \,.
\end{equation}
where $\ll$ means a multiplicative factor that's $\frac{1}{\text{Poly}(t)}$ and hence only logarithmically small in $\Lambda$.}

The motivation and numerical tests for these assumptions are discussed in Sec.~\ref{subsec:stability_pos3} and will not be repeated here. With these assumptions in mind, we continue the analysis of \eq{eq:rlambda_finite_flow_appendix}. We denote the first term on the RHS of \eq{eq:rlambda_finite_flow_appendix} by $F_{\Lambda, \text{depl}}$ since it captures the depletion of T-blocks that have been produced at an earlier RG time. The second term which captures the production of new T-blocks will be denoted by $F_{\Lambda, \text{prod}}$. As a general observation, note that the various terms in $F_{\Lambda,\text{depl}}, F_{\Lambda,\text{prod}}$ correspond to various levels of the hierarchy: from the lowest level moments $\ev{\ldots}$ to the highest level three-block joint distributions. The flow equations are a bridge between low and high levels in the hierarchy. The general structure of the two-parameter phase diagram gives us important information about low levels in the hierarchy, for example the relationship between various moments $\ev{l^T}, \ev{l^I}, \ev{d}$. So our general strategy is to bootstrap the higher level distributions from the low level information via the flow equations. 
\begin{lemma}\label{lemma1:appendix}
    Under assumption 1, $\ev{d} \sim \Lambda \log \Lambda \gg \Lambda$, $\mu^I_{\Lambda}(\Lambda) < x\rho^T_{\Lambda}(\Lambda)$, $\rho^T_{\Lambda} \approx \frac{1}{\Lambda} + \mathcal{O}(1)\frac{1}{\Lambda \rm{Poly}(\log \Lambda)})$ in the large $\Lambda$ limit along the critical separatrix. Below the critical separatrix (in the MBL phase), $\ev{d} \gg \Lambda$ and $\mu^I_{\Lambda}(\Lambda) < x \rho^T_{\Lambda}(\Lambda)$ still hold.
\end{lemma}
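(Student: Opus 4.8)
\emph{Proof strategy.} It is convenient to work in RG time $t=\log\Lambda$ and to collect the exact structural relations at our disposal: the moment flow equations for $\ev{L^I},\ev{l^T},\ev{d}$ from Lemma~\ref{lemma:1pt_function_flow}; the exact flow $\tfrac{dx}{d\Lambda}=-(1+x)y/\Lambda$ of \eqref{eq:x_flow}; the block-number flow $\tfrac{d\log N_\Lambda}{d\Lambda}=-[\rho^T_\Lambda(\Lambda)+\mu^I_\Lambda(\Lambda)]$; conservation of the total physical length, $N_\Lambda(\ev{l^T}+\ev{l^I})=L_{\mathrm{tot}}$ (each $TIT\to T$ and $ITI\to I$ move preserves total length); the definitions $x=\ev{d}/\ev{l^I}$ and, on combining $y=\Lambda^2 r_\Lambda(\Lambda)$ with $r_\Lambda(l)=\tfrac{x}{\ev{d}}\rho^T_\Lambda(l)$, $y=\Lambda^2\rho^T_\Lambda(\Lambda)/\ev{l^I}$; and the fact, established in Sec.~\ref{subsec:stability_pos1}, that the separatrix flows into $(x,f)=(0,0)$, so that $\ev{l^T}\ll\ev{l^I}$ asymptotically. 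The seed of the argument is to feed Assumption~1 into the $x$-flow: in RG time $\tfrac{dx}{dt}=-(1+x)y$, hence $y=-\tfrac{1}{1+x}\tfrac{dx}{dt}\sim\alpha\,t^{-\alpha-1}$, so that $y/x\sim\alpha/t\to0$; in particular $y\ll x$ and $y\to0$ along the separatrix.

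\emph{Bootstrap along the separatrix.} Posit, to be checked for self-consistency, that $\ev{d}\gg\Lambda$ (equivalently $\ev{l^I}\gg\Lambda/x$). Three consequences follow in sequence. First, $\ev{l^T}\ll\ev{l^I}$ plus length conservation gives $\ev{l^I}\approx L_{\mathrm{tot}}/N_\Lambda$. Second, equate the two expressions for $\tfrac{d\ev{d}}{d\Lambda}$: one from the $\ev{d}$ flow equation, the other from differentiating $\ev{d}=x\ev{l^I}$ and inserting the exact $x$-flow together with the $\ev{l^I}$ flow equation, in which the $\pm\Lambda,\mp\Lambda/x$ terms are negligible once $\ev{l^I}\gg\Lambda/x$. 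This yields $\mu^I_\Lambda(\Lambda)=x\rho^T_\Lambda(\Lambda)\,(1+o(1))$; since $x\to0$ this is the asserted $\mu^I_\Lambda(\Lambda)<x\rho^T_\Lambda(\Lambda)$, which operationally just says I-block-at-cutoff events are a factor $\sim x$ rarer than T-block-at-cutoff events and hence negligible in the comparisons used downstream. Third, writing $N_\Lambda=L_{\mathrm{tot}}\,y/\big(\Lambda^2\rho^T_\Lambda(\Lambda)\big)$ and using $\tfrac{d\log N_\Lambda}{d\Lambda}\approx-\rho^T_\Lambda(\Lambda)$ together with $\tfrac{d\log y}{d\Lambda}\approx-(\alpha+1)/(\Lambda t)$ from the seed produces a closed first-order ODE for $a(\Lambda)\equiv\rho^T_\Lambda(\Lambda)$ of the schematic form $a-\tfrac{a'}{a}\approx\tfrac{2}{\Lambda}$, whose asymptotic solution is $\rho^T_\Lambda(\Lambda)=\tfrac1\Lambda\big(1+\mathcal O(1/\log\Lambda)\big)$. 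Feeding this back, $\ev{l^I}=\Lambda^2\rho^T_\Lambda(\Lambda)/y\approx\Lambda/y\sim\Lambda\,t^{\alpha+1}\gg\Lambda/x$, confirming the posit, and $\ev{d}=x\ev{l^I}\approx x\Lambda/y\sim(t/\alpha)\,\Lambda\sim\Lambda\log\Lambda$, which is the claim $\ev{d}\sim\Lambda\log\Lambda\gg\Lambda$. A more pedestrian cross-check of $\rho^T_\Lambda(\Lambda)\approx1/\Lambda$ analyzes the PDE \eqref{eq:rhoT_flow} for $\rho^T_\Lambda(l)$ in the window $l\in(\Lambda,\,2\Lambda+\Lambda/x)$, where the production integral is empty, so $\partial_\Lambda\log\rho^T_\Lambda(l)$ is $l$-independent; matching the resulting self-similar profile to $\int_\Lambda^\infty\rho^T_\Lambda=1$ again pins the value at the cutoff.

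\emph{Below the separatrix.} Here the statement is softer. In the MBL phase $x$ freezes to a constant $x_\infty\in(0,1)$ and, by the strong-disorder picture of the localized phase, $N_\Lambda$ decays stretched-exponentially in $\Lambda$ while $\ev{l^I}\approx L_{\mathrm{tot}}/N_\Lambda$ and $\ev{d}=x_\infty\ev{l^I}$ grow stretched-exponentially, so $\ev{d}\gg\Lambda$ is immediate with no polylog precision required. The inequality $\mu^I_\Lambda(\Lambda)<x\rho^T_\Lambda(\Lambda)$ persists: the cross-check of the two $\tfrac{d\ev{d}}{d\Lambda}$ expressions still applies with $\tfrac{dx}{d\Lambda}\approx0$, and since the deficit distribution is now supported on a stretched-exponentially wide interval, I-blocks at cutoff become even rarer relative to T-blocks at cutoff, so the inequality only strengthens deeper into the MBL phase.

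\emph{Main obstacle.} The delicate part is the bootstrap in the second step. The moment flow equations do not close---their right-hand sides involve $\rho^T_\Lambda(\Lambda),\mu^I_\Lambda(\Lambda)$ and not just moments---so one must genuinely feed in the structural inputs (length conservation, normalization of $\rho^T_\Lambda$, emptiness of the $\rho^T$-production term just above the cutoff) and then argue self-consistently that the near-critical attractor $\big(\rho^T_\Lambda(\Lambda)=1/\Lambda,\ \mu^I_\Lambda(\Lambda)=x/\Lambda,\ \ev{d}=\Lambda\log\Lambda\big)$ is not merely consistent with Assumption~1 but is actually reached---in particular ruling out the competing scenario $\ev{d}=O(\Lambda)$ and pinning the coefficient of $1/\Lambda$ to exactly $1$ rather than an undetermined $O(1)$ constant. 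The correlation-induced piece $C^{TI}_{\Lambda,c}$ enters only the subleading corrections to these relations, which is why Assumption~1 alone (rather than Assumptions~2--3) is enough for this lemma; but controlling those corrections and closing the bootstrap at the ``entry'' end of the flow is where the real work lies.
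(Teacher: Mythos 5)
Your overall strategy---bootstrapping the near-critical scalings from Assumption~1 together with the exact moment flow equations---is recognizably the same as the paper's, and several individual steps are sound (the seed $y=-\dot x/(1+x)\sim \alpha t^{-\alpha-1}$, and the final assembly $\ev{d}=x\ev{l^I}\sim\Lambda\log\Lambda$). However, there are two genuine gaps.

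First, your derivation of $\mu^I_{\Lambda}(\Lambda)\approx x\rho^T_{\Lambda}(\Lambda)$ is circular. The $x$-flow \eq{eq:x_flow} is itself obtained by differentiating $x=\ev{d}/\ev{l^I}$ and inserting the flow equations for $\ev{d}$ and $\ev{l^I}$; so when you reassemble $\tfrac{d\ev{d}}{d\Lambda}$ from the $x$-flow plus the $\ev{l^I}$-flow and compare with the $\ev{d}$-flow, the exact computation returns the identity $0=0$. The relation $\mu^I_{\Lambda}(\Lambda)=x\rho^T_{\Lambda}(\Lambda)$ you extract is precisely the statement that the terms you chose to drop, $-\mu^I_{\Lambda}(\Lambda)\Lambda/x+\rho^T_{\Lambda}(\Lambda)\Lambda$, sum to zero---which does not follow from their being individually subleading relative to $(\mu^I_{\Lambda}(\Lambda)+\rho^T_{\Lambda}(\Lambda))\ev{l^I}$. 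The paper obtains the inequality $\mu^I_{\Lambda}(\Lambda)<x\rho^T_{\Lambda}(\Lambda)$ from an independent physical input: the thermal fraction $f$ must decrease monotonically along and below the separatrix, and
$\tfrac{d\log f}{d\Lambda}=\tfrac{\Lambda}{\ev{l^T}}\left[\mu^I_{\Lambda}(\Lambda)/x-\rho^T_{\Lambda}(\Lambda)\right]$,
so negativity forces the inequality. (The near-equality you assert is indeed true along the separatrix, but it is the content of Corollary~\ref{cor:appendix} and requires Assumption~3 as well.) Since your third step needs $\mu^I_{\Lambda}(\Lambda)\ll\rho^T_{\Lambda}(\Lambda)$ to write $d\log N_\Lambda/d\Lambda\approx-\rho^T_{\Lambda}(\Lambda)$, this gap propagates downstream.

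Second, the bootstrap is only a consistency check: you posit $\ev{d}\gg\Lambda$ and verify it at the end, and you yourself flag that ruling out $\ev{d}=O(\Lambda)$ is ``where the real work lies''---but that exclusion is exactly the first half of the paper's proof and cannot be deferred. Assuming $\ev{d}\sim k\Lambda$ with finite $k>1$ forces $\mu^I_{\Lambda}(\Lambda)+\rho^T_{\Lambda}(\Lambda)\approx\tfrac{k}{k-1}\tfrac1\Lambda$, hence $\ev{l^I}+\ev{l^T}\sim\Lambda^{1+1/(k-1)}$ and $x\sim\Lambda^{-1/(k-1)}$, an exponential decay in RG time contradicting Assumption~1 (with an even stronger violation at $k=1$). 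Relatedly, your ODE $a-a'/a\approx 2/\Lambda$ does not have a unique asymptotic branch: $a\sim C/\Lambda^2$ also solves it to leading order and must be excluded separately (for instance because it would give $\ev{l^I}\sim \mathrm{Poly}(\log\Lambda)\ll\Lambda/x$, contradicting $\ev{d}\geq\Lambda$); the paper's casework on $\lim_{\Lambda\to\infty}\Lambda\rho^T_{\Lambda}(\Lambda)$ versus $1$ does exactly this. Once these two holes are filled, your route---which additionally exploits length conservation $N_\Lambda(\ev{l^T}+\ev{l^I})=L_{\rm tot}$---becomes a viable and somewhat more explicit variant of the paper's argument.
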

\begin{proof}
    We begin by showing that $\ev{d} \gg \Lambda$ without fixing its precise scaling. Suppose that $\ev{d} \gg \Lambda$ doesn't hold, then since $\ev{d} > \Lambda$, there must be a finite $k \geq 1$ such that $\ev{d} \sim k \Lambda + \text{subleading}$ at large $\Lambda$. The flow equation for $\ev{d}$ thus reduces to 
    \begin{equation}
        k = \frac{d \ev{d}}{d\Lambda} = \left[\rho^T_{\Lambda}(\Lambda) + \mu^I_{\Lambda}(\Lambda)\right] (\ev{d} - \Lambda) = \left[\rho^T_{\Lambda}(\Lambda) + \mu^I_{\Lambda}(\Lambda)\right] \left[(k-1) \Lambda + o(\Lambda)\right] \,.
    \end{equation}
    If $k > 1$, then to leading order in $\Lambda$, $\mu^I_{\Lambda}(\Lambda) + \rho^T_{\Lambda}(\Lambda) \approx \frac{k}{k-1} \frac{1}{\Lambda}$. By combining the flow equation for $\ev{l^I}$ and $\ev{l^T}$, we immediately see that $\ev{l^I} + \ev{l^T} \sim \Lambda^{1 + \frac{1}{k-1}}$. As argued in the main text (and suggested by numerics) the thermal fraction flows to zero along the critical separatrix, implying that $\ev{l^I} > \ev{l^T}$. This forces the scaling $\ev{l^I} \sim \Lambda^{1+\frac{1}{k-1}}$and hence $x \sim \Lambda^{-\frac{1}{k-1}}$. For any $k > 1$, this is an exponentially fast decay in the RG time $t = \log \Lambda$, violating assumption 1. If $k = 1$, then $\mu^I_{\Lambda}(\Lambda) + \rho^T_{\Lambda}(\Lambda) \approx \frac{1}{\mathcal{O}(1)\Lambda)}$, which implies that $\ev{l^I} + \ev{l^T}$ grows faster than any power law in $\Lambda$ and $x$ decays faster than any power law in $\Lambda$. This leads to an even stronger violation of assumption 1. Hence, we conclude that $\ev{d} \gg \Lambda$. 
    
    To learn something about $\mu^I_{\Lambda}(\Lambda), \rho^T_{\Lambda}(\Lambda)$, we go back to the flow equation for $\ev{l^I}, \ev{l^T}$. Since $\ev{l^T}/\ev{l^I} \rightarrow 0$ \textbf{along and below} the critical separatrix, we must demand $\log f(\Lambda)$ to be a monotonically decreasing function at large $\Lambda$ where $f$ is the thermal fraction. Using the flow equation for $\ev{l^T}, \ev{l^I}$, it is easy to see that
    \begin{equation}
        \frac{d \log f}{d \Lambda} = \frac{\Lambda}{\ev{l^T}} \left[\frac{\mu^I_{\Lambda}(\Lambda)}{x} - \rho^T_{\Lambda}(\Lambda)\right] \,.
    \end{equation}
    In order for the RHS to be negative, we must have $\mu^I_{\Lambda}(\Lambda) < x \rho^T_{\Lambda}(\Lambda)$ as claimed in the lemma. 
    
    To further fix the precise scaling of $\rho^T_{\Lambda}(\Lambda)$ and $\ev{d}$, we go back to the flow equation for $x$:
    \begin{equation}
        \frac{dx}{d\Lambda} = - \frac{\Lambda^2 x}{\Lambda \ev{d}} \rho^T_{\Lambda}(\Lambda) = - \frac{\Lambda x \rho^T_{\Lambda}(\Lambda)}{\ev{d}} \quad \rightarrow \quad \frac{d \log x}{d \log \Lambda} = -\frac{\Lambda^2 \rho^T_{\Lambda}(\Lambda)}{\ev{d}} \,.
    \end{equation}
    In the large $\Lambda$ limit, we have already shown that $\rho^T_{\Lambda}(\Lambda) \gg \mu^I_{\Lambda}(\Lambda)$ and $\ev{d} \gg \Lambda$. Therefore, the flow equation for $\ev{d}$ reduces to $\frac{d \ev{d}}{d\Lambda} = [\rho^T_{\Lambda}(\Lambda) + \text{subleading}] (\ev{d} - \Lambda)$. We now do a simple case work. If $\lim_{\Lambda \rightarrow \infty} \Lambda \rho^T_{\Lambda}(\Lambda) > 1$, then $\ev{d}$ grows at least as fast as $\Lambda^{1+\epsilon}$ for some $\epsilon > 0$. This would mean $\frac{d \log x}{d \log \Lambda} \sim - \frac{1}{\Lambda^{\epsilon}}$ which implies that $x \sim \frac{1}{\text{Poly}(\Lambda)}$, leading to a contradiction. On the other hand, if $\lim_{\Lambda \rightarrow \infty} \Lambda \rho^T_{\Lambda}(\Lambda) < 1$, then $\ev{d} = o(\Lambda)$, violating the constraint that $\ev{d} > \Lambda$. Hence, we conclude that $\rho^T_{\Lambda}(\Lambda) \approx \frac{1}{\Lambda} + \text{subleading}$. To get the scaling of the subleading terms, we return again to the flow equation for $x$. By assumption, $x$ decays as a polynomial in the RG time $\log \Lambda$. This implies that $\frac{dx}{d\Lambda} \sim -\frac{x}{\Lambda \log \Lambda}$, which, together with the leading order scaling of $\rho^T_{\Lambda}(\Lambda)$, forces $\ev{d} \sim \Lambda \log \Lambda$. In order for this scaling to be consistent with $\rho^T_{\Lambda}(\Lambda) \approx \frac{1}{\Lambda}, \mu^I_{\Lambda}(\Lambda) < x \rho^T_{\Lambda}$ and the exact flow equation $\frac{d \ev{d}}{d\Lambda} = \left[\rho^T_{\Lambda}(\Lambda) + \mu^I_{\Lambda}(\Lambda)\right] (\ev{d} - \Lambda)$, we must demand $\rho^T_{\Lambda}(\Lambda) + \mu^I_{\Lambda}(\Lambda) \approx \frac{1}{\Lambda} + o(\frac{1}{\Lambda \log \Lambda})$. Independent of the precise scaling of $x$ with $\Lambda$, we can conclude that $\rho^T_{\Lambda}(\Lambda) \approx \frac{1}{\Lambda} + \mathcal{O}(1)\frac{1}{\Lambda \text{Poly}(\log \Lambda)})$. These are all the estimates available along the separatrix.
    
    Below the separatrix, $\rho^T_{\Lambda}(\Lambda) > \frac{1}{\Lambda}$ because T-blocks become even likely to be decimated. This means that $\ev{d} \sim e^{\int^{\Lambda} \rho^T_{\Lambda'}(\Lambda') d \Lambda'}$ grows faster than along the separatrix and hence faster than $\Lambda$. This concludes the second part of the Lemma. 
\end{proof}
\begin{lemma}\label{lemma2:appendix}
    Under assumptions 1 and 3, along the separatrix we have $r_{\Lambda}(l) \approx r_l(l)$ up to errors of $\mathcal{O}(1)\log x^{-1} x^c)$ for all $\Lambda \in (xl, l)$ where $c > 0$ and $x = x_{\Lambda}$. Below the separatrix, the error is strictly smaller, approaching $\mathcal{O}\left(\frac{1}{\rm{Superpoly}(\Lambda)}\right)$ in the large $\Lambda$ limit. 
\end{lemma}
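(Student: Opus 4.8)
The plan is to integrate the flow equation \eqref{eq:rLambda_l_flow} for $r_\Lambda(l)$ over the cutoff interval $\Lambda'\in(\Lambda,l)$, with $l$ held fixed, and to show that under Assumptions~1 and~3 together with Lemma~\ref{lemma1:appendix} every term on the right-hand side contributes only a negligible change to $r_\Lambda(l)$. The first step is to note that the production integral $\frac{x}{\ev{d}}\int C^{TIT}_{\Lambda'}(l_1,\Lambda',l-\Lambda'/x-l_1)\,dl_1$ has empty support once $\Lambda'$ exceeds $\tfrac{x_{\Lambda'}\,l}{1+2x_{\Lambda'}}$: forming a T-block of length $l$ by a $TIT\to T$ move requires two constituent T-blocks of length $\ge\Lambda'$ separated by an I-block of physical length $\ge\Lambda'/x_{\Lambda'}$, hence $l\ge 2\Lambda'+\Lambda'/x_{\Lambda'}$; since $\Lambda'/x_{\Lambda'}$ is monotone increasing in $\Lambda'$ (because $x$ decreases only slowly along the $x$-flow \eqref{eq:x_flow}), this fails throughout $(\Lambda,l)$ whenever $l\lesssim\Lambda/x_\Lambda$, which is exactly the window at hand. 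Thus on that window the flow of $r_\Lambda(l)$ is controlled entirely by the depletion bracket $-\tfrac{y}{\Lambda'}+\tfrac{\Lambda'\mu^I_{\Lambda'}(\Lambda')}{\ev{d}_{\Lambda'}}-\tfrac{2C^{TI}_{\Lambda'}(l,\Lambda')}{\rho^T_{\Lambda'}(l)}$, and it suffices to integrate this against $d\log\Lambda'$ over a range of length at most $\log(l/\Lambda)\le\log x^{-1}$.

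The second step is to bound the three depletion terms, using a bootstrap: assume provisionally $r_{\Lambda'}(l)\le 2\,r_l(l)$ on the window, deduce an $o(1)$ bound on $|\log r_l(l)-\log r_\Lambda(l)|$, and close the loop. For the first term, Assumption~1 and the $x$-flow give $y_{\Lambda'}$ bounded by its separatrix value, a negative power of $t'=\log\Lambda'$; combined with $x_{\Lambda'}\sim (t')^{-\alpha}$ this yields $y_{\Lambda'}\lesssim x_{\Lambda'}^{\,c}$ for a suitable $c>0$, so $\int \tfrac{y_{\Lambda'}}{\Lambda'}\,d\Lambda'\lesssim x^{c}\log x^{-1}$. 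For the second term, Lemma~\ref{lemma1:appendix} supplies $\ev{d}_{\Lambda'}\sim\Lambda'\log\Lambda'$ and $\mu^I_{\Lambda'}(\Lambda')<x_{\Lambda'}\rho^T_{\Lambda'}(\Lambda')\approx x_{\Lambda'}/\Lambda'$, so the integrand is $\lesssim x_{\Lambda'}/(\Lambda'\log\Lambda')$ and its integral is $\lesssim x\,\log\!\big(\tfrac{\log l}{\log\Lambda}\big)$, which is even smaller. For the third term I would split $C^{TI}_{\Lambda'}(l,\Lambda')=\rho^T_{\Lambda'}(l)\mu^I_{\Lambda'}(\Lambda')+C^{TI}_{\Lambda',c}(l,\Lambda')$; the disconnected piece again gives $\int\mu^I_{\Lambda'}(\Lambda')\,d\Lambda'\lesssim x\log x^{-1}$, while for the connected piece Assumption~3 gives $|C^{TI}_{\Lambda',c}(l,\Lambda')/\rho^T_{\Lambda'}(l)|\ll\rho^T_{\Lambda',\mathrm{crit}}(\Lambda')\approx 1/\Lambda'$ with a poly-logarithmically small prefactor, so its integral is $\lesssim \log x^{-1}/\mathrm{Poly}(\log\Lambda)$. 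Since $\log x^{-1}\sim\log\log\Lambda\ll\mathrm{Poly}(\log\Lambda)$, all contributions are at most $\mathcal{O}(x^{c}\log x^{-1})$, which simultaneously closes the bootstrap and proves the first assertion. For the below-separatrix statement I would note that $x_\Lambda$ freezes, that $\rho^T_\Lambda(\Lambda)>1/\Lambda$ forces $\ev{d}\sim\exp\!\big(\int^\Lambda\rho^T_{\Lambda'}(\Lambda')\,d\Lambda'\big)$ to grow faster than any power, and hence $\mu^I_\Lambda(\Lambda)$, $y_\Lambda$, and the Assumption-3 slack $\rho^T_{\mathrm{crit}}/\mu^I$ become stretched-exponentially small (resp.\ large); feeding these into the same estimates replaces $\mathcal{O}(x^{c}\log x^{-1})$ by $\mathcal{O}(1/\mathrm{Superpoly}(\Lambda))$.

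The main obstacle is controlling the connected-correlator term $2C^{TI}_{\Lambda',c}(l,\Lambda')/\rho^T_{\Lambda'}(l)$: the moment estimates of Sec.~\ref{subsec:stability_pos1} do not by themselves give a pointwise bound on this object — convergence of joint moments to products of marginals need not force pointwise convergence of densities — which is exactly why Assumption~3 is imposed. Even granting Assumption~3, the delicate point is that the depletion integral runs over $d\log\Lambda'$ and so picks up a factor $\log(l/\Lambda)=\log x^{-1}$; the poly-logarithmic slack in Assumption~3 must therefore be genuinely stronger than logarithmic, which is guaranteed only because near a $\nu=\infty$ fixed point $x$ decays as a negative power of $\log\Lambda$, so $\log x^{-1}\sim\log\log\Lambda$ is beaten by any positive power of $\log\Lambda$. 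A secondary nuisance is the bootstrap bookkeeping: one must check that $r_{\Lambda'}(l)\le 2r_l(l)$ is self-improving and that the neglected production sliver at $\Lambda'\lesssim xl$ — which is what actually builds $r_\Lambda(l)$ up to its near-final value — lies outside the open window $(xl,l)$ and so never contradicts the claimed constancy. Everything else is routine integration of the elementary bounds above.
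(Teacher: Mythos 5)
Your proposal is correct and follows essentially the same route as the paper's proof: drop the production term on the window $\Lambda\in(xl,l)$ because a $TIT\to T$ move at cutoff $\Lambda'$ only produces blocks of length $\gtrsim\Lambda'/x$, then bound the three depletion contributions using the estimates $\ev{d}\sim\Lambda\log\Lambda$, $\mu^I_{\Lambda}(\Lambda)<x\rho^T_{\Lambda}(\Lambda)\approx x/\Lambda$ and Assumption~3, and integrate in $d\log\Lambda'$ to get the $\mathcal{O}(x^c\log x^{-1})$ error along the separatrix and the superpolynomially small error below it. Your explicit connected/disconnected split of $C^{TI}$ and the bootstrap are harmless refinements (the latter is unnecessary since the depletion equation is linear in $r_{\Lambda}(l)$ and can simply be exponentiated, as the paper does).
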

\begin{proof}
    In the flow equation \eq{eq:rlambda_finite_flow_appendix}, production terms give zero contribution whenever $\Lambda \in (xl,l)$ because the shortest T-block created from a $TIT \rightarrow T$ move at cutoff $\Lambda$ has length greater than or equal to $\frac{\Lambda}{x}$. Hence we may focus only on the depletion term
    \begin{equation}
        \partial_{\Lambda} r_{\Lambda}(l) = \left(- \frac{y}{\Lambda} + \frac{\Lambda \mu^I_{\Lambda}(\Lambda)}{\ev{d}} - \frac{2 C^{TI}_{\Lambda}(l,\Lambda)}{\rho^T_{\Lambda}(l)}\right) r_{\Lambda}(l) \quad \rightarrow \quad r_{\Lambda_2}(l) = e^{\int_{\Lambda_1}^{\Lambda_2} \left(- \frac{y}{\Lambda} + \frac{\Lambda \mu^I_{\Lambda}(\Lambda)}{\ev{d}} - \frac{2 C^{TI}_{\Lambda}(l,\Lambda)}{\rho^T_{\Lambda}(l)}\right) d \Lambda} r_{\Lambda_1}(l) \,.
    \end{equation}
    We first work along the separatrix. At large $\Lambda$, $y = \frac{x}{\ev{d}} \rho^T_{\Lambda}(\Lambda) \Lambda^2 \sim \frac{x \Lambda}{\ev{d}}$. Since $x \ll 1$, $\frac{y}{\Lambda} = \frac{x}{\ev{d}} \sim \frac{x}{\Lambda \log \Lambda}$. Similarly, as $\mu^I_{\Lambda}(\Lambda) \approx \frac{x}{\Lambda}$, the second term $\frac{\Lambda \mu^I_{\Lambda}(\Lambda)}{\ev{d}} \approx \frac{x}{\ev{d}} \approx \frac{x}{\Lambda \log \Lambda}$. Finally, by assumption 3, $C^{TI}_{\Lambda}(l,\Lambda)/\rho^T_{\Lambda}(l) \ll \rho^T_{\Lambda}(\Lambda) \sim \frac{1}{\Lambda}$. Here $\ll$ means a multiplicative factor that's a negative power of the RG time $t = \log \Lambda$. By the assumption of critical slowing down, this can also be formulated as a factor $x^c$ for some $c > 0$.
    \begin{equation}
        \int_{xl}^{l} \left(- \frac{y}{\Lambda} + \frac{\Lambda \mu^I_{\Lambda}(\Lambda)}{\ev{d}} - \frac{2 C^{TI}_{\Lambda}(l,\Lambda)}{\rho^T_{\Lambda}(l)}\right) d \Lambda \lesssim \int_{\log (xl)}^{\log l} d (\log \Lambda) x^c \approx \log x^{-1} x^c \,.
    \end{equation}
    Along the separatrix, $x \rightarrow 0$ at large $\Lambda$. Therefore, for sufficiently large $\Lambda$, we have
    \begin{equation}
        r_l(l) \approx e^{- \log x^{-1} x^c} r_{\Lambda}(l) \approx (1 - \log x^{-1} x^c) r_{\Lambda}(l) \,.
    \end{equation}
    implying that $r_l(l), r_{xl}(l)$ are equal up to small corrections. Since $r_{\Lambda}(l)$ decreases monotonically from $\Lambda = xl$ to $l$, it must in fact be true that $r_l(l) \approx r_{\Lambda}(l)$ for all $\Lambda \in (xl,l)$ which is what we set out to prove.
    
    Below the separatrix, we can reexamine the scaling of all three terms. As the flow deviates from the separatrix and tends towards the MBL fixed line, $y$ and $\mu^I_{\Lambda}(\Lambda)$ decay to 0 faster, while $\ev{d}$ grows to $\infty$ faster as $\Lambda$ increases, since the decimation rate of T-blocks increases monotonically. This trend makes all three terms decay faster towards 0. Hence we still have $r_l(l) \approx r_{\Lambda}(l)$ for all $\Lambda \in (xl,l)$. For very large $\Lambda$, $\rho^T_{\Lambda}(\Lambda)$ will settle into some asymptotic scaling with $\Lambda$ such that $\rho^T_{\Lambda}(\Lambda) \gg \mathcal{O}(1)\frac{1}{\Lambda})$. This implies that $\ev{d}, y \sim e^{\int^{\Lambda} \rho^T_{\Lambda'}(\Lambda')} \sim \rm Superpoly(\Lambda)$. Hence, the asymptotic error rate is 
    \begin{equation}
        r_l(l) \approx r_{\Lambda}(l) + \mathcal{O}\left( \frac{1}{\rm Superpoly(\Lambda)} \right) \,,
    \end{equation}
    as we set out to show. 
\end{proof}
\begin{lemma}\label{lemma3:appendix}
    The following more concrete estimates about the distribution of T-block lengths hold along the separatrix: $\rho^T_{\Lambda}(l) \approx \frac{x_{\Lambda}^{-1} \Lambda \log \Lambda}{x_l^{-1} l^2 \log l}$ for $l \in [\Lambda, 2\Lambda + \frac{\Lambda}{x}]$. This in turn implies that $\ev{l} \approx \Lambda \log x^{-1}$. 
\end{lemma}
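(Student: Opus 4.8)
The plan is to combine the constancy statement of Lemma~\ref{lemma2:appendix} with the one-point estimates of Lemma~\ref{lemma1:appendix}, evaluating everything at the scale where the argument $l$ first becomes ``producible.'' First I would fix $l\in[\Lambda,2\Lambda+\frac{\Lambda}{x}]$ and observe that, since $x\ll 1$ near the fixed point, the upper endpoint is $\frac{\Lambda}{x}(1+O(x))$, so $\Lambda$ lies inside the window $(x_\Lambda l,\,l)$ up to subleading corrections; equivalently, a T-block of length $l$ in this range cannot have been created by a $TIT\to T$ move at the current cutoff (which needs total length $\geq 2\Lambda+\frac{\Lambda}{x}$), so $\rho^T_\Lambda(l)$ only depletes and Lemma~\ref{lemma2:appendix} gives $r_\Lambda(l)\approx r_l(l)$ along the separatrix, with error $O(\log x^{-1}\,x^c)$.

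Next I would unpack the definitions at cutoff $l$: $r_l(l)=y_l/l^2$, with $y_l=\Lambda^2 r_\Lambda(\Lambda)$ evaluated at $\Lambda=l$. Applying Lemma~\ref{lemma1:appendix} at cutoff $l$ --- still on the separatrix, since the forward flow from $\Lambda$ to $l>\Lambda$ stays on it --- gives $\rho^T_l(l)\approx 1/l$ and $\ev{d}_l\approx l\log l$, hence
\[ y_l=\frac{x_l}{\ev{d}_l}\,\rho^T_l(l)\,l^2\approx\frac{x_l}{\log l},\qquad r_l(l)\approx\frac{x_l}{l^2\log l}. \]
Reinserting $\rho^T_\Lambda(l)=\frac{\ev{d}_\Lambda}{x_\Lambda}\,r_\Lambda(l)$ and using $\ev{d}_\Lambda\approx\Lambda\log\Lambda$ from Lemma~\ref{lemma1:appendix} then yields
\[ \rho^T_\Lambda(l)\approx\frac{\Lambda\log\Lambda}{x_\Lambda}\cdot\frac{x_l}{l^2\log l}=\frac{x_\Lambda^{-1}\Lambda\log\Lambda}{x_l^{-1}l^2\log l}, \]
which is the claimed scaling.

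For the mean I would compute $\ev{l}=\int_\Lambda^\infty l\,\rho^T_\Lambda(l)\,dl$. The crucial simplification, which uses Assumption~1, is that for $l\lesssim\Lambda/x$ one has $\log x^{-1}=\alpha\log\log\Lambda\ll\log\Lambda$, so the slowly varying factors obey $x_l\approx x_\Lambda$ and $\log l\approx\log\Lambda$ across the entire window; hence $\rho^T_\Lambda(l)\approx\Lambda/l^2$ to leading logarithmic accuracy. This form is automatically (approximately) normalized, since $\int_\Lambda^\infty\Lambda/l^2\,dl=1$ and the genuine tail beyond $\Lambda/x$, which decays faster than $1/l^2$, contributes only $O(x)$. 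The mean is then dominated by the logarithmic interval $[\Lambda,\Lambda/x]$: $\ev{l}\approx\Lambda\int_\Lambda^{\Lambda/x}\frac{dl}{l}=\Lambda\log x^{-1}$.

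I expect the main obstacle to be the edge of the window, $l$ of order $\Lambda/x$, where Lemma~\ref{lemma2:appendix} is only marginally applicable and the $TIT\to T$ production term begins to contribute: one must verify that the $+2\Lambda$ shift and this crossover region do not affect the leading behavior, and that the accumulated depletion error $O(\log x^{-1}\,x^c)$ remains subleading relative to the $\log l\approx\log\Lambda$ and $x_l\approx x_\Lambda$ approximations. A lesser point requiring care is the below-separatrix regime, where the same chain of substitutions applies but one should invoke the sharper (super-polynomially small) error bounds of Lemmas~\ref{lemma1:appendix} and~\ref{lemma2:appendix}; there the argument is strictly easier.
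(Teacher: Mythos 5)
Your proposal is correct and follows essentially the same route as the paper: invoke Lemma~\ref{lemma2:appendix} to set $r_\Lambda(l)\approx r_l(l)$, evaluate $r_l(l)$ with the cutoff-$l$ estimates $\rho^T_l(l)\approx 1/l$ and $\ev{d}_l\approx l\log l$ from Lemma~\ref{lemma1:appendix}, solve for $\rho^T_\Lambda(l)$, and then integrate over the logarithmic window $[\Lambda,\Lambda/x]$ using the slow variation of $x_l$ and $\log l$ (which the paper phrases as $x_l^{-1}=\mathrm{Poly}(\log\Lambda)$ after a change of variables to $\log l$) to obtain $\ev{l}\approx\Lambda\log x^{-1}$. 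The additional care you flag about the edge $l\sim\Lambda/x$ and the tail beyond it is reasonable but not something the paper dwells on.
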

\begin{proof}
    From Lemma~\ref{lemma2:appendix}, we know that for $l \in [\Lambda, \frac{\Lambda}{x} + 2 \Lambda]$,
    \begin{equation}
        r_l(l) \approx r_{\Lambda}(l) \quad r_{\Lambda}(l) = \frac{x_{\Lambda}}{\ev{d}(\Lambda)} \rho^T_{\Lambda}(l) \,.
    \end{equation}
    Using the scaling estimates for $\ev{d}(\Lambda), \rho^T_{\Lambda}(\Lambda)$ and assumption 1, we can immediately prove the estimate for $\rho^T_{\Lambda}(l)$
    \begin{equation}
        \frac{x_l}{l \log l} \frac{1}{l} \approx \frac{x_{\Lambda}}{\Lambda \log \Lambda} \rho^T_{\Lambda}(l) \quad \rightarrow \quad \rho^T_{\Lambda}(l) \approx \frac{x_{\Lambda}^{-1} \Lambda \log \Lambda}{x_l^{-1} l^2 \log l} \,.
    \end{equation}
    The above formula implies an estimate on the average T-block length
    \begin{equation}
        \ev{l} \approx \int_{\Lambda}^{\Lambda/x} \frac{x_{\Lambda}^{-1} \Lambda \log \Lambda}{x_l^{-1} l^2 \log l} l dl \approx x_{\Lambda}^{-1} \Lambda \log \Lambda \int_{\log \Lambda}^{\log \Lambda + \log x^{-1}} \frac{1}{x_l^{-1} \log l} d (\log l)  \,.
    \end{equation}
    By assumption 1 again, $x_{\Lambda}^{-1} = \text{Poly}(\log \Lambda)$. This immediately implies that 
    \begin{equation}
        \ev{l} \approx \frac{x_{\Lambda}^{-1}\Lambda \log \Lambda}{x_{\Lambda}^{-1} \log \Lambda} \log x^{-1} \approx \Lambda \log x^{-1} \,,
    \end{equation}
    which is what we set out to show. 
\end{proof}
Using the estimates for $\rho^T_{\Lambda}(\Lambda)$, we can obtain some more precise control over $\mu^I_{\Lambda}(\Lambda)$ and $x$:
\begin{corollary}\label{cor:appendix}
    Under assumptions 1 and 3 and along the separatrix, $\mu^I_{\Lambda}(\Lambda)/x \approx \rho^T_{\Lambda}(\Lambda) + \mathcal{O}(1)\frac{1}{\Lambda \log \Lambda})$. 
\end{corollary}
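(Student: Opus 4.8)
The plan is to read off the claimed near-equality directly from the \emph{exact} moment flow equations \eqw{eq:LI_flow}, \eqw{eq:LT_flow}, which by Lemma~\ref{lemma:1pt_function_flow} keep their uncorrelated form, so that all the correlated physics enters only through the $\Lambda$-dependence of $\ev{d}$ and $\ev{l^T}$ already pinned down by Lemmas~\ref{lemma1:appendix} and~\ref{lemma3:appendix}. First I would subtract the logarithmic derivatives of $\ev{l^T}$ and $\ev{l^I}$ obtained from those flow equations; a one-line manipulation gives the exact identity
\begin{equation}
    \frac{d}{d\Lambda}\log\frac{\ev{l^T}}{\ev{l^I}} = \Lambda\left(\frac{1}{\ev{l^T}} + \frac{1}{\ev{l^I}}\right)\left(\frac{\mu^I_{\Lambda}(\Lambda)}{x} - \rho^T_{\Lambda}(\Lambda)\right) \,.
\end{equation}
Since $\ev{l^I}\gg\ev{l^T}$ along and below the separatrix (Lemma~\ref{lemma1:appendix}), the prefactor collapses to $\Lambda/\ev{l^T}$, and solving for the quantity of interest gives $\tfrac{\mu^I_{\Lambda}(\Lambda)}{x} - \rho^T_{\Lambda}(\Lambda) \approx \tfrac{\ev{l^T}}{\Lambda}\,\tfrac{d}{d\Lambda}\log(\ev{l^T}/\ev{l^I})$, so the whole task reduces to bounding this product.

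Next I would estimate the two factors with the lemmas in hand. Lemma~\ref{lemma3:appendix} gives $\ev{l^T}\approx\Lambda\log x^{-1}$, so $\ev{l^T}/\Lambda\approx\log x^{-1}$, which under Assumption~1 ($x$ decaying as a power of $\log\Lambda$) is merely $\mathrm{Poly}(\log\log\Lambda)$. For the logarithmic derivative I would write $\ev{l^T}/\ev{l^I} = x\,\ev{l^T}/\ev{d}$ and differentiate term by term: the exact flow equation \eqw{eq:x_flow} together with $\rho^T_{\Lambda}(\Lambda)\approx 1/\Lambda$ and $\ev{d}\sim\Lambda\log\Lambda$ from Lemma~\ref{lemma1:appendix} gives $\tfrac{d\log x}{d\Lambda}\sim -1/(\Lambda\log\Lambda)$; $\ev{l^T}\approx\Lambda\log x^{-1}$ gives $\tfrac{d\log\ev{l^T}}{d\Lambda} = 1/\Lambda + \mathcal{O}(1/(\Lambda\log\Lambda\log x^{-1}))$; and $\ev{d}\sim\Lambda\log\Lambda$ gives $\tfrac{d\log\ev{d}}{d\Lambda} = 1/\Lambda + \mathcal{O}(1/(\Lambda\log\Lambda))$. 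The key point is that the leading $1/\Lambda$ pieces of the last two cancel, leaving
\begin{equation}
    \frac{d}{d\Lambda}\log\frac{\ev{l^T}}{\ev{l^I}} = \mathcal{O}\left(\frac{1}{\Lambda\log\Lambda}\right) \,.
\end{equation}

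Multiplying the two estimates then yields $\tfrac{\mu^I_{\Lambda}(\Lambda)}{x} - \rho^T_{\Lambda}(\Lambda) = \mathcal{O}(\log x^{-1}/(\Lambda\log\Lambda))$; since Assumption~1 forces $\log x^{-1}\ll\log\Lambda$, the right side is $1/\Lambda$ times an iterated-logarithmically small factor, negligible beside $\rho^T_{\Lambda}(\Lambda)\approx 1/\Lambda$, and in the conventions of this appendix this is what is meant by $\mathcal{O}(1/(\Lambda\log\Lambda))$ --- which is the corollary. I expect the crux to be the cancellation inside the second step: it requires the \emph{precise} leading scalings $\ev{l^T}\approx\Lambda\log x^{-1}$ and $\ev{d}\sim\Lambda\log\Lambda$, not just orders of magnitude, so one must track which subleading corrections in Lemmas~\ref{lemma1:appendix} and~\ref{lemma3:appendix} are actually controlled and check that their derivatives are $o(1/(\Lambda\log\Lambda))$; the stray $\log x^{-1}$ prefactor is a secondary nuisance, harmless only because Assumption~1 ties $x$ to a power of $\log\Lambda$.
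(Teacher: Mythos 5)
Your proposal is correct and follows essentially the same route as the paper: the identity you derive by subtracting the logarithmic derivatives of $\ev{l^T}$ and $\ev{l^I}$ is exactly the thermal-fraction flow $\frac{d\log f}{d\Lambda}=\frac{\Lambda}{\ev{l^T}}\left[\mu^I_{\Lambda}(\Lambda)/x-\rho^T_{\Lambda}(\Lambda)\right]$ used in the paper's proof, and your cancellation of the leading $1/\Lambda$ pieces in $\frac{d}{d\Lambda}\log(x\ev{l^T}/\ev{d})$ is the same computation as the paper's observation that $f_I\sim\log\Lambda/x$ forces $\frac{d\log f_I}{d\log\Lambda}=\mathcal{O}(1/\log\Lambda)$, with the same inputs from Lemmas~\ref{lemma1:appendix} and~\ref{lemma3:appendix}. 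The residual $\log x^{-1}$ factor you flag is present in the paper's derivation too and is absorbed there in exactly the way you describe.
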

\begin{proof}
    By assumptions 1 and 3, the previous two lemmas hold and we have $\mu^I_{\Lambda}(\Lambda) < x \rho^T_{\Lambda}(\Lambda)$. Here, we want to show that they are in fact equal to leading order in $\Lambda$. For that we return to the flow equation for the thermal fraction $f = \frac{\ev{l^T}}{\ev{l^I} + \ev{l^T}}$ (for notational clarity we use $\ev{l^T}$ instead of $\ev{l}$ to denote the average T-block length in this proof). Along the critical separatrix, we have $\ev{l^T} = \Lambda f_T(\Lambda), \ev{l^I} = \Lambda f_I(\Lambda)$ where $f_T(\Lambda) \sim \log x^{-1}$ and $f_I(\Lambda) \sim \frac{\log \Lambda}{x}$ as we flow towards the fixed point at $f = 0$. The flow equation for $f$ dictates that
    \begin{equation}
        \frac{d \log f}{d \log \Lambda} \approx -\frac{\Lambda^2}{\ev{l^T}} \left[-\mu^I_{\Lambda}(\Lambda)/x + \rho^T_{\Lambda}(\Lambda)\right] \,.
    \end{equation}
    Now let us evaluate the LHS approximately
    \begin{equation}
        \frac{d \log f}{d \log \Lambda} = \frac{d \log f_T(\Lambda)}{d \log \Lambda} - \frac{d \log \left[f_T(\Lambda) + f_I(\Lambda)\right]}{d \log \Lambda} \approx - \frac{d \log f_I(\Lambda)}{d \log \Lambda} + \text{subleading} \,,
    \end{equation}
    where in the last step we used the fact that $f_I(\Lambda)$ grows much faster than $f_T(\Lambda)$ at large $\Lambda$. This estimate implies that:
    \begin{equation}
        \frac{d \log f_I(\Lambda)}{d \log \Lambda} \approx \mathcal{O}\left(\frac{\Lambda \left[-\mu^I_{\Lambda}(\Lambda)/x + \rho^T_{\Lambda}(\Lambda)\right]}{\log x^{-1}} \right) \,.
    \end{equation}
    But if $f_I(\Lambda) \sim \log \Lambda/x$, $\frac{d \log f_I(\Lambda)}{d \log \Lambda} = \mathcal{O}(1)\frac{1}{\log \Lambda})$. Since $\log x^{-1}$ is smaller than any polynomial in $\log \Lambda$, the equation above is inconsistent unless $\mu^I_{\Lambda}(\Lambda)/x - \rho^T_{\Lambda}(\Lambda) = \mathcal{O}(1)\frac{1}{\Lambda \log \Lambda})$. 
\end{proof}
These lemmas already allow us to study the relationship between $x,y$ strictly along the separatrix. But since we are interested in the entire region below the separatrix, we need to give a more precise argument for how the RHS of \eq{eq:rlambda_finite_flow_appendix} scales. This is the goal for the rest of this appendix. In step (1), we obtain a compact formula for $F_{\Lambda, \text{prod}}$ valid in the large $\Lambda$ limit. In step (2) we argue that $F_{\Lambda,\text{depl}}$ is suppressed relative to $F_{\Lambda, \text{prod}}$, thereby establishing the recursion relation for $r_{\Lambda}(l)$ and hence for $y = r_{\Lambda}(\Lambda) \Lambda^2$. 

As a preface to the remaining arguments, we emphasize that the logic above is a direct generalization of the logic for uncorrelated MHI explained in Ref.~\onlinecite{Morningstar_Huse_Imbrie_2020}. In the uncorrelated case, step (1) requires a careful analysis of the decay properties of the integrand $\rho^T_{\Lambda'}(l_1) \mu^I_{\Lambda'}(\Lambda') \rho^T_{\Lambda'}(l-\Lambda'/x-l_1)$ in the integration domain Fig.~\ref{fig:IntDomain}. For the correlated case, the integration domain doesn't change, but the integrand no longer factorizes. Thus the main challenge is to use the lemmas and fundamental assumptions to argue for similar decay properties. Step (2) in the uncorrelated case is a direct application of Lemma~\ref{lemma2:appendix}. With correlations, the argument essentially goes through unscathed given the estimates already proven in the lemmas. The recursion relation can then be plugged into the main text to derive critical properties.

Finally, for readers who are already familiar with the MHI argument in Ref.~\onlinecite{Morningstar_Huse_Imbrie_2020}, we will make some occasional comments emphasizing essential ingredients here that are different from those in MHI. Our argument should be comprehensible without reading these comments. 

\subsection{Compact formula for \texorpdfstring{$F_{\Lambda, \text{prod}}$}{}}

We start by writing down the production term without splitting it into disconnected and connected parts:
\begin{equation}
    F_{\Lambda, \rm prod}(l) = \int_{x\Lambda}^{\Lambda} d \Lambda' \frac{x_{\Lambda'}}{\ev{d}_{\Lambda'}} \int_{\Lambda'}^{l - \Lambda'(1+x^{-1})} C^{TIT}_{\Lambda'}\left(l_1,\Lambda', l-\frac{\Lambda'}{x} - l_1\right) dl_1 \,.
\end{equation}
Following Ref.~\onlinecite{Morningstar_Huse_Imbrie_2020}, we make a change of variables from $\Lambda' \rightarrow l_2 = l - \frac{\Lambda'}{x} - l_1$. Since $x$ varies slowly with $\Lambda$, we can disregard the dependence of $x$ on $\Lambda$ and pull it out of the integrals. This allows us to simplify the production term as
\begin{equation}
    F_{\Lambda, \rm prod}(l) \approx x^2 \int_{D} dl_2 dl_1 \frac{1}{\ev{d}_{\Lambda'}} C^{TIT}_{\Lambda'}(l_1, \Lambda', l_2) \,,
\end{equation}
where the integration domain $D$ is the blue isosceles triangle shown in Fig.~\ref{fig:IntDomain}. 
\begin{figure}
    \centering
    \includegraphics[width=\linewidth/2,clip]{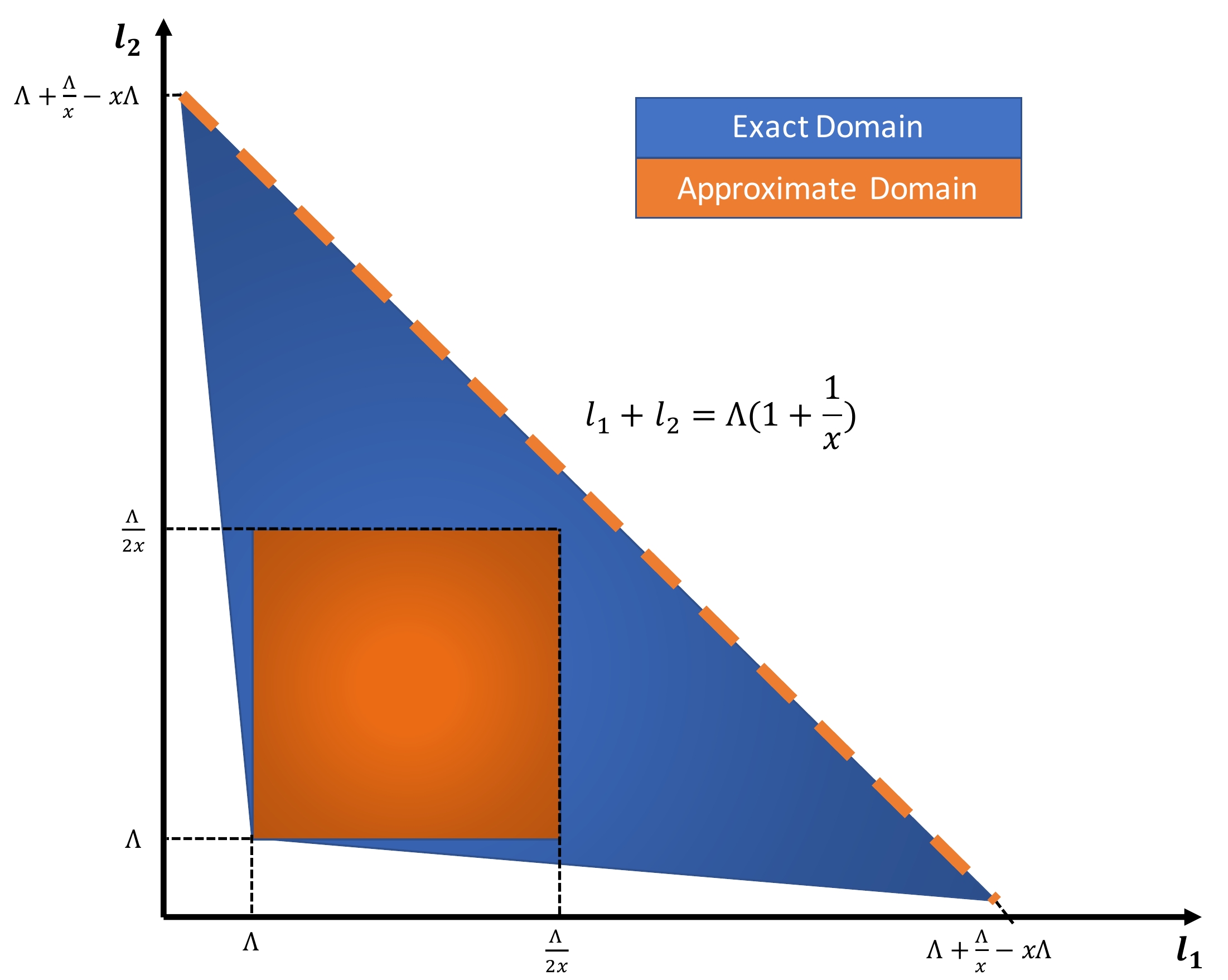}
    \caption{In this plot we show the exact and approximate integration domains relevant for the production term . The opening angle of the isosceles triangle (exact domain) approaches $\pi/2$ as $x \rightarrow 0$. }
    \label{fig:IntDomain}
\end{figure}
The integrand now involves a three-block correlation. We use the flow equation \eq{eq:3_neighbor_corr_flow} derived before to bound its growth in the region $x \max(l_1,l_2)\leq \Lambda' \leq \min(l_1,l_2)$ (this range is chosen so that we can drop the production term in \eq{eq:3_neighbor_corr_flow} involving integrals over five-block distributions):
\begin{equation}
    \begin{aligned}
        \partial_{\Lambda'} (\frac{C^{TIT}_{\Lambda'}(l_1,\Lambda',l_2)}{\ev{d}_{\Lambda'}}) &\approx - \frac{1}{\ev{d}_{\Lambda'}^2} \left[\rho^T_{\Lambda'}(\Lambda') + \mu^I_{\Lambda'}(\Lambda')\right] \ev{d}_{\Lambda'} C^{TIT}_{\Lambda'}(l_1,\Lambda',l_2) \\
        &+ \frac{1}{\ev{d}_{\Lambda'}} \bigg\{C^{TIT}_{\Lambda'}(l_1,\Lambda',l_2)\left[\rho^T_{\Lambda'}(\Lambda') + \mu^I_{\Lambda'}(\Lambda')\right] - C^{TITI}_{\Lambda'}(l_1,\Lambda',l_2,\Lambda') \\
        &- C^{ITIT}_{\Lambda'}(\Lambda',l_1,\Lambda', l_2) + \partial_d C^{TIT}_{\Lambda'}(l_1, d, l_2)|_{d=\Lambda'}\bigg\} \,.
    \end{aligned}
\end{equation}
The first line cancels nicely with the first term on the second line. The remaining terms are:
\begin{equation}
    \partial_{\Lambda'} (\frac{C^{TIT}_{\Lambda'}(l_1,\Lambda',l_2)}{\ev{d}_{\Lambda'}}) \approx \frac{1}{\ev{d}_{\Lambda'}} \bigg\{ - C^{TITI}_{\Lambda'}(l_1,\Lambda',l_2,\Lambda') - C^{ITIT}_{\Lambda'}(\Lambda',l_1,\Lambda', l_2) + \partial_d C^{TIT}_{\Lambda'}(l_1, d, l_2)|_{d=\Lambda'}\bigg\} \,.
\end{equation}
Using assumptions 2 and 3, all three terms can be bounded by $\frac{1}{\ev{d}_{\Lambda'}} C^{TIT}_{\Lambda'}(l_1, \Lambda', l_2) \rho^T_{\Lambda'}(\Lambda') x^c$. By an easy adaptation of the argument in Lemma~\ref{lemma2:appendix}, we have the uniform estimate:
\begin{equation}
    \boxed{\frac{C^{TIT}_{\Lambda'}(l_1,\Lambda',l_2)}{\ev{d}_{\Lambda'}} \approx \text{const} \quad \forall \max(l_1,l_2)x \leq \Lambda' \leq \min(l_1,l_2) \,.}
\end{equation}
Now we are ready to tackle the double integral over $D$ (the blue + orange region in Fig.~\ref{fig:IntDomain}). The subset of the domain where $l_1,l_2 \geq \Lambda$ is an isosceles right triangle with leg length $\frac{\Lambda}{x} - x\Lambda$. Denote this right triangle by $T$ and the remaining narrow wedges $D \setminus T$. Since the two narrow wedges are symmetric about the line $l_1 = l_2$, we need only demonstrate that the integral over one of the wedges is suppressed relative to the integral over the right triangular domain. This can be accomplished in two steps:
\begin{enumerate}
    \item Let us consider first the right triangular domain $T$. The difficulty in estimating this integral is that the cutoff $\Lambda'$ of the integrand depends implicitly on $l_1, l_2$. We would like to make a series of approximations until we can evaluate the integral at a fixed cutoff. 
    
    To do that, we first consider the orange square $[\Lambda, \frac{\Lambda}{2x}] \times [\Lambda, \frac{\Lambda}{2x}]$ contained in $T$. Outside of this square, $l_1 \geq \frac{\Lambda}{2x}$ or $l_2 \geq \frac{\Lambda}{2x}$, implying that $\Lambda' = \Lambda + x (2\Lambda - l_1-l_2) \leq \frac{\Lambda}{2}$. As a result, $\frac{\Lambda'}{l_1} \leq x$ or $\frac{\Lambda'}{l_2} \leq x$. But since the least unlikely way to produce a large T-block with $l_1 \geq \frac{\Lambda'}{x}$ is to combine two positively correlated T-blocks at cutoff, $C^{TIT}_{\Lambda'}(l_1, \Lambda', l_2)$ must be exponentially decaying in $l_1,l_2$ for all $(l_1,l_2) \notin [\Lambda, \frac{\Lambda}{2x}] \times [\Lambda, \frac{\Lambda}{2x}]$ with a decay length on the order of $\frac{\Lambda'}{x}$. These exponential tails will give small corrections and for a leading order approximation we can restrict the integral to the orange square from now on.
    
    For readers familiar with MHI, note that our argument is \textit{subtly different} from the original MHI argument that replaced the cutoff $\Lambda'$ with $\Lambda$ everywhere in the orange square. This replacement is puzzling because $x \max (l_1,l_2) \leq \Lambda' \leq \min (l_1, l_2)$ is \textbf{not true everywhere in D} and the analogue of Lemma~\ref{lemma2:appendix} for uncorrelated RG cannot be applied. We circumvent this possible loophole by restricting to an even smaller square $S' = [\Lambda, \frac{\Lambda}{x^{1/2}}] \times [\Lambda, \frac{\Lambda}{x^{1/2}}]$. In this smaller domain, $x\max(l_1,l_2) = x^{1/2} \Lambda, \min(l_1,l_2) = \Lambda$ and $\Lambda' = \Lambda + 2x (\Lambda - l_1 - l_2) \leq \Lambda + 2x \Lambda - 2 x^{1/2} \Lambda$. When $x \ll 1$, $x \max(l_1,l_2) \leq \Lambda' \leq \min(l_1,l_2)$ indeed holds and
    \begin{equation}
        \frac{C^{TIT}_{\Lambda'}(l_1, \Lambda', l_2)}{\ev{d}_{\Lambda'}} \approx \frac{C^{TIT}_{\Lambda}(l_1, \Lambda, l_2)}{\ev{d}_{\Lambda}} \quad \forall (l_1,l_2) \in S' \,.
    \end{equation}
    Within $S'$, we can also estimate the decay of the integrand with $l_1,l_2$. Without loss of generality, let $l_1 \leq l_2$. By monotonicity of the probability distributions, 
    \begin{equation}
        \frac{C^{TIT}_{\Lambda'}(l_1, \Lambda', l_2)}{\ev{d}_{\Lambda'}}  \approx \frac{C^{TIT}_{\Lambda}(l_1, \Lambda, l_2)}{\ev{d}_{\Lambda}} \quad \forall l_1, l_2 \in [\Lambda, \frac{\Lambda}{x^{1/2}}] \,.
    \end{equation}
    For positive correlations, at cutoff $l_1$, $\ev{d}, \ev{l} \gg l_1,l_2$ and $C^{TIT}_{\Lambda}(l_1, \Lambda, l_2)$ should be enhanced relative to $\rho^T_{\Lambda}(l_1)\rho^T_{\Lambda}(l_2) \mu^I_{\Lambda}(\Lambda)$. However, the enhancement does not change the power law scaling in $l_1, l_2$ according to assumption 3. Therefore, by Lemma~\ref{lemma3:appendix}, we have that along the separatrix,
    \begin{equation}
        \frac{C^{TIT}_{\Lambda'}(l_1, \Lambda', l_2)}{\ev{d}_{\Lambda'}} \sim \mathcal{O}\left(\frac{1}{l_1^2 l_2^2}\right)  \,.
    \end{equation}
    When $(l_1,l_2) \in S \setminus S'$, we necessarily have $\frac{C^{TIT}_{\Lambda'}(l_1, \Lambda', l_2)}{\ev{d}_{\Lambda'}} \leq \frac{C^{TIT}_{\Lambda}(l_1, \Lambda, l_2)}{\ev{d}_{\Lambda}}$ since $\Lambda' < \Lambda$. The $\frac{1}{l_1^2l_2^2}$ decay therefore guarantees that the contributions from the region $S \setminus S'$ is suppressed by a factor of $x$ relative to the contributions from $S'$ and can be thus neglected. Below the separatrix, the decay of $C^{TIT}_{\Lambda}(l_1, \Lambda, l_2)$ with $l_1,l_2$ has to be faster because flowing towards the MBL fixed line implies a clustering of T-blocks close to the cutoff. Therefore, the suppression factor is much smaller than $x$ and there is no additional complication.
    
    After restricting to $S$, we can shift the cutoff uniformly to $\Lambda$ so that
    \begin{equation}
        x^2 \int_{S'} \frac{C^{TIT}_{\Lambda'}(l_1, \Lambda', l_2)}{\ev{d}_{\Lambda'}} \approx x^2 \int_{S} \frac{C^{TIT}_{\Lambda}(l_1, \Lambda, l_2)}{\ev{d}_{\Lambda}} \,.
    \end{equation}
    Now up to errors that are suppressed by positive powers of $x$, we can extend the integration domain to $[\Lambda, \infty]^2$. Therefore, to leading order,
    \begin{equation}
         x^2 \int_{S} \frac{C^{TIT}_{\Lambda}(l_1, \Lambda, l_2)}{\ev{d}_{\Lambda}} = x^2 \int_{[\Lambda, \infty]^2} \frac{1}{\ev{d}_{\Lambda}}C^{TIT}_{\Lambda}(l_1, \Lambda, l_2) = \frac{x^2 \mu^I_{\Lambda}(\Lambda)}{\ev{d}_{\Lambda}} \,.
    \end{equation}
    where the last identity follows from the definition of $C^{TIT}_{\Lambda}$.  
    \item The integral over the narrow wedges is much easier to analyze. As explained above, the value of $C^{TIT}_{\Lambda'}(l_1, \Lambda', l_2)$ is only appreciable near $(l_1, l_2) = (\Lambda, \Lambda)$ and exponentially suppressed for $l_1, l_2 \geq \frac{\Lambda}{2x}$. But the area of the wedges restricted to $l_1, l_2 \leq \frac{\Lambda}{2x}$ is only $\mathcal{O}(x)$. Hence relative to the integral in the square region, the wedge integral must be suppressed at least by an additional factor of $x$. We thus arrive at equation \eq{eq:F_Lambda_prod} in the main text:
    \begin{equation}
        F_{\Lambda, \rm prod} \approx \frac{x^2 \mu^I_{\Lambda}(\Lambda)}{\ev{d}_{\Lambda}} \,.
    \end{equation}
\end{enumerate}

\subsection{Bounding \texorpdfstring{$F_{\Lambda, \rm depl}$}{} relative to \texorpdfstring{$F_{\Lambda, \rm prod}$}{}} 

At this point, it is easy to see that $F_{\Lambda, \rm depl}$ is suppressed relative to $F_{\Lambda, \rm prod}$. Recall that since the least unlikely way to create a T-block late in the RG is by combining two T-blocks of length $l^T \approx \Lambda$ with an insulating block with $l^I = \frac{\Lambda}{x}$, we expect $\rho^T_{x\Lambda}(l)$ to decay exponentially for $l> \mathcal{O}(\Lambda)$ (in fact we do not need it to be an exponential. A fast power law is enough). This means that
\begin{equation}
    \frac{r_{\Lambda}(l)}{r_{x\Lambda}(l)} \approx \frac{\ev{d}_{\Lambda} \rho^T_{\Lambda}(l)}{\ev{d}_{x\Lambda} \rho^T_{x\Lambda}(l)} \approx \mathcal{O}(x^{-1} e^{1/x}) \gg 1 \,.
\end{equation}
Since $x \ll 1$, this immediately shows that $F_{\Lambda, \text{prod}}$ should dominate over $F_{\Lambda, \text{depl}}$. Hence we obtain the recursion relation
\begin{equation}
    r_{\Lambda}(l) = r_{\Lambda}(\Lambda/x) \approx \frac{x^2 \mu^I_{\Lambda}(\Lambda)}{\ev{d}} \,.
\end{equation}
Now using Lemma~\ref{lemma2:appendix}, we can turn the above equation into a recursion relation for $y$:
\begin{equation}
    y_{\Lambda/x} = \frac{\Lambda^2}{x^2} r_{\Lambda/x}(\Lambda/x) \approx  \frac{\Lambda^2}{x^2} r_{\Lambda}(\Lambda/x) \approx \frac{\Lambda^2}{x^2} \frac{x^2 \mu^I_{\Lambda}}{\ev{d}} \approx \frac{\Lambda^2}{\ev{d}^2} \ev{d} \mu^I_{\Lambda}(\Lambda) = \left(\frac{y_{\Lambda}}{x}\right)^2 \ev{d} \mu^I_{\Lambda}(\Lambda) \,.
\end{equation}
This recursion is the same as the uncorrelated recursion up to a factor $\ev{d} \mu^I_{\Lambda}(\Lambda)$.

\section{Effects of long range correlated disorder on the symmetric RG}\label{sec:AppendixC}

In this appendix, we expand upon a comment in Sec.~\ref{sec:stability_hyp} about the effect of spatial correlations on earlier iterations of MBL RGs~\cite{Zhang_Zhao_Devakul_Huse_2016, Goremykina_Vasseur_Serbyn_2019}. The RG rules in these models can be summarized as 
\begin{equation}
    l^I_{\rm new} = l^I_{i-1} + \alpha_T l^T_i (=\Lambda) + l^I_{i+1}, 
\end{equation}
\begin{equation}
    l^T_{\rm new} = l^T_{i-1} + \alpha_I l^I_i (=\Lambda) + l^T_{i+1},
\end{equation}
where $\alpha_I, \alpha_T$ are tunable parameters satisfying $\alpha = \alpha_I = \alpha_T^{-1}$. In the case of $\alpha = 1$ we recover the symmetric RG of Ref.~\onlinecite{Zhang_Zhao_Devakul_Huse_2016}.

Now we choose the initial block length distributions so that $\ev{l^T} = W$ and $\ev{l^I} = 1$. For a particular $\alpha$ and spatial correlation, there is a critical $W_c(\alpha)$ where the production of T/I-blocks exchange dominance. The critical exponent $\nu(\alpha)$ of the phase transition at $W = W_c(\alpha)$ can be extracted from a finite size scaling analysis. From here on we focus on the symmetric RG with $\alpha_I = \alpha_T = W_c = 1$.  

When spatial correlations are present, we expect the critical exponent to deviate from the value $\nu = 2.5$ at the uncorrelated fixed point. When we turn on positive correlations, numerics show a clear upward drift in $\nu$ as $c \rightarrow 0$ (remember that $c$ is the decay exponent of the initial correlations and small c corresponds to strong correlations). The numerical values for $\nu$ satisfy the generalized Harris bound, although a precise extrapolation of $\nu(L_{\rm max} \rightarrow \infty)$ is not possible due to finite size effects (for concreteness, note the drifts shown in Fig.~\ref{fig:symmetricRG_nu}). For hyperuniform correlations, the numerically extracted values of $\nu$ stay close to $\nu = 2.5$ for all $\alpha$, consistent with the general argument in Sec.~\ref{sec:stability_hyp} that hyperuniform correlations should be irrelevant for asymptotically additive RG flows. At the special point at $\alpha = 1$, the wandering exponent $w = 0$ coincides with the wandering exponent of quasi-periodic correlations. Interestingly, the symmetric RG with quasi-periodic correlations have been shown to have $\nu = 1$ rather than $\nu = 2.5$ (see Ref.~\onlinecite{Agrawal_Gopalakrishnan_Vasseur_2020_QP}). This means that as far as the symmetric RG is concerned, random disorder with $w = 0$ is qualitatively different from quasiperiodic disorder. Whether or not this qualitative difference exists for the MHI RG is an interesting question to address in the future. 
\begin{figure}
    \centering
    \includegraphics[width = \textwidth/2-9pt]{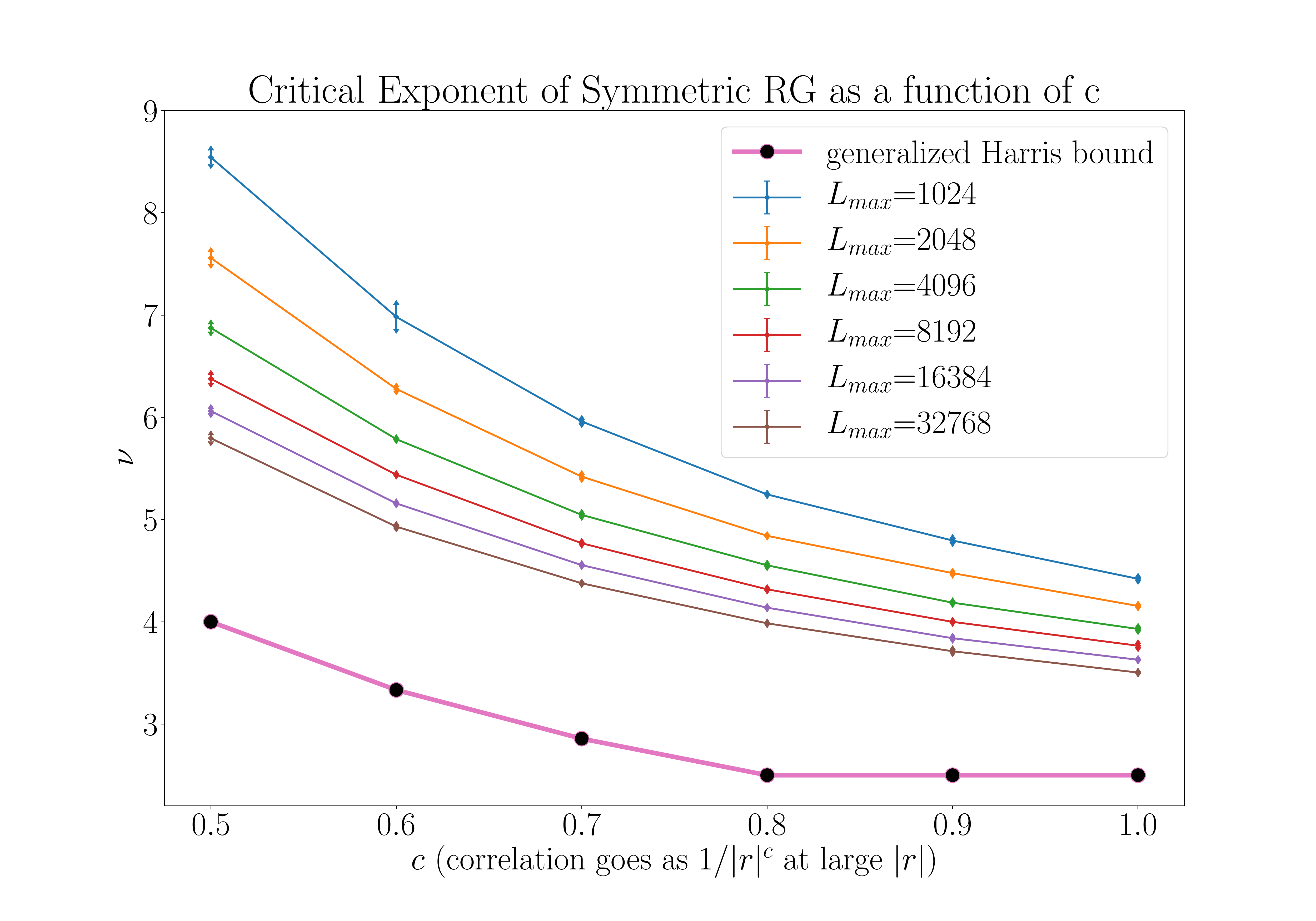}
    \includegraphics[width = \textwidth/2-9pt]{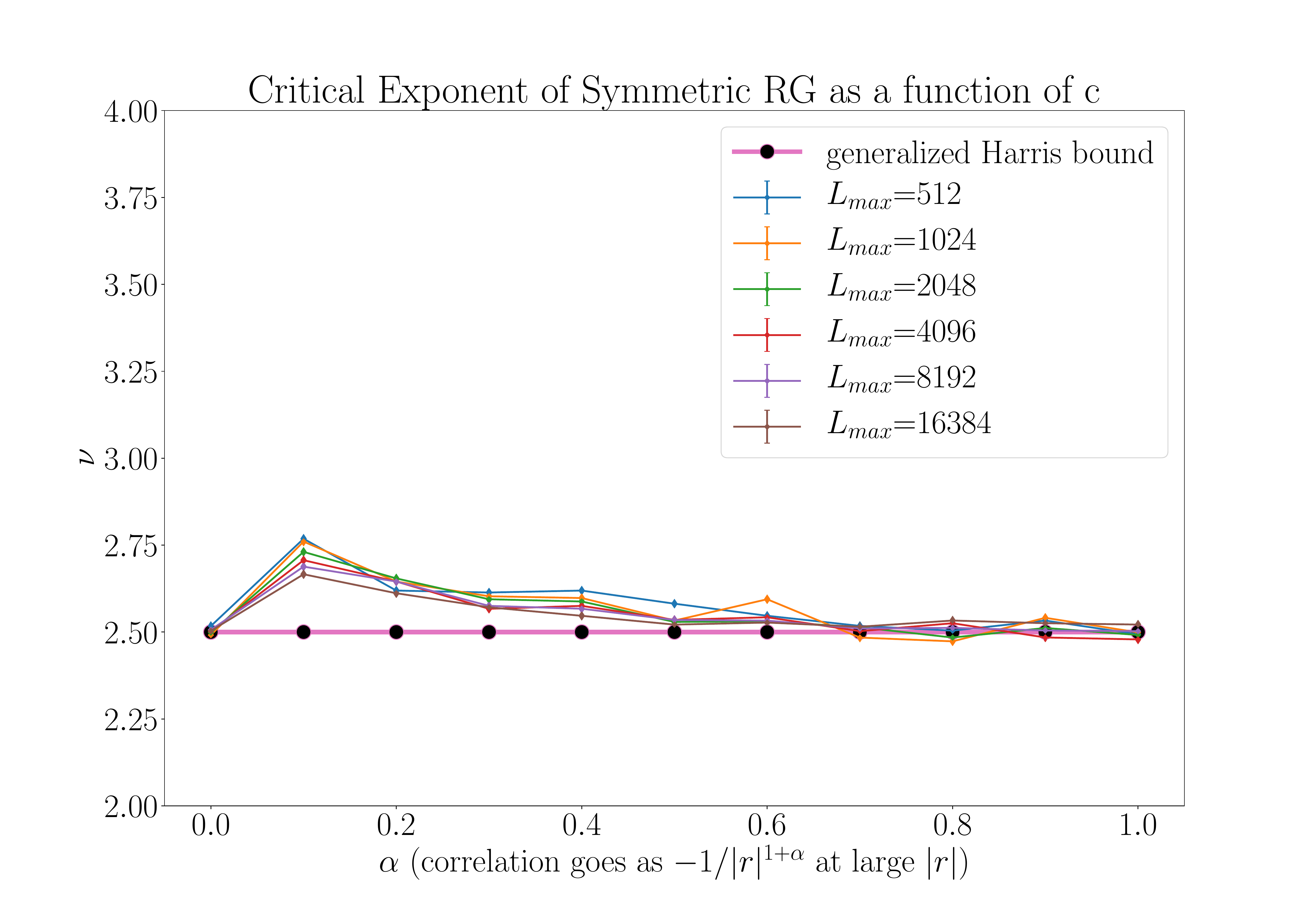}
    \caption{The left/right panels show numerical estimates of $\nu$ for positive and hyperuniform correlations as a function of correlation strength as measured by $c$ and maximum system size $L_{\rm max}$ used in the scaling collapse. For positive correlations, the critical exponent $\nu$ drifts down as $L_{\rm max}$ increases. In the accessible range of $L_{\rm max}$, we are not able to confirm whether the generalized Harris bound is saturated. For hyperuniform correlations, the finite size effects are weaker since coherent fluctuations of large regions are suppressed. For $0.7 \lesssim \alpha$, we confirm that hyperuniform correlations do not change the uncorrelated value of $\nu$, consistent with the general arguments in Sec.~\ref{sec:stability_hyp}. For $\alpha < 0.7$, the numerical $\nu$ slightly exceeds the uncorrelated $\nu = 2.5$. This deviation is at the 10\% level and much smaller than the deviation for positive correlations at any value of $c$. Thus we tentatively attribute it to finite size effects.}
    \label{fig:symmetricRG_nu}
\end{figure}

\end{widetext}

\bibliography{MBLRG_submission.bib}
\bibliographystyle{apsrev4-2}

\end{document}